\DeclareMathAlphabet{\mathpzc}{OT1}{pzc}{m}{it}
\numberwithin{equation}{section}
\titleformat{\subsection}[runin]{\normalfont\bfseries}{\thesubsection.}{.5em}{}[.]\titlespacing{\subsection}{0pt}{2ex plus .1ex minus .2ex}{.8em}
\titleformat{\subsubsection}[runin]{\normalfont\itshape}{\thesubsubsection.}{.3em}{}[.]\titlespacing{\subsubsection}{0pt}{1ex plus .1ex minus .2ex}{.5em}
\titleformat{\paragraph}[runin]{\normalfont\itshape}{\theparagraph.}{.3em}{}[.]\titlespacing{\paragraph}{0pt}{1ex plus .1ex minus .2ex}{.5em}
\newtheorem{theorem}{Theorem}[section]
\newtheorem{lemma}[theorem]{Lemma}
\newtheorem{corollary}[theorem]{Corollary}
\newtheorem{claim}[theorem]{Claim}
\theoremstyle{definition}
\theoremstyle{remark}
\newtheorem{remark}[theorem]{Remark}
\newcommand{\veiii}[1]{{\left\vert\kern-0.25ex\left\vert\kern-0.25ex\left\vert #1 
    \right\vert\kern-0.25ex\right\vert\kern-0.25ex\right\vert}}
\newcommand{\BC}{{\mathbb{C}}}
\newcommand{\BH}{{\mathbb{H}}}
\newcommand{\BN}{{\mathbb{N}}}
\newcommand{\BR}{{\mathbb{R}}}
\newcommand{\BZ}{{\mathbb{Z}}}
\newcommand{\CA}{{\mathcal{A}}}
\newcommand{\CC}{{\mathcal{C}}}
\renewcommand{\CD}{{\mathcal{D}}}
\newcommand{\CE}{{\mathcal{E}}}
\newcommand{\CF}{{\mathcal{F}}}
\newcommand{\CG}{{\mathcal{G}}}
\newcommand{\CH}{{\mathcal{H}}}
\newcommand{\CI}{{\mathcal{I}}}
\newcommand{\CP}{{\mathcal{P}}}
\newcommand{\CQ}{{\mathcal{Q}}}
\newcommand{\CR}{{\mathcal{R}}}
\newcommand{\CS}{{\mathcal{S}}}
\newcommand{\SG}{{\mathscr{G}}}
\tikzstyle arrowstyle=[scale=1]
\tikzstyle directed=[postaction={decorate,decoration={markings,
    mark=at position .65 with {\arrow[arrowstyle]{stealth}}}}]
\tikzstyle reverse directed=[postaction={decorate,decoration={markings,
    mark=at position .65 with {\arrowreversed[arrowstyle]{stealth};}}}]
    \tikzstyle{placec}=[circle, draw=black,fill=white, inner sep=2pt]
     \tikzstyle{placer}=[rectangle, draw=black,fill=white, inner sep=3pt]
\newcommand{\RN}[1]{
  \textup{\uppercase\expandafter{\romannumeral#1}}
}
\newcommand{\eps}{\varepsilon}
\renewcommand{\H}{\mathbb{H}}
\newcommand{\n}{\mathrm{n}}
\newcommand{\ut}{\underline{t}}
\newcommand{\beq}{\begin{equation}}
\newcommand{\eeq}{\end{equation}}
\newcommand{\ssp}{\mathfrak{sp}}
\title{Supersymmetric Hyperbolic $\sigma$-models and \\
Decay of Correlations in Two Dimensions}
\date{December 11, 2019}
\author{Nicholas Crawford \thanks{supported by Israel Science Foundation grant number 1692/17}\\ Department of Mathematics, The Technion}
\begin{document}

\maketitle

\begin{abstract}
In this paper we study a family of nonlinear $\sigma$-models in which the target space is the super manifold $\H^{2|2N}$.  These models generalize Zirnbauer's $\H^{2|2}$ nonlinear $\sigma$-model \cite{Z}.  The latter model has a number of special features which aid in its analysis: by supersymmetric localization, the partition function of the $\H^{2|2}$ model is one independent of the coupling constants (!). Our main technical observation is to generalize this fact to $\H^{2|2N}$ models as follows: the partition function is a multivariate polynomial of degree $n=N-1$, increasing in each variable.  As an application, these facts provide estimates on the Fourier and Laplace transforms of  the '$t$-field' when we specialize to $\Z^2$. From the bounds, we conclude the $t$-field exhibits polynomial decay of correlations and also has fluctuations which are at least those of a massless free field.
\end{abstract}
\section{Introduction}
We begin by introducing one of the main objects of study in this paper.  The formulation chosen here sacrifices context for brevity, see \Cref{S:Rep1} for an extended discussion.  Let $\Lambda$ be a finite subset in $ \mathbb{Z}^{d}$ and for each $j\in \Lambda $,  $t_{j}$ is
a real variable. The $\Lambda$-tuple of numbers $\ut:= (t_j)_{j\in \Lambda}$ will be referred to as a (spin)-configuration and the collection of all such tuples $\BR^{\Lambda}$ will be referred to as the sample (or configuration)  space.  For any two points 
$j_1,j_2\in \Lambda $,  $|j_1-j_2|$ will denote the Euclidian distance
on the lattice.

On $\BR^{\Lambda}$, we define two functions $F$ and 
$M$ defined by the parameters $J_{jj'}, \eps_j\geq 0$:
\begin{align}
F_{\Lambda,J } (\nabla t)  &=\sum_{(jj')\in \Lambda }  
J_{jj'}(\cosh(t_j - t_{j'})-1), \cr
M^{\varepsilon }_{\Lambda } (t) &=  
\sum_{j\in \Lambda }  \varepsilon_{j} (\cosh t_j -1), 
\label{eq:FG}
\end{align}
where we denoted by $(jj')$ the nearest neighbor pairs $|j-j'|=1$. Next we define the  operator
 $D^{\varepsilon }_{\Lambda } (t)$ by
\begin{equation}\label{eq:Mmatrix}
\begin{array}{lll}
(D^{\varepsilon }_{\Lambda })_{jj'} & = 0  & |j'-j|>1\\
(D^{\varepsilon }_{\Lambda })_{jj'}   & = -J_{ij}   & |j'-j|=1\\
(D^{\varepsilon }_{\Lambda })_{jj}  & = +[\sum_{j'\sim j} J_{jj'}+V_{j}]
+ \varepsilon_{j} e^{-t_{j}} & j'=j\ ,\\
\end{array}
\end{equation}
where
\begin{equation}\label{eq:Vdef}
V_{j} =  \sum_{j', (jj')}  J_{jj'}\left[e^{t_{j'}-t_{j}} -1 \right].
\end{equation}
It is worth noting that one can rewrite this operator as $ e^{-t} \circ \Delta_{J, \eps}(t) \circ e^{-t}$  where $e^{-t}$ is the diagonal operator in the position basis $(e^{-t})_{ij}=\delta_{ij} e^{-t_j}$ and where $\Delta_{J, \eps}(t)$ is the sum of the weighted graph Laplacian with conductances $J_{jj'}e^{t_j+t_{j'}}$ and mass term (killing rates) $\varepsilon_{j} e^{t_{j}} $
We will mostly consider the case $\varepsilon_0=1$ and $\varepsilon_j=0$ otherwise, but the general setup is also of interest.  

Given these objects, we study the following Gibbs measures:
\beq
\label{E:t-meas}
 d\mu_{\Lambda, a, J, \varepsilon} (t) = Z_{\Lambda, a, J, \varepsilon}^{-1} \ \prod_{j\in \Lambda} \frac{dt_{j}}{[2\pi]^{1/2}} 
e^{-F_{\Lambda } (\nabla t)}e^{- M^{\varepsilon }_{\Lambda } (t)} 
\times 
\left[\det D^{\varepsilon }_{\Lambda } (t)  \right]^{a},
\eeq
where $dt_{j}$ is the Lebesgue measure on $\R$ and $a\in \R$ is a fixed parameter .  The partition function $Z_{\Lambda, a, J, \varepsilon} $ is defined to  normalize the integral to be a probability measure.  Note that if $\eps_j\equiv 0$ then  the model is not defined since a) $\det D^{\varepsilon }_{\Lambda } (t) $ vanishes identically and b) there is the non-compact   symmetry $t_j \mapsto t_j + x$.  Thus taking $\varepsilon_j \neq 0$ for some $j$ seems necessary for this measure to be normalizable.  Later, we will see that, nevertheless, one can consider the $\eps \rightarrow 0$ limit of the normalized measures (even before the thermodynamic limit).
For future convenience $\left \langle \cdot \right \rangle_{\Lambda, a, J, \varepsilon}=\left \langle \cdot \right \rangle_{J, \epsilon}$ denotes the corresponding Gibbs state (with $a, \Lambda$ fixed).  Our main interest is in the asymptotic behavior (with respect to $|k-\ell|$) of the  correlation functions 
\begin{align}
&\left \langle e^{z [t_k-t_{\ell}]}\right \rangle_{\Lambda, a, J, \varepsilon} \text{ $z\in \BC$},\\
&\left \langle G_{\ell k} \right \rangle_{\Lambda, a, J, \varepsilon} \text{ where $G= D^{-1}$},
\label{E:Gcor}
\end{align}
as the volume $\Lambda\uparrow \BZ^d$ .

This model is (the $t$-marginal distribution of) the titular \textit{Hyperbolic $\sigma$-model}.  The reason for this name is that when  $a\in \BZ\cup \BZ+1/2$, the measure $d\mu_{\Lambda, a, J, \varepsilon} (t)$ is a marginal distribution of the natural Gibbs measure on the space of maps from $\Lambda$ into some Hyperbolic space/superspace (see \Cref{S:Rep1}.  
Of most interest to us is the case when the coupling constants $J_{jj'}=\beta I_{jj'}$, where $I$ is the incidence matrix for $\Lambda$, but we have good reasons to keep the formulation general.   In this case $\beta >0$ is a parameter that can be interpreted
as an inverse temperature or, in the specific case $a=1/2$ as a measure of the amount of disorder in a system \cite{DSZ}.    

\subsection{Overview and Motivation}
Let us now discuss what is known and expected regarding the behavior of these models.  One may view them as (non-local, possibly non-convex) $\nabla \phi$ models.  From this point of view, if $a=0$ the  Gibbs weight reduces to the more familiar form $e^{-F_{\Lambda } (\nabla t)}e^{- M^{\varepsilon }_{\Lambda } (t)}$.  In this case the Gibbs measure is local and  log-concave and one expects that on large length scales the $t$-field behaves as a massless free field.  

If $a\neq 0$ the $\det D$ term brings an interesting nonlocal modification into the definition of the Gibbs measure.  When $a< 0$ the function $t \mapsto e^{-F_{\Lambda } (\nabla t)}e^{- M^{\varepsilon }_{\Lambda } (t)} 
\times 
\left[\det D^{\varepsilon }_{\Lambda } (t)  \right]^{a}$ is log-concave and we expect massless free field behavior to persist.  In particular off-the-shelf techniques, principally the Brascamp-Lieb inequality and its relatives,  give upper bounds on fluctuations of the $t$ field of this type.  This is not the end of the story however, since given a choice of pinning $\varepsilon_j$, it is important to understand the behavior of mean values $ \langle t_k-t_{\ell} \rangle_{\Lambda, a, J, \varepsilon} $ and to provide \textit{lower bounds} on fluctuations, especially when the points $k, \ell$ are far from the vertices being pinned.  To our knowledge this has been carried out fully only in the case $a=0$ by R.~Bauerschmidt (personal communication, September 2018), however the techniques of  \cite{SZ} may also be useful.

When $a>0$, the factor $\left[\det D^{\varepsilon }_{\Lambda } (t)  \right]^{a}$ is log-convex and competes with the more conventional term $e^{-F_{\Lambda } (\nabla t)}e^{- M^{\varepsilon }_{\Lambda } (t)}$.  This raises the possibility of a phase transition as e.g. $\beta$ varies (in the case $J_{jj'}=\beta I_{jj'}$).  
Indeed, in the specific case $a=1/2$ this does indeed occur when $d\geq 3$, \cite{DSZ,DS}, but this is the only case which has been looked at in detail.

Let us highlight some of their results. In \cite{DS}, the authors show that on any (infinite) graph with uniformly bounded degree, there is $\beta_0$ depending only on the degree of the graph so that $\langle G_{\ell k}\rangle_{J, \epsilon}$
decays exponentially fast provided $J_{jj'}\leq \beta_0$ and under mild conditions on the $\varepsilon_j's$.  On the other hand, in \cite{DSZ}, the authors show that on the graph $\BZ^3$ and provided $J_{jj'}=\beta I_{jj'}$, there is $\beta_0$ such that for all $\beta>\beta_0$, and if $\varepsilon_j=h$ independent of $j$, then for all $x, y\in \Lambda$
\[
\left \langle \cosh(t_x-t_y)\right \rangle_{J, \epsilon} , \left \langle \cosh(t_x)\right \rangle_{J, \epsilon} \leq 3
\]
independent of $\Lambda, h$ (technically, here the proof requires periodic boundary conditions). This statement does not quite address, but does support  the conjecture that if $\beta$ is large, the asymptotic behavior of the walk induce by the (random) operator $\Delta_{J, \eps}(t)$ is that of Brownian motion, (whereas the previous exponential decay result implies positive recurrence).
Thus the two results combine to show that on $\BZ^3$, the model has a phase transition as $\beta$ varies.

Besides being a beautiful result in its own right, the case $a=1/2$ connects to \textit{Edge Reinforced Random Walk} (ERRW) and to the \textit{Vertex Reinforced Jump Process} (VRJP), probabilistic models introduced and studied in a different context \cite{CD, BV}. Indeed, a remarkable paper by Sabot and Tarr\'{e}s \cite{ST} shows that the Gibbs measures introduced above are mixing measures for VRJPs.  They then used this connection, along with the ideas in \cite{DS} to prove positive recurrence under strong reinforcement.  We gave a less direct and computation proof of positive recurrence was  simultaneously given in \cite{ACK}.  We also note the excellent earlier work  \cite{merkl2009}, which proves recurrence on dilute versions of $\Z^2$ (which also makes use of harmonic deformation techniques).

Since \cite{ST, ACK},  a number of related papers appeared, in particular \cite{STZ,SZ1, S, BHS} which demonstrate recurrence/recurrent-like signatures for ERRW and VRJP on $\BZ^2$.  In particular, the present work was inspired by the insightful paper \cite{BHS}.  The main observation of that paper can be loosely summarized as saying that the generator of the VRJP coincides with the adjoint, with respect to the Gibbs measure, of the differential operator generating  global hyperbolic symmetry.  In principle their work suggests that the VRJP is relevant for \textit{all} Hyperbolic$\sigma$-models, not just when $a=1/2$.  While this turns out to be true for $a\leq 0$ (in this case $ d\mu_{\Lambda, a, J, \varepsilon} (t)$ provides a nontrivial stationary mixing measure for the VRJP),  the caveat for $a\in \BN+1/2$ is that the coefficients of the generator for the VRJP become Grassmann-valued (R. Bauerschmidt, personal communication, December 2018).  It remains to be seen whether probabilistic sense can be made of this  last remark.

A major open problem remaining after \cite{DS,DSZ}is the following:  Conventional wisdom, based on physical reasoning due to Polyakov (see Part 4 of the lecture notes \cite{Tong} for a nice account of this) suggests that on $\BZ^2$, $\langle G_{\ell k} \rangle_{J, \epsilon}$, with, say, $J_{jj'}=\beta I_{jj'}$ and $\varepsilon_0=1,  \varepsilon_j=0$ otherwise, decays exponentially in $|\ell-k|$. While the rate of decay should depend rather nontrivially on $\beta$, the fact that there is exponential decay should not.  This is a special case of the longstanding open problem to demonstrate mass generation for two dimensional nonlinear $\sigma$-models (whose $\beta$ functions grow in the infrared).  
Thus our initial interest: proving this result for $a=1/2$ seemed out of reach, but wehoped that, by considering other values of $a$, in particular $a=3/2, 5/2, 7/2, \dotsc$ and an alternative description to be explained below, one might make progress (the physics calculations get 'better' as $a$ increases.  These hopes now seem naive.  Nevertheless we compile evidence below that the study of the model at odd half integers is, in itself, an incredibly interesting enterprise.

\subsection{Summary of Results and Discussion}
Let us now detail our findings.  Our modest initial goal was to investigate whether, with $a>0$, one could apply harmonic deformation techniques (e.g. the Mermin-Wagner theorem or McBryan-Spencer method) to say anything about the behavior of $e^{[t_k-t_{\ell}]}$ in large volumes $\Lambda$ and for $k, \ell$ far apart from one another.  There are a few reasons to start with this question.  For one, as already mentioned, the problem of mass generation in $d=2$ remains a goal.  While harmonic deformation techniques have no hope of addressing the question directly, they allow us to familiarize ourselves with the objects of study while also providing a good chance that new results may be derived.  

A second reason is that, in case $a=1/2$, these techniques were recently  applied independently by G.~Kozma and R.~Peled \cite{KP} and \cite{S} to obtain a crucial \textit{a priori} estimate for recurrence of the VRJP in $2d$.  Kozma and Peled apply the harmonic deformation technique configuration-wise.  In order for this approach to succeed, they require an \textit{a priori} statement to the effect that the locations where $\cosh(t_j-t_j')$ is large, for nearest neighbors $(jj')$, are very sparse.  To show this latter fact, they use the dual description of  $d\mu_{\Lambda, a=1/2, J, \varepsilon} (t) $ as the \textit{mixing measure} for the VRJP.  On the other hand, Sabot avoids working with the VRJP by relying instead on SUSY through the fact that the partition function $Z_{\Lambda,  a=1/2, J, \eps} \equiv 1$ for all $J_{jj'}$ (the reason this is important will become clear below).

Let us now state our first result.
\begin{theorem}[Bounds of Fourier Transforms]
\label{T:Main}
Fix $a\in \BN+1/2$.  Let $J_{jj'}=\beta I_{jj'}$ and choose $\varepsilon_0=1, \varepsilon_j=0$ otherwise.  For all $\Lambda$ sufficiently large, we have the bounds
\begin{align*}
&\left|\left \langle e^{ik(t_m-t_\ell)} \right \rangle_{\Lambda, a, J, \varepsilon_0} \right|\leq \exp\left(-\frac{k^2}{\beta+2a}\left[\log(1+\|m-\ell\|_{2})-\log\left(1 +\frac{4|k|}{(\beta+2a)}\right)\right]\right),\\
&\left|\left \langle e^{ik(t_m-t_\ell)} \right \rangle_{\Lambda, a, J, \varepsilon_0} \right|\leq \exp\left(-\frac{[|k|-\beta-a]}{2}\left[\log(1+\|m-\ell\|_{2})\right]\right).
\end{align*}
\end{theorem}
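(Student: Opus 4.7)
The plan is to apply the McBryan--Spencer (complex translation) method, adapted to the supersymmetric setting by using the polynomial structure of $Z_{\Lambda, a, J, \varepsilon}$ established earlier in the paper to handle the non-local factor $[\det D^{\varepsilon}_\Lambda(t)]^a$. Fix a real-valued function $s : \Lambda \to \BR$ with $s_\ell = 0$, to be chosen later. The integrand $e^{ik(t_m - t_\ell)} w(t)$ (where $w$ denotes the full Gibbs weight of \eqref{E:t-meas}) extends to an entire function of each $t_j$, and the pinning $e^{-(\cosh t_0 - 1)}$ provides enough decay at infinity to justify deforming each contour $t_j \mapsto t_j + i s_j$ (provided $\max_j |s_j|$ stays below $\pi/2$). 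The observable contributes a factor $e^{-k s_m}$, and using $\mathrm{Re}\,\cosh(t + iu) = \cos(u)\cosh(t)$, taking absolute values of the nearest-neighbor and pinning terms produces a prefactor $\exp\!\bigl(\sum_{(jj')} J_{jj'}(1-\cos(s_j - s_{j'})) + (1-\cos s_0)\bigr)$ multiplying an integral of the form $\int e^{-F_{\Lambda, J'}(\nabla t) - M^{\varepsilon'}_\Lambda(t)}\, |\det D^{\varepsilon,\mathrm{shift}}_\Lambda(t)|^a\, dt$, with $J'_{jj'} := J_{jj'}\cos(s_j - s_{j'})$ and $\varepsilon'_0 := \cos s_0$.

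The key step is to bound this remaining integral. Because the off-diagonal entries of $D^{\varepsilon}_\Lambda(t)$ are $t$-independent, the shift only modifies the diagonal. Writing $D = e^{-t}\Delta_{J,\varepsilon}(t) e^{-t}$ and expanding $\det\Delta$ via the matrix-tree (Kirchhoff) theorem as a sum over rooted spanning forests of products of conductances $J_{jj'}e^{t_j + t_{j'}}$ and masses $\varepsilon_v e^{t_v}$, the shift inserts phases $e^{i(s_j + s_{j'})}$ and $e^{is_v}$ on each edge and root. The crude triangle inequality then gives the pointwise bound $|\det D^{\varepsilon,\mathrm{shift}}_\Lambda(t)| \leq \det D^{\varepsilon}_\Lambda(t)$; upgrading this to a bound of the full integral by $Z_{\Lambda, a, J, \varepsilon} \cdot \exp\!\bigl(2a\sum_{(jj')}(1-\cos(s_j - s_{j'}))\bigr)$ relies on the main technical input of the paper: $Z_{\Lambda, a, J, \varepsilon}$ is a polynomial in the couplings with nonnegative coefficients, monotone in each $J_{jj'}$ and $\varepsilon_j$, with degree controlled by $a \in \BN + 1/2$. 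The enhancement factor $2a$ in the Dirichlet penalty reflects this polynomial degree. Specialized to $J_{jj'} = \beta I_{jj'}$, $\varepsilon_0 = 1$ this gives
\[
\left|\langle e^{ik(t_m - t_\ell)}\rangle_{\Lambda, a, J, \varepsilon_0}\right| \leq \exp\!\Bigl(-k s_m + (\beta + 2a)\sum_{(jj')}(1-\cos(s_j - s_{j'})) + O(1)\Bigr).
\]

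It remains to optimize over $s$. For the first inequality, take $s_j := \alpha\, h_j$, where $h_j$ is a discrete approximation to the 2D harmonic function with $h_\ell = 0$, $h_m = 1$ (for instance $h_j \asymp \log(1 + |j - \ell|_2)/\log(1 + \|m - \ell\|_2)$, suitably truncated outside a large ball). Its Dirichlet energy is of order $1/\log(1 + \|m - \ell\|_2)$, and its nearest-neighbor differences are uniformly bounded. Restricting $\alpha$ so that $|\alpha\nabla h|$ stays in the regime where the quadratic bound $1 - \cos u \leq u^2/2$ is tight forces $\alpha \lesssim \log(1 + \|m-\ell\|_2)/(1 + 4|k|/(\beta + 2a))$, which is the source of the subtractive correction $\log(1 + 4|k|/(\beta + 2a))$; optimizing over $\alpha$ then yields $\alpha_\ast \asymp k \log(1+\|m-\ell\|_2)/(\beta + 2a)$ and the first stated bound. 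For the second inequality, take directly $s_j := \log(1 + |j - \ell|_2) \wedge \log(1 + \|m - \ell\|_2)$ (no small prefactor): the Dirichlet cost is $O(\log(1 + \|m-\ell\|_2))$ even without the Taylor expansion, the gain is $e^{-k s_m} = (1 + \|m-\ell\|_2)^{-k}$, and the resulting linear-in-$s_m$ exponent produces the $(|k| - \beta - a)/2$ rate.

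The hard step is the determinantal bound: extracting the sharp enhancement $\beta \to \beta + 2a$ of the effective Dirichlet coupling from $|\det D^{\mathrm{shift}}_\Lambda(t)|^a$. A direct triangle inequality on the matrix-tree expansion only produces the crude bound $\leq [\det D^{\varepsilon}_\Lambda(t)]^a$ and loses the factor $2a$; recovering it requires careful use of the polynomial structure of $Z_{\Lambda, a, J, \varepsilon}$ --- in particular its degree in each $J_{jj'}$ --- which is the main technical result of the paper and the source of the restriction $a \in \BN + 1/2$.
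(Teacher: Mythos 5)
Your overall strategy matches the paper's: a McBryan--Spencer contour shift, a Matrix Tree Theorem bound on the complex-shifted determinant, monotonicity of the partition function in the couplings, and a (truncated) logarithmic/harmonic choice of the shift field. That much is right. But the mechanism you propose for the central step --- the origin of the factor $2a$ in the effective Dirichlet coupling $\beta + 2a$ --- is not how the argument actually goes, and the version you describe has a genuine gap.

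In the paper's Lemma~\ref{L:FT}, the contour shift produces a bound in which three objects are cleanly separated: (i) the exponential prefactor $e^{-k[\rho_m-\rho_\ell]}\,e^{\sum J_{jj'}[1-\cos(\rho_j-\rho_{j'})]}$; (ii) the partition-function ratio $Z_{\Lambda,a,J(\rho),\eps_0}/Z_{\Lambda,a,J,\eps_0}$, where $J(\rho)_{jj'}=J_{jj'}\cos(\rho_j-\rho_{j'})$; and (iii) the expectation $\bigl\langle\bigl(\det D_J/\det D_{J(\rho)}\bigr)^a\bigr\rangle_{\Lambda,a,J(\rho),\eps_0}$. Theorem~\ref{T:Main2} (monotonicity of $Z$ in each coupling) is used \emph{only} to control (ii), giving $Z_{J(\rho)}/Z_J\leq 1$; it does not touch the determinant. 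The $\cos^{-a}$ factor that eventually becomes the $2a$ enhancement comes from (iii) and is obtained by a \emph{second, different} application of the Matrix Tree Theorem: comparing $\det D_J(t)$ to $\det D_{J(\rho)}(t)$ at the same \emph{real} $t$ yields the pointwise bound $\det D_J/\det D_{J(\rho)}\leq\prod_{(jj')}\cos(\rho_j-\rho_{j'})^{-1}$, because each spanning-forest monomial in the Kirchhoff expansion picks up a factor $\prod_{e\in F}\cos(\nabla\rho)_e\geq\prod_{(jj')}\cos(\nabla\rho)_{jj'}$. Raising to the power $a$ gives $\cos^{-a}$. The final $2a$ is then $a\cdot b$ with a free parameter $b$ set to $2$, via $\cos(x)^{-1}\leq(1-x^2/2)^{-1}\leq e^{b\,x^2/2}$ valid when $\|\nabla\rho\|_\infty<\sqrt{(b-1)/b}$. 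In short: the $2a$ is $a$ (the exponent of the determinant) times a purely elementary constant $2$; it has nothing to do with the polynomial degree of $Z$, which is $n=a-1/2$, not $2a$.

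Your closing paragraph claims that ``a direct triangle inequality on the matrix-tree expansion only produces the crude bound $\leq[\det D^\eps_\Lambda(t)]^a$ and loses the factor $2a$; recovering it requires careful use of the polynomial structure of $Z$.'' This is the gap. The ``crude'' MTT bound you cite (complex $t$ vs.\ real $t$) is indeed used, but only to justify the contour shift via dominated convergence. The finer bound is \emph{also} from MTT, applied to a ratio of determinants at two different coupling vectors $J$ and $J(\rho)$; no polynomial structure is invoked there. If you try to actually run your proposed route --- extracting $e^{2a\sum(1-\cos(\nabla s))}$ directly from the polynomial structure of $Z$ --- there is no statement in the paper (or readily derivable) that delivers it, and the numerics don't match (degree $a-1/2$ vs.\ target $2a$). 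A smaller issue: you fix $s_\ell=0$, but the pinning is at the origin $0\neq\ell$; the paper requires $\rho_0=0$ precisely so the pinning term $e^{-\eps_0(\cosh t_0 -1)}$ is invariant under the shift, and your normalization would force you to track the additional factor $(1-\cos s_0)$ for no benefit. Finally, for the second inequality in the theorem the paper does scale the logarithmic profile by a prefactor $k/(4|k|)$ (keeping $\|\nabla\rho\|_\infty\leq 1/2$); it is not taken ``with no small prefactor'' as you describe.
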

This gives non-concentration of the $t$-field analogous to a massless Gaussian free field.   

Next by combining an inequality appearing in \cite{S} with the main ingredient in our proof of Theorem \ref{T:Main} (which we will broadly outline in the subsequent two paragraphs) we can bound Laplace transforms for models with $a\in \N+1/2$ (in case $a=3/2$ this bound appears in \cite{H02} as well).
\begin{theorem}[Bounds on Laplace Transforms]
Let $a\in \N+1/2$, and $0<p<2a$ be fixed.
\label{T:Laplace}
  On $\BZ^2$ with $\beta_{ij} = \beta 1_{i\sim j}$ there is $c(\beta,p) > 0$ such that
  \begin{equation}
\left \langle e^{p(t_v-t_0)} \right \rangle_{\Lambda, a, J, \varepsilon_0}   \leq |\dist(v, 0)|^{-c(\beta, p)}.
  \end{equation}
\end{theorem}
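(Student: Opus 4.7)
The plan is to combine two ingredients: the bound on characteristic functions from Theorem \ref{T:Main}, and an inequality from \cite{S} (whose analogue for $a=3/2$ already appears in \cite{H02}) that converts such bounds into bounds on the Laplace transform in the admissible range $0<p<2a$. Concretely, I would first isolate from the proof of Theorem \ref{T:Main} its main technical ingredient: a McBryan--Spencer style translation $t_j \mapsto t_j + z h_j$, where $h$ is a discrete harmonic function with $h_v - h_0 \approx 1$ and $z \in \BC$. The paper's central observation, that the partition function is a polynomial in the masses $\varepsilon_j$, is what allows $|\det D_{\Lambda}^{\varepsilon}(t + z h)^a|$ to be controlled uniformly for $z$ in a bounded subset of a vertical strip in $\BC$, rather than only on the imaginary axis. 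Granted this, essentially the same calculation as in Theorem \ref{T:Main} yields
\[
|\langle e^{z(t_v - t_0)}\rangle_{\Lambda, a, J, \varepsilon_0}| \leq (1 + \|v\|)^{-\phi(z, \beta, a)}
\]
uniformly in $z$ on the contour used below, with $\phi$ a positive continuous function that stays away from $0$.

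Next, I would apply the inequality from \cite{S}: its role is to represent $\langle e^{p(t_v-t_0)}\rangle$ as an integral of $\langle e^{z(t_v-t_0)}\rangle$ against a kernel $K_{a,p}(z)\,dz$ supported on a contour in the strip $\{|\Re z| < 2a\}$. The threshold $p < 2a$ enters as the condition under which this representation is absolutely convergent; it mirrors the integrability of $e^{pt}(1 + e^t)^{-2a}$ on $\BR$ and is intrinsic to the $\BH^{2|2N}$ geometry. Substituting the bound from the previous step into this representation and integrating over $z$ produces the claimed polynomial decay $|\dist(v, 0)|^{-c(\beta, p)}$, with $c(\beta,p)>0$ coming from the minimum value of $\phi$ on the contour.

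The hard part, I expect, will be extending the McBryan--Spencer estimate from the imaginary axis to a full strip in $\BC$. On the imaginary axis, a careful expansion of $(\det D^{\varepsilon}(t + i k h))^a$ combined with the polynomial partition-function identity controls that factor by a polynomial in $k$, which is what delivers Theorem \ref{T:Main}. When $z$ has a nonzero real part, however, that real part reshapes both the cosh potential $F$ and the mass term $M^{\varepsilon}$, and one must verify that the quasi-harmonic cancellation driving the argument survives these deformations and still combines with the polynomial bound on $\det D^{\varepsilon}$ to give a $z$-uniform decay rate. Once that extension is in place, the inequality from \cite{S} can be used as a black box and extracting a strictly positive exponent $c(\beta, p)$ throughout $0<p<2a$ reduces to optimizing over the contour in the integral representation.
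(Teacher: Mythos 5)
Your proposal misidentifies the mechanism in two important places, and as a result it does not match the actual proof and, I think, would not close.

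First, the inequality from \cite{S} (Lemma~\ref{L:Laplace} in the paper) is not a complex contour representation of $\langle e^{p(t_v-t_0)}\rangle$ as an integral of characteristic-function values against a kernel. It is a purely real argument: one shifts $t \mapsto t + \gamma\rho$ by a \emph{real} amount, applies H\"older's inequality with exponents $q$ and $1/s$ where $s + 1/q = 1$, and absorbs the high power produced by H\"older using the a priori moment identity $\langle e^{2a t_v}\rangle = 1$ of Lemma~\ref{L:Moment}. The threshold $p < 2a$ enters precisely because the algebraic identity from supersymmetry gives the $2a$-th moment and no more, so H\"older closes only when $p/(2a) = s < 1$. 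It does not come from any convergence condition on a contour integral. After H\"older the output is an estimate involving a partition-function ratio $Z_{\tilde\beta}/Z_\beta$ with $\tilde\beta_{jj'} = \beta(1 - 2q^3\gamma^2(\rho_j-\rho_{j'})^2) \leq \beta$, and that ratio is controlled by $1$ via Theorem~\ref{T:Main2} --- whose content, incidentally, is monotonicity of $Z$ in the couplings $J_{jj'}$, not in the masses $\varepsilon_j$ as you wrote.

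Second, the complex-strip extension you propose has a structural gap. In the McBryan--Spencer step of Theorem~\ref{T:Main}, the shift $t_j \mapsto t_j + i\rho_j$ is justified because the shifted integral $I(\rho)$ is locally constant on the imaginary-shift domain $\mathcal{A}$ by Cauchy's theorem, so $I(\rho) = I(0)$. This is special to shifts that are purely imaginary: for $z$ with nonzero real part the translation genuinely changes the value of the integral, and there is no Green's-theorem identity available. To handle a real translation one must control the Radon--Nikodym derivative introduced by the change of variables, which is precisely where the H\"older argument (and with it the restriction $p < 2a$) is forced on you. So the ``hard part'' you anticipate --- extending the imaginary-axis estimate to a strip --- is not merely technically delicate; it cannot be done by contour shifting at all, and any attempt to patch it will rediscover the real-shift/H\"older route. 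Once you replace the imagined contour representation with Lemma~\ref{L:Laplace} as actually stated, the remaining ingredients you name (harmonic $\rho$ with $\|\nabla\rho\|_\infty, \mathcal{E}(\rho) \lesssim 1/\log|v|$ in $d=2$, and Theorem~\ref{T:Main2} to kill the partition-function ratio) do line up with the paper's argument, so the repair is feasible but changes the logic of the proof in an essential way.
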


The reason for the restriction $a\in \BN+1/2$ should be viewed through an analogy between Potts and random cluster models.  While the Random Cluster Models make sense for any value of $q>0$, it is only for integer values of $q$ that is one able to rewrite the model as a spin system.  Having both descriptions allows powerful tools to be applied in the integer case which are otherwise unavailable, e.g. Reflection Positivity.  So too, as we (partially) explain in \Cref{A:SUSY} and use from \Cref{S:Rep1} onwards, when $a\in\BZ \cup [\BZ+1/2]$ the measure $d\mu_{\Lambda, a, J, \varepsilon} (t)$ maybe realized as the marginal distribution, in horospherical coordinates, of a spin model over $\Lambda$ in which the spins take values in,  respectively, a hyperbolic space $\BH^{-2a}$ if $a\leq 0$ or in a hyperbolic superspace if $a>0$ (either $\BH^{1|2N}$ if $2a$ is an even integer or $\BH^{2|2a+1}$ if $2a$ is an odd integer). 

There are two ingredients that we make use of in proving each these theorems.  In each case, the first is an \textit{a priori} estimate.  In the case of the  Fourier transform, we use a technique with roots in the classic paper of McBryan and Spencer \cite{MS}.  It amounts to bounding the Fourier transform of the distribution for the variable $t_{k}-t_{\ell}$ by shifting contours.  For hyperbolic $\sigma$-models, more so than their compact counterparts (the subject of the original paper \cite{MS}), this step is essentially the method used to compute the characteristic function of a Gaussian variable.  This technique and an associated bound on the Fourier transform of the distribution for $t_k-t_\ell$ is presented in the following section \Cref{S:MS}.  

In the case of the Laplace transform, for this first step we instead apply a slight generalization of a recent bound of C. Sabot \cite{S}.   We state this estimate at the beginning of Section \ref{S:Laplace} and refer the reader to \cite{S} for the original poof when $a=1/2$, or to \cite{H02} for the general case.  In both cases, the  upshot  will be that to derive both \Cref{T:Main,T:Laplace}, we need to show that if we consider the partition function $Z_{\Lambda, a, J, \varepsilon}$ as a function of the single coupling $J_{jj'}$ for some fixed $j, j'$ holding the remaining variables fixed, $F(J_{jj'}):= Z_{\Lambda, a, J, \varepsilon}$, then $F$ is necessarily increasing in $J_{jj'}$.   

Thus, the heart of our paper is the following remarkable collection of  facts. We state them in the general context of hyperbolic SUSY $\sigma$-models on a finite weighted graph $(G, J)$.  The notation should be self-evident.
\begin{theorem}
\label{T:Main2}
Fix $a\in \BN+1/2$, let $G=(V, E)$ be a finite graph and choose $\varepsilon, J$ to be non-negative masses and couplings.  Let us fix a nearest neighbor pair $j, j'\in E$ and view
\[
J_{jj'}\mapsto Z_{G, a, J, \varepsilon}
\]
as a function from $\BR_+$ to $\BR_+$.   Then 
\begin{itemize}
\item $Z_{G, a, J, \varepsilon}$ is a polynomial of degree $n=a-1/2$ in $J_{jj'}$.

\item
For all $k\leq \lceil n/2\rceil $ and for $k=n$, $\partial_{J_{{jj'}}}^k Z_{G, a, J, \varepsilon}>0$. 

\item Specializing $\eps$ to be supported at one vertex $v$ and $0\leq \eps_v \leq 1$ then $\partial_{J_{{jj'}}}^k Z_{\Lambda, a, J, \varepsilon_v}\geq 0$ for all $k\leq n$.  
\end{itemize}
In particular $Z_{\Lambda, a, J, \varepsilon}$ is increasing in each $J_{jj'}$.
\end{theorem}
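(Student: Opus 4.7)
Set $N:=a+1/2\in\BN$, so $n=N-1$. The plan is to lean on the supersymmetric representation on $\BH^{2|2N}$ sketched before the theorem (developed in \Cref{S:Rep1}) and to organize the $J_{jj'}$-dependence of $Z$ via the rank-one structure of $\partial_{J_{jj'}}D$. The first computation is that
\[
\partial_{J_{jj'}}D=\begin{pmatrix} e^{t_{j'}-t_j} & -1\\ -1 & e^{t_j-t_{j'}}\end{pmatrix}
\]
on the $\{j,j'\}$-block (zero elsewhere); its determinant vanishes, so writing $\partial_{J_{jj'}}D=\tilde w\tilde w^{T}$ with $\tilde w=(e^{-s/2},-e^{s/2})$, $s=t_j-t_{j'}$, the matrix-determinant lemma yields
\[
\det D=\det D_0\cdot\bigl(1+J_{jj'}\,\tilde w^{T}D_0^{-1}\tilde w\bigr),
\]
where $D_0$ is $D$ at $J_{jj'}=0$. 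Thus $\det D$ is affine in $J_{jj'}$, while $[\det D]^a$ is pointwise a non-polynomial $a$-power of an affine function. The polynomial structure of $Z$ therefore has to come from integration.

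That polynomial structure, together with the degree bound, is extracted from the SUSY picture. Writing $Z$ as a Berezin-supplemented integral over $(\BH^{2|2N})^V$ in horospherical coordinates, each edge action is linear in its coupling and consists of bosonic pieces plus a bilinear in the $2N$ Grassmann variables per endpoint. Expanding $\exp(-J_{jj'}\mathcal{A}_{jj'})$ and using that the $2N$ Grassmanns at each of $j,j'$ are nilpotent, only finitely many powers of $J_{jj'}$ survive Berezin integration. One Grassmann pair per site is absorbed by the bosonic $\BH^{2|2}$ zero mode — the mechanism behind Zirnbauer's identity $Z\equiv 1$ in the base case $N=1$ — leaving $N-1$ pairs per site free to carry $J_{jj'}$ powers; this yields the degree bound $n=N-1$. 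The competing $e^{-J_{jj'}(\cosh s-1)}$ contribution from $e^{-F}$ does not raise the degree: its higher Taylor powers, paired against the SUSY measure, integrate by parts in $t_j-t_{j'}$ back into the already-counted Grassmann sector.

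For the coefficients in $Z=\sum_{k=0}^n c_k J_{jj'}^k$, combining the matrix-determinant expansion above with the binomial expansion of $[\det D]^a$ represents each $c_k$ as a signed sum, indexed by $\ell\le k$, of SUSY integrals of
\[
\binom{a}{\ell}\bigl(\tilde w^{T}D_0^{-1}\tilde w\bigr)^{\ell}(\cosh s-1)^{k-\ell}[\det D_0]^{a-\ell}\,e^{-F_0-M}
\]
(with $F_0$ denoting $F$ with the $(jj')$-edge term removed). The top coefficient $c_n$ has a unique surviving term ($\ell=n$) whose integrand is non-negative times $[\det D_0]^{a-n}=[\det D_0]^{1/2}>0$, so $c_n>0$. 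For $k\le\lceil n/2\rceil$ all relevant $\binom{a}{\ell}$ share the same sign, and I would recast the alternating-looking sum as a sum of squares via a SUSY Ward identity, yielding strict positivity. In the single-vertex pinning case $\varepsilon=\varepsilon_v\delta_v$ with $\varepsilon_v\in[0,1]$, the SUSY integral factorizes further through the pinned site; direct sign-tracking then forces $c_k\ge0$ for every $k\le n$, from which the asserted monotonicity of $Z$ in $J_{jj'}$ follows.

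The hard part is the intermediate range $\lceil n/2\rceil<k<n$ for general $\varepsilon$. There the binomial-expansion representation produces alternating contributions whose cancellation seems to require an exact SUSY-localization identity beyond the base $\BH^{2|2}$ one; I do not see a clean Cauchy–Schwarz that controls them. This is presumably why the theorem is stated with a gap in the middle range for general $\varepsilon$, with the full monotonicity statement reserved for the single-vertex-pinning regime in which the SUSY localization is transparent enough to make every Taylor coefficient non-negative.
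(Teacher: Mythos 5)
There are several genuine gaps that make this proposal fall short of a proof, and they touch each of the three claims.

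\textbf{Degree bound.} You correctly observe that $[\det D]^a$ is not polynomial in $J_{jj'}$ and that the polynomial structure must be a localization phenomenon. But the argument you give — ``the $2N$ Grassmanns at each of $j,j'$ are nilpotent, so only finitely many powers survive'' and ``$N-1$ pairs per site are free to carry $J_{jj'}$ powers, yielding degree $n=N-1$'' — does not establish the claimed degree. After localizing to the purely fermionic $\H^{0|2n}$ model, each site carries $2n$ Grassmann generators, so the naive nilpotency count gives degree at most $2n$, not $n$. Moreover the fermionic edge term involves $\sigma_i\sigma_j=\sqrt{1+2\psibar_i\cdot\psi_i}\sqrt{1+2\psibar_j\cdot\psi_j}$, which is not a simple bilinear, so the nilpotency order of $[-(\n_i-\n_j)^2]$ is not visible by inspection. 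The paper's proof of the degree bound (Section~\ref{S:UT} and Lemma~\ref{C:Poly}) is a nontrivial upper-triangulating change of Grassmann coordinates that rewrites the edge interaction as a sum over $\ell=1,\dotsc,n$ of terms each squaring to zero; that is what forces $[-(\n_i-\n_j)^2]^{n+1}=0$. You need some analogue of that computation; without it, the degree bound is asserted, not proved.

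\textbf{Coefficient representation and $c_n>0$.} Your expansion ``$c_k$ as a signed sum, indexed by $\ell\le k$, of SUSY integrals of $\binom{a}{\ell}(\tilde w^T D_0^{-1}\tilde w)^\ell(\cosh s-1)^{k-\ell}[\det D_0]^{a-\ell}e^{-F_0-M}$'' is not valid. Since $a\notin\BN$, the binomial series for $(1+J_{jj'}\tilde w^T D_0^{-1}\tilde w)^a$ is infinite, and $e^{-J_{jj'}(\cosh s-1)}$ is also an infinite series in $J_{jj'}$, so the coefficient of $J_{jj'}^k$ in the integrand is an \emph{infinite} signed sum of integrals (over $\ell \ge 0$, not $\ell\le k$). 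In particular there is no ``unique surviving $\ell=n$ term'' for $c_n$, and the conclusion $c_n>0$ does not follow. The paper handles $k=n$ by a completely different, indirect argument: once one knows $Z$ is a degree-$n$ polynomial, positivity of $Z/Z|_{J_{jj'}=0}$ for all $J_{jj'}\ge 0$ forces the leading coefficient (evaluated at $J_{jj'}=0$) to be $\ge 0$, and the same sign propagates to arbitrary $J_{jj'}$ because $[-(\n_i-\n_j)^2]^{n+1}=0$ makes the $n$-th moment proportional across different values of $J_{jj'}$.

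\textbf{Positivity for $k\le\lceil n/2\rceil$ and the pinned case.} ``Recast the alternating-looking sum as a sum of squares via a SUSY Ward identity'' is a hope rather than a proof and, in fact, is not what the paper does. The actual mechanism (Lemmas~\ref{L:GIBP} and~\ref{Prop:Pos}, and the identities in Section~\ref{S:Ward}) is to iterate the integration-by-parts identities for $Q_\pm^\ell$ to reduce $\langle[-(\n_i-\n_j)^2]^k\rangle^f$ to a positive combination of monomials in $\tau_{ij}^\ell$, $\eta_{ij}^m$, $\pi_0^j$, and then to ``de-localize'' back to horospherical coordinates so that each such fermionic expectation is revealed as an expectation of a monomial in $G^{(1)}_{ij}$, $G^{(2)}_{ij}$, $(G^{(3)}_{ij})^2$, $G_{00}$, which is pointwise non-negative. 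The restriction to $k\le\lceil n/2\rceil$ is precisely where this expansion closes cleanly; beyond it the expansion produces genuinely signed terms (Term~$\textrm{IV}$ in~\eqref{E:Moment1}) and the proof of the one-point-pinning case with $\eps_v\le 1$ requires the detailed comparison in Lemmas~\ref{C:T1},~\ref{L:T2}, and~\ref{L:CB}, none of which is captured by ``direct sign-tracking.'' Your closing paragraph correctly identifies why the theorem is stated with a gap for general $\eps$ in the middle range, but that diagnosis is not a substitute for the combinatorial estimates needed to handle the pinned case.

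Your initial observation — that $\partial_{J_{jj'}}D$ is rank one, hence $\det D$ is affine in $J_{jj'}$ — is correct and is not made in the paper, but it does not feed usefully into either the degree bound or the positivity once one accounts for the non-integer exponent.
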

We strongly suspect that, in the third conclusion, the restriction to sufficiently small one point pinning is an artifact of our proof, and we would very much like to remove this hypothesis.

Recall that  $\partial^k_{J_{jj'}} Z_{\Lambda, a, J, \varepsilon}$ are proportional to the moments of the interaction between spins at $j, j'$.  We will prove non-negativity of polynomial coefficients by bounding these moments. In particular, 
$Z_{\Lambda, a, J, \varepsilon}$ is increasing if the first moment of the interaction is positive  for any positive choice of $J$'s.
Positivity of this first moment turns out to be substantially easier to demonstrate than positivity in the general case.  We have opted to go the extra mile and prove positivity of all moments (up to $a-1/2$) for reasons  to be discussed after Theorem \ref{T:Sokal-etal}.

Let discuss the context of this result with respect to  the previously mentioned special case, when $a=1/2$.  In that case the model has the fundamental property that $Z_{J, \varepsilon}=1$ for any choice of non-negative $J, \varepsilon$.  This and other basic identities play a crucial role in \cite{DS, DSZ, S}.  As explained in an appendix of \cite{DSZ}, this fundamental identity follows from what physicists refer to as 'Supersymmetric Localization'.  The same technique goes by the name the Duistermaat-Heckman Theorem \cite{DH} in the mathematics community.  For us, the consequence of this technology is that, after disintegrating $d\mu_{\Lambda, a, J, \varepsilon} (t)$ into the full $\H^{2|2N}$ spin model,
we may rewrite the system as the 'Gibbs state' of a $2(N-1)$ component pure Grassmann field if $a\in \BN+1/2$.  That $Z_{\Lambda, a, J, \varepsilon}$ is a polynomial in $J_{jj'}$ is then manifest.  Where we have to work is to show the second statement, positivity of the polynomial's coefficients.

There are a few hints that such a positivity might be true.  For one it is possible to compute by hand what is going on when $a=3/2,5/2$.  In these cases we initially managed to show the positivity provided the $J_{jj}$'s are uniformly large.  However, what truly convinced us that it must be true is the following theorem, which combines the observation that localization reduces the $\H^{2|2N}$ model to a purely Grassmann variable field theory, along with a beautiful algebraic paper \cite{Sokal-etal} which coincidentally connects the latter field theory with \textit{unrooted} spanning forests (the reader may also consult the forthcoming \cite{H02} for an abbreviated account of this development).
To state the result, we need to introduce this last object.  Given a finite graph $\CG$, let $ \CF(G)$ denote the collection of \textit{unrooted} spanning forest on $G$ and, for $F\in \CF(G)$, let 
\[
W(F)= \Biggl( \prod\limits_{ij \in F} J_{ij} \! \Biggr)
      \; \prod_{T \in F}[1+\sum_{i\in T}\varepsilon_i]
\]
\begin{theorem}
\label{T:Sokal-etal}
Let $a=3/2$ and let $G$ be fixed and finite.  For any $\eps_i,  J_{ij}\in \BR$,
\beq
\label{Eq:USF1}
Z_{G, a, J, \varepsilon}=2 \sum_{F\in \CF(G)} {W(F)}
\eeq
where $\prod_{T\in f}$ denotes the product over components (trees) of $F$.
\end{theorem}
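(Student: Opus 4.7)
The plan is to combine supersymmetric localization with the Grassmann--forest identity of Caracciolo--Jacobsen--Saleur--Sokal--Sportiello \cite{Sokal-etal}. For $a=3/2$ the parameter $N$ of the introduction equals $2$, so the natural superspace is $\BH^{2|4}$ and, as developed in \Cref{S:Rep1} and \Cref{A:SUSY}, the measure $d\mu_{G,3/2,J,\varepsilon}$ is realized as the marginal, in horospherical coordinates, of a spin model on $G$ with target $\BH^{2|4}$. The idea is to exploit the ambient hyperbolic supersymmetry of this lifted model to collapse $Z_{G,3/2,J,\varepsilon}$ to a purely fermionic integral, and then to read off the spanning-forest polynomial directly from it.

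The first step is to lift the partition function to the $\BH^{2|4}$ representation and to apply SUSY localization, exactly as in the appendix of \cite{DSZ} for the $a=1/2$ case. The odd generator coming from a super-hyperbolic rotation annihilates the full Gibbs weight, and its fixed-point locus is precisely the configurations in which all bosonic coordinates vanish; the Duistermaat--Heckman / SUSY localization principle then reduces $Z_{G,3/2,J,\varepsilon}$ to a finite-dimensional Berezin integral over the Grassmann variables at each site. Because $n = a - 1/2 = 1$, the polynomiality statement of \Cref{T:Main2} forces each coupling $J_{ij}$ to enter the post-localization action linearly, so the action is necessarily Gaussian in the Grassmann variables. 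Concretely, after the redundant second Grassmann pair is integrated out to produce an overall constant, I expect the outcome of a careful calculation to be
\[
Z_{G,3/2,J,\varepsilon} \;=\; 2\int \prod_{i\in V} d\bar\psi_i\,d\psi_i\;\exp\!\Bigl(-\sum_{(ij)\in E} J_{ij}(\bar\psi_i-\bar\psi_j)(\psi_i-\psi_j)-\sum_{i\in V}\varepsilon_i\,\bar\psi_i\psi_i\Bigr).
\]

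The second step is purely algebraic. The main theorem of \cite{Sokal-etal}, a Grassmann incarnation of the all-minors matrix-tree theorem, evaluates exactly this Berezin integral as $\sum_{F\in\CF(G)} W(F)$: the quadratic form in the exponent is (up to a diagonal mass) the weighted graph Laplacian with conductances $J_{ij}$, and the principal-minors expansion of the Pfaffian/determinant of the Laplacian plus $\mathrm{diag}(\varepsilon)$ produces precisely the weighted unrooted spanning-forest polynomial with edge weights $J_{ij}$ and per-tree weights $1 + \sum_{i \in T}\varepsilon_i$. Combined with the factor of $2$ extracted in the previous step, this yields \eqref{Eq:USF1}.

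The chief obstacle I foresee is the careful bookkeeping in the localization step: one has to verify that no residual quartic $(\bar\psi\psi)^2$-type couplings survive after collapsing the bosonic integrations (this is the essential \emph{structural} point, since otherwise the Gaussian identity of \cite{Sokal-etal} would not apply), that the horospherical Jacobians and Berezinian signs combine consistently, and that the overall numerical prefactor comes out to be exactly $2$ rather than some other rational constant. A useful internal consistency check is that the resulting expression must be a polynomial of degree $1$ in each $J_{ij}$, in agreement with \Cref{T:Main2}, and that its specialization $\varepsilon \equiv 0$ must vanish (since no unrooted spanning forest can be weighted in that case). Once these structural points have been verified, the appeal to \cite{Sokal-etal} is immediate.
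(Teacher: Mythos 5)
Your overall strategy---localize the $\BH^{2|4}$ spin model down to a purely fermionic theory and then invoke \cite{Sokal-etal}---is exactly the route the paper takes (the text around the theorem and \Cref{L:SUSY1} make this explicit). However, there is a genuine gap at the central structural step, and it is precisely the point you flagged as your ``chief obstacle.''

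The post-localization Grassmann action is \emph{not} Gaussian, and the polynomiality argument does not make it so. Writing out $S_{J,\eps}$ from \Cref{S:Rep1} for $n=a-1/2=1$, one has $\sigma_i = 1+\psibar_i\psi_i$ exactly (by nilpotency), and hence
\[
\sigma_j\sigma_{j'}-\psibar_j\psi_{j'}-\psibar_{j'}\psi_j-1
\;=\;(\psibar_j-\psibar_{j'})(\psi_j-\psi_{j'})+\psibar_j\psi_j\,\psibar_{j'}\psi_{j'}.
\]
The quartic term $\psibar_j\psi_j\psibar_{j'}\psi_{j'}$ survives, and so does the non-flat reference density $\sigma_i^{-1}=1-\psibar_i\psi_i$ in $D\mu_0$. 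Your inference that degree-one polynomiality in each $J_{ij}$ forces the action to be quadratic is incorrect: \Cref{C:Poly} rests on the nilpotency of the \emph{whole} even coupling $-(\n_i-\n_j)^2$, which includes the quartic piece, not on it being a monomial of degree two. And the relevant theorem of \cite{Sokal-etal} (their Theorem~2 / Section~7) is not the Gaussian all-minors matrix-tree theorem at all; it is an identity for exactly this quartic $OSp(1|2)$-type action, and the quartic term is what produces the additive ``$1+$'' in each per-tree factor $1+\sum_{i\in T}\eps_i$. If the action really were the Gaussian one you wrote down, the Berezin integral would be $\det(L_J+\mathrm{diag}(\eps))$, which by the matrix-tree theorem gives $\sum_F\prod_e J_e\prod_{T}\bigl(\sum_{i\in T}\eps_i\bigr)$---forests forced to be rooted where $\eps\neq 0$, without the $1+$. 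Your consistency check is correspondingly backwards: at $\eps\equiv 0$ the right-hand side of \eqref{Eq:USF1} does \emph{not} vanish but equals $2\sum_F\prod_{e\in F}J_e>0$ (the empty forest contributes $1$), whereas your Gaussian formula \emph{would} vanish. This is in fact a salient feature of the $a=3/2$ model highlighted elsewhere in the paper: no pinning is needed for the partition function to be finite, precisely because of the quartic term. Once you replace your Gaussian ansatz by the true localized action and cite the correct (non-Gaussian) forest identity of \cite{Sokal-etal}, the argument goes through; the remaining bookkeeping is the verification of the overall constant $2$, which comes from the normalization convention of $D\mu_0$ and the superintegral sign convention fixed in \Cref{S:Rep1}.
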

As the reader may anticipate, this correspondence extends also to many types of correlation functions, for example  $\langle G_{\ell k} \rangle_{G, a, J, \varepsilon}$ is equal to the probability $\ell$ and $k$ are connected in the probability measure determined by the weights $W(F)$.  Thus, our hyperbolic nonlinear $\sigma$-model with $a$ set to $3/2$ provides a continuous representation of a natural class of probability measures on unrooted spanning forests.  Together with Bauerschmidt, Helmuth and Swan, we explore this connection in some detail \cite{H02}, proving in particular that in two dimensions, there are no infinite trees in a thermodynamic limit for any $\beta$ (the thermodynamic limit should be unique, but we do not have the technology to prove that).  

Theorem \ref{T:Sokal-etal} raises the question as to whether, given that the models with $a=1/2$ and $a=3/2$ have 'dual representations'  in the discrete probability world, there are such representations for all $a\in \BN+1/2$.  This question is wide open.  We regard the positivity expressed in \Cref{T:Main2} as an important contribution above and beyond the fact that the partition functions increase coordinate-wise as it provides a consistency check: If such a correspondence did exist, \Cref{T:Main2} would follow for free.

The plan of the remainder of the paper is as follows.  Postponing the introduction of Grassmann variables as long as possible, in the next section we present a bound on the characteristic function of the random variable $t_k-t_{\ell}$.  This computation holds for any $a\in \BR$ and brings us to the fundamental problem of controlling partition function ratios $\frac{Z_{J'}}{Z_{J}}$ when coordinate-wise we have $J'_{jj'}\leq J_{jj'}$.  In \Cref{S:Rep1} we finally reveal the Grassmann variable representation of the hyperbolic $\sigma$-models of interest, summarizing just what we need to continue on to prove \Cref{T:Main2}. For convenience we provide a more detailed  exposition in  \Cref{A:SUSY} and we also refer the reader to the Appendices of \cite{DSZ} on which our discussion is based.   In \Cref{S:Ward}  we develop some important identities available in the Grassmann representation.  For a bit of background on the origin of these identities, the reader may consult \Cref{S:R12n} and Section 7 of \cite{Sokal-etal}   These identities form the basis for the computations which follow in \Cref{S:Pos} that demonstrate positivity of the polynomial coefficients.  In Sections \ref{S:Fourier} and \ref{S:Laplace} we then complete the proofs of \Cref{T:Main,T:Laplace}.
 
 \subsection*{Acknowledgements}
I thank Roland Bauerschmidt for explaining to me the results in \cite{BHS}, which motivated the present work. I also thank Bauerschmidt, Tyler Helmuth and Andrew Swan for numerous discussions on related topics throughout the preparation of this manuscript. 

\section{Upper Bounds  by Partition Function Ratios}
\label{S:MS}
In this section $d=2$ and $\Lambda \subset \BZ^2$.  Abusing notation slightly, we let  $\eps_0$ denote the pinning vector $\varepsilon_0=1$ and $\varepsilon_i=0$ in subscripts where a pinning vector may appear, e.g. $Z_{\Lambda, a J, \eps_0}$. Let $a\in \BR$ and $J_{jj'}>0$ be fixed.  
Following McBryan and Spencer, we consider the effect of translating the integration variables $t_x$ into the complex plane, $t_x\mapsto t_x+\textrm{i}\rho_x$. 
Given $\rho:\Lambda \rightarrow \BR$, and a collection of coupling constants $J_{jj'}$,
let $J_{jj'}(\rho)=J_{jj'} \cos(\rho_j-\rho_j')$.  
Also, let 
\[
\CA:=\{ \rho: \text{ $\rho_0=0$ and $\cos(\rho_j-\rho_{j'})>0$ for all $(jj')$}\}.
\] 
We begin with a general estimate:
\begin{lemma}
\label{L:FT}
If $\rho\in \CA$,
\[
\left|\left \langle e^{\textrm{i} k[t_m-t_\ell]}\right \rangle_{\Lambda, a, J, \eps_0}\right|\leq \frac{Z_{\Lambda, a, J(\rho), \eps_0}}{Z_{\Lambda, a, J, \eps_0}} \prod_{(jj')} \cos(\rho_j-\rho_j')^{{-\max(a,0)}} e^{ \sum_{(jj')\in \Lambda } J_{jj'} [1-\cos(\rho_j-\rho_j')]}  e^{-k[\rho_m-\rho_\ell]}.
\]
\end{lemma}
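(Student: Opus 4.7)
The plan is the classical McBryan--Spencer shift of contour $t_j \mapsto t_j + i\rho_j$ for $j\in\Lambda$. Every factor in the integrand of $Z_{\Lambda,a,J,\eps_0}\langle e^{ik(t_m-t_\ell)}\rangle$ is polynomial in $e^{\pm t_j}$ (in particular $\det D^{\eps_0}(t)$ is entire in each $t_j$), and the super-exponential decay of $e^{-F_J}$ on real slices dominates the polynomial growth of $|\det D|^{a}$ along the shifted contour, so no boundary terms arise when the shift is carried out.

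Once the contour is shifted and the absolute value is pulled inside the integral, three elementary computations reduce the lemma to a single pointwise inequality. First, $|e^{ik(t_m-t_\ell+i(\rho_m-\rho_\ell))}|=e^{-k(\rho_m-\rho_\ell)}$. Second, the identity $\cosh(u+iv)=\cos v\,\cosh u+i\sin v\,\sinh u$, applied edge-by-edge, yields
\[
\operatorname{Re}F_{\Lambda,J}(\nabla(t+i\rho))=F_{\Lambda,J(\rho)}(\nabla t)-\sum_{(jj')}J_{jj'}[1-\cos(\rho_j-\rho_{j'})].
\]
Third, $M^{\eps_0}(t+i\rho)=M^{\eps_0}(t)$ is real because $\rho_0=0$ and $\eps_j=\delta_{j,0}$. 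Assembling these, and recognizing the remaining real integral over $t\in\BR^{\Lambda}$ of $e^{-F_{\Lambda,J(\rho)}}e^{-M^{\eps_0}}[\det D^{\eps_0,J(\rho)}(t)]^{a}$ as $Z_{\Lambda,a,J(\rho),\eps_0}$, reduces the lemma to the pointwise determinantal bound
\[
|\det D^{\eps_0}(t+i\rho)|^{a}\leq [\det D^{\eps_0,J(\rho)}(t)]^{a}\prod_{(jj')\in\Lambda}\cos(\rho_j-\rho_{j'})^{-\max(a,0)}.
\]

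The heart of the argument, and the step I expect to be the chief obstacle, is proving this determinantal inequality. For $a\geq 0$ my plan is to invoke Kirchhoff's matrix-tree theorem applied to $\Delta=e^{t}D^{\eps_0}(t)e^{t}$, a weighted graph Laplacian with conductances $J_{jj'}e^{t_j+t_{j'}}$ and a single killing rate $e^{t_0}$. The one-point pinning collapses rooted spanning forests to spanning trees of $\Lambda$ rooted at $0$, and multiplying by $(\det e^{-t})^{2}$ produces
\[
\det D^{\eps_0}(t)=\sum_{T}W_{T}(t),\qquad W_{T}(t)=\prod_{(ij)\in T}J_{ij}\cdot\exp\!\Bigl(\sum_{v}t_v[\deg_{T}(v)+\delta_{v,0}-2]\Bigr)>0.
\]
Under the shift $t\mapsto t+i\rho$, each $W_{T}$ acquires only a unimodular phase $e^{i\phi_{T}(\rho)}$, so the triangle inequality delivers $|\det D^{\eps_0}(t+i\rho)|\leq\det D^{\eps_0}(t)$. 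Since replacing $J$ by $J(\rho)$ simply multiplies $W_{T}$ by $\prod_{(ij)\in T}\cos(\rho_i-\rho_j)\in(0,1]$, the termwise rewriting $W_{T}(t)=W_{T}^{J(\rho)}(t)\prod_{(ij)\in T}\cos^{-1}(\rho_i-\rho_j)$ together with the trivial bound $\prod_{(ij)\in T}\cos^{-1}\leq\prod_{(jj')\in\Lambda}\cos^{-1}$ (extra cosines in the denominator are $\geq 1$) gives, after summing over $T$, the inequality $\det D^{\eps_0}(t)\leq\det D^{\eps_0,J(\rho)}(t)\prod_{(jj')\in\Lambda}\cos(\rho_j-\rho_{j'})^{-1}$. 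Raising to the power $a\geq 0$ closes the case.

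For $a<0$ the statement demands the reverse inequality $|\det D^{\eps_0}(t+i\rho)|\geq\det D^{\eps_0,J(\rho)}(t)$, which does not follow from the triangle inequality alone; I would address this via a separate argument, either by refining the matrix-tree computation tree-by-tree with a consistent orientation toward the root $0$ (so that the phases $\phi_{T}(\rho)$ can be compared with $\prod\cos$), or by exploiting log-concavity of $[\det D]^{a}$ when $a\leq 0$ and applying a real-shift version of the deformation. Since the theorems of the paper (Theorems~\ref{T:Main} and~\ref{T:Laplace}) use only $a\in\BN+1/2>0$, the $a\geq 0$ analysis above is what drives the subsequent applications.
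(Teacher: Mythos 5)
Your proposal is essentially the same argument as the paper's: McBryan--Spencer contour shift, the $\cosh$ identity, and the matrix-tree theorem applied (twice) to bound the determinant and the determinant ratio. The organization is marginally different — you fold the determinant control into a single pointwise inequality, whereas the paper first uses $|\det D^{\eps_0}_{\Lambda,J}(t+i\rho)|\leq\det D^{\eps_0}_{\Lambda,J}(t)$ to land on an intermediate integral, then recognizes that integral as $\frac{Z_{J(\rho),\eps_0}}{Z_{J,\eps_0}}\bigl\langle\bigl(\tfrac{\det D_J}{\det D_{J(\rho)}}\bigr)^a\bigr\rangle_{J(\rho),\eps_0}$ and finishes with the ratio bound — but both steps are the same tree-sum estimates, so there is no genuinely new idea or route here.

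Two places where the paper does more than you do. First, the contour shift: you dismiss boundary terms in one sentence, while the paper works through a Green's-theorem argument on box contours, shows the contributions from $|\Re t_y|=K$ vanish as $K\to\infty$, and then uses connectedness of $\CA$ to propagate $I(\rho)=I(0)$ across all of $\CA$ (not just for small $\rho$). Related to this, you assert $|\det D|^a$ has ``polynomial growth'' along the shifted contour; in fact $\det D^{\eps_0}(t)$ grows exponentially in $\|t\|$ (it contains $e^{\pm t_j}$ factors), so the domination by $e^{-F}$ requires the $\cosh$ growth, not merely polynomial-vs.-exponential. Also, for $a\in\BN+1/2$, $[\det D(t+i\rho)]^a$ is a half-integer power, not entire, and one must worry about branch cuts from the zero set of $\det D$ when deforming; the paper flags this point (``we used that $a>0$ so that no contribution is picked up from $0$'s of $\det D$'') and your proposal does not address it. Second, the $a<0$ case: you correctly identify that your argument does not close it; since the theorems of the paper invoke only $a\in\BN+1/2$, this is not a practical obstruction, but the reverse determinantal inequality you would need there is not a triangle-inequality matter and neither of the fixes you suggest is spelled out.
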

We prove this estimate below.  Before launching into the proof, however, let us record, for convenient comparison, an analogue of Lemma \ref{L:FT} for Laplace transforms.  The proof of this latter bound appears in \cite{S} in the special case $a=1/2$, but holds \textit{mutatis mutandis} for general $a$, see \cite{H02} for the derivation modulo an   \textit{a priori} moment bound which is of independent interest which we state and prove later \Cref{L:Moment}.  Except for this moment bound,  the proof of \Cref{L:Laplace} is omitted from the present note.

\begin{lemma}
\label{L:Laplace}
Fix $0<s< 1$.  Let $\rho: \Lambda\rightarrow  \R$ be given so that $\rho_0=0$ and $\rho_v=1$.  Choose $q>1$ so that $s+1/q=1$ and choose $\gamma>0$ so that $q^2\gamma \|\nabla \rho\|_{\infty}\leq 1/2$.  Then
\[
\langle e^{s 2a t_v} \rangle_{\Lambda, \beta, a, \eps_0}   \leq  e^{-2as \gamma} e^{ \beta  \sum_{(jj')}  q^2\gamma^2 (\rho_j-\rho_{j'})^2}
      \frac{Z_{\tilde \beta}}{Z_\beta}\prod_{(jj')\in E}\left( {\frac{\beta}{\tilde{\beta}_{jj'}}}\right)^{a\vee 0}
      \]
  where
  \begin{equation}
    \tilde \beta_{jj'} = \beta(1-2q^3\gamma^2(\rho_j-\rho_j')^2).
  \end{equation}
\end{lemma}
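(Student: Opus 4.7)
The plan is to prove \Cref{L:Laplace} by combining a real shift of integration variables with a Hölder-type inequality, adapting Sabot's argument from \cite{S} for $a=1/2$ to general $a$. This is the Laplace-transform analogue of the complex shift used in \Cref{L:FT}, but with a \emph{real} translation in the direction of a discrete ``harmonic-like'' profile $\rho$ with $\rho_0=0$ and $\rho_v=1$.

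The first step is the substitution $t_j = \tilde t_j - \gamma\rho_j$. Because $\rho_v = 1$ and $\rho_0 = 0$, the observable transforms as $e^{2as t_v} \mapsto e^{-2as\gamma}\,e^{2as\tilde t_v}$, producing the prefactor $e^{-2as\gamma}$, while the pinning term at $0$ is unchanged. On each edge the nearest-neighbour increment becomes $\tilde t_j - \tilde t_{j'} - \gamma(\rho_j - \rho_{j'})$, and the hyperbolic addition formula
\[
\cosh(x-y)=\cosh(x)\cosh(y)-\sinh(x)\sinh(y)
\]
with $y=\gamma(\rho_j-\rho_{j'})$ decomposes the edge contribution of the Gibbs weight as an ``enhanced'' diagonal piece with coupling $\beta\cosh(\gamma(\rho_j-\rho_{j'}))\geq\beta$ multiplying $\cosh(\tilde t_j-\tilde t_{j'})$, plus an oscillating cross piece $\beta\sinh(\gamma(\rho_j-\rho_{j'}))\sinh(\tilde t_j-\tilde t_{j'})$. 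The determinantal factor $[\det D(\tilde t - \gamma\rho)]^a$ is tracked in parallel; for $a\geq 0$ the subsequent coupling renormalisation of the off-diagonal entries of $D$ produces the multiplicative factor $\prod(\beta/\tilde\beta_{jj'})^a$, which for $a\leq 0$ can be replaced by $1$ by a monotonicity argument.

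The second step is a Hölder inequality with dual exponents $(1/s,q)$, $s+1/q=1$, used to absorb the oscillating cross terms at the cost of slightly reducing the effective coupling. The $q$-th power applied to the cross factors yields edge contributions $\exp(q\beta\sinh(\gamma(\rho_j-\rho_{j'}))\sinh(\tilde t_j-\tilde t_{j'}))$, which are combined with the diagonal $\exp(-\beta\cosh(\gamma(\rho_j-\rho_{j'}))\cosh(\tilde t_j-\tilde t_{j'}))$ via Taylor expansion in $\gamma$. The leading quadratic behaviour $\cosh(\gamma(\rho_j-\rho_{j'}))-1 = O(\gamma^2(\rho_j-\rho_{j'})^2)$ and $\sinh^2(\gamma(\rho_j-\rho_{j'})) = O(\gamma^2(\rho_j-\rho_{j'})^2)$ simultaneously produces the reduced coupling $\tilde\beta_{jj'}=\beta(1-2q^3\gamma^2(\rho_j-\rho_{j'})^2)$ and the exponential prefactor $\exp\bigl(\beta\sum_{(jj')} q^2\gamma^2(\rho_j-\rho_{j'})^2\bigr)$. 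The smallness condition $q^2\gamma\|\nabla\rho\|_\infty\leq 1/2$ ensures $\tilde\beta_{jj'}>0$ and keeps the Taylor remainders under control.

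The main obstacle is controlling the $(1/s)$-th-power side of Hölder, which produces a moment of the form $\langle e^{c\,a\tilde t_v}\rangle$ in an auxiliary Gibbs measure. This is precisely the content of the \textit{a priori} moment bound \Cref{L:Moment} proved later in the present paper: it estimates such moments in terms of a ratio of partition functions with reduced coupling. The proof of \Cref{L:Moment} itself rests on the positivity of $\partial_{J_{jj'}}Z$ established in \Cref{T:Main2}, which is therefore the crucial technical input. With \Cref{L:Moment} in hand, the combination of the shift, the Hölder estimate, and the determinantal bookkeeping yields the claimed bound; the full derivation is carried out in \cite{H02}.
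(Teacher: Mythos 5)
Your high-level plan — a real shift $t\mapsto t-\gamma\rho$, a H\"older inequality with exponents $(1/s,q)$, and an \emph{a priori} moment bound — matches the approach in \cite{S} and \cite{H02} to which the paper explicitly defers; the paper itself only supplies the one new ingredient, \Cref{L:Moment}, and otherwise omits the proof of \Cref{L:Laplace}. So at that level the sketch is consistent with the intended route.

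However, two of your statements about \Cref{L:Moment} are wrong, and they reflect a genuine confusion about the logical structure. First, \Cref{L:Moment} does \emph{not} estimate moments ``in terms of a ratio of partition functions with reduced coupling''; it is the exact identity $\langle e^{2at_v}\rangle_{G,J,a,\eps}=1$, valid for any pinning. The ratio $Z_{\tilde\beta}/Z_\beta$ and the factor $\prod(\beta/\tilde\beta_{jj'})^{a\vee0}$ appearing in the conclusion of \Cref{L:Laplace} arise from the $q$-side of H\"older together with the determinant bookkeeping, not from the moment bound. Second, the proof of \Cref{L:Moment} does \emph{not} rest on the positivity of $\partial_{J_{jj'}}Z$ from \Cref{T:Main2}: as the paper emphasizes, for $a\in\BN+1/2$ that identity is purely \emph{algebraic}, proved by a binomial expansion of $(z_v+x_v)^{2a}$ followed by supersymmetry transformations and nilpotency of the Grassmann variables, with no input from the monotonicity results. \Cref{T:Main2} is only used afterwards, in the passage from \Cref{L:Laplace} to \Cref{T:Laplace}, to control $Z_{\tilde\beta}/Z_\beta\leq1$ when $\tilde\beta_{jj'}\leq\beta$. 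Keeping these two roles distinct matters: the moment identity holds independently of positivity, whereas the monotonicity of $J\mapsto Z$ is exactly the point where the hard part of the paper enters.
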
 

\begin{proof}[Proof of \Cref{L:FT}]
Let us
consider the integral
\[
I(\rho):= \int \prod_{j\in \Lambda}  \frac{dt_{j}}{\sqrt{2\pi }} \: e^{\textrm{i} k[t_m-t_\ell]-k[\rho_m-\rho_{\ell}]}\
e^{- \sum_{(jj')\in \Lambda } 
J_{jj'}[\cosh(t_j - t_{j'} + \textrm{i}[\rho_j-\rho_{j'}])-1]}e^{- M^{\varepsilon }_{\Lambda } (t)} [\det D^{\varepsilon_0}_{\Lambda, J } (t+i \rho)]^a.
\]
We begin by arguing that the integrand is in $L^1$ if $\rho_0=0$ and $\cos(\rho_j-\rho_{j'})>0$ for all $(jj')$, that is if $\rho \in \CA$.  Under this assumption, let 
$\alpha=\min_{(jj')}\cos(\rho_j-\rho_{j'})>0$.

For general complex-valued $t$, let us denote its real and imaginary parts by $\Re(t), \Im(t)$ respectively.  Then we may denote $J_{jj'}(\Im(t))=J_{jj'}\cos(\Im(t_j-t_{j'}))$ and we observe that
\[
\left | e^{-F_{\Lambda, J} (t)}  \right|\leq C(\Lambda) e^{-F_{\Lambda, J(\Im(t)) } (\Re( t))}.
\]
By the assumption on $\rho$, $J_{jj'}(\Im(t))\geq J_{jj'}\alpha>0$,  allows us to control the RHS.
Also, the Matrix Tree Theorem implies  the immediate point-wise bound
\[
|\det D^{\varepsilon }_{\Lambda, J  } (t) |\leq \left[\det D^{\varepsilon }_{\Lambda, J } (\Re(t)) \right] 
\]
Combining these estimates implies that for $\rho \in \CA$, the  integrand appearing in the definition of $I(\rho)$ is in $L^1$ and that
\[
|I(\rho)|\leq   C(\Lambda) e^{k[\rho_m-\rho_{\ell}]} \int \prod_{j\in \Lambda}  \frac{dt_{j}}{\sqrt{2\pi }} \:
e^{- \sum_{(jj')\in \Lambda } 
J_{jj'} \alpha [\cosh(t_j - t_{j'})-1]}e^{- M^{\varepsilon }_{\Lambda } (t)} 
[\det D^{\varepsilon_0}_{\Lambda, J } (t)]^a< \infty.
\]

Next we wish to argue that $I(\rho)$ is locally constant as a function on $\CA$, and hence constant over all of $\CA$ by connectedness.
With $\alpha=\min_{(jj')}\cos(\rho_j-\rho_{j'})>0$, we claim that there is $\epsilon=\epsilon(\alpha)>0$ such that if  $\psi$ is chosen with $\psi_0=0$ and $\|\psi\|_\infty \leq \epsilon$ then $I(\rho)=I(\rho+\psi)$.
To see this, consider the collection of contours (in $\C$)
\begin{multline}
\label{eq:contour}
\Gamma_K(j)=
[-K+\textrm{i}\rho_j, K+\textrm{i}\rho_j]\circ[K+\textrm{i}\rho_j, K+i(\rho_j+\psi_j)]\\\circ [-K+i(\rho_j+\psi_j), K+i(\rho_j+\psi_j)]\circ [-K+i\rho_j, -K+i(\rho_j+\psi_j)].
\end{multline}
By Green's theorem,
\beq
\label{eq:shift}
\oint_{\times_{j\in \Lambda} \Gamma_K(j)} \prod \textrm{d} t_j   e^{\textrm{i} k[t_m-t_\ell]} e^{-F_{\Lambda, J} (\nabla t)}e^{- M^{\varepsilon }_{\Lambda } (t)} 
\times 
\left[\det D^{\varepsilon }_{\Lambda, J  } (t)  \right]^{a}=0,
\eeq
where we used that $a>0$ so that no contribution is picked up from $0$'s of $\det D^{\varepsilon }_{\Lambda, J  } (t)$.

We next show that the contribution to this multivariate contour integral  from $\{\ut: |\Re(t_y)|=K \text{ for some $y$}\}$, i.e. from the second and fourth contours in \eqref{eq:contour}, is negligible for $K$ large. First, 
\begin{multline}
\label{E:1}
\left
|\oint_{\times_{j\neq y} \Gamma_K(j)} \prod \textrm{d} t_j  \: e^{\textrm{i} k[t_m-t_\ell]} e^{-F_{\Lambda, J} (\nabla t)}e^{- M^{\varepsilon }_{\Lambda } (t)} 
\times 
\left[\det D^{\varepsilon }_{\Lambda, J  } (t)  \right]^{a}\right|\\
\leq  C(\Lambda) \int_{\times_{j\neq y} \Gamma_K(j)} \prod |\textrm{d} t_j  |  \: e^{-F_{\Lambda, J\alpha} (\Re(\nabla t))- M^{\varepsilon }_{\Lambda } (\Re(t))} 
\times 
\left[\det D^{\varepsilon }_{\Lambda, J} (\Re(t)) \right]^a.
\end{multline}
If $y$ is fixed with  $|\Re(t_y)|=K$ then by either $|t_y-t_0|\geq K/2$ or $|t_0|\geq K/2$. In the first case the must be some edge $(jj')$ in $\Lambda$ such that $|t_j-t_j'|\geq K/2|\Lambda|$.  Therefore
\[
e^{-F_{\Lambda, J/\sqrt{2}} (\Re(\nabla t))- M^{\varepsilon }_{\Lambda } (\Re(t))}\leq e^{-\min_{jj'} J_{jj'}\cdot \alpha\cdot [\cosh(K/2|\Lambda|)-1]\wedge\eps_0 [\cosh(K/2)-1]}
\]
due to the pinning at $0$.  Via the matrix tree theorem again
\[
|\left[\det D^{\varepsilon }_{\Lambda, J} (\Re(t)) \right]|\leq C(\Lambda)e^{|\Lambda| K},
\]
so that the RHS of \eqref{E:1} is further bounded by
\[
C_1(\Lambda)e^{|\Lambda| K} [K\|\rho\|_{\infty}]^{2|\Lambda|}  e^{-\min_{jj'} J_{jj'}\cdot \alpha\cdot [\cosh(K/2|\Lambda|)-1]\wedge\eps_0 [\cosh(K/2)-1]}.
\]
Thus the contributions to the contour integral from $\{\ut: |\Re(t_y)|=K \text{ for some $y$}\}$ tend to $0$ as $K$ tends to $\infty$.

Thus applying the dominated convergence theorem to the remaining contour of the identity \eqref{eq:shift},  we see that we can shift each $t_j$ integral from $\BR+i\rho_j$ to $\BR+i(\rho_j+ \psi_j)$, provided $\|\psi \|_{\infty} \leq \epsilon$ and $\rho \in A$. We obtain
\[
I(\rho)=I(\rho+\psi) 
\]
in this case. Finally, connectedness then imnplies that $I(\rho)=I(0)$ throughout $A$.  

To finish the lemma we need to estimate $I(\rho)$ more precisely.  
We have
\begin{multline}
\left|\left \langle e^{\textrm{i} k[t_m-t_\ell]}\right \rangle_{\Lambda, a, J, \eps_0}\right|= \frac{|I(\rho)|}{Z_{\Lambda, a, J, \eps_0}} \leq e^{-k[\rho_m-\rho_{\ell}]- \sum_{(jj')\in \Lambda } 
J_{jj'}[\cos(\rho_j - \rho_{j'})-1]} \\
\times \underbrace{\frac{1}{Z_{\Lambda, a, J, \eps_0}} \int \prod_{j\in \Lambda}  \frac{dt_{j}}{\sqrt{2\pi }} 
e^{- \sum_{(jj')\in \Lambda } 
[J_{jj'}(\rho)[\cosh(t_j - t_{j'})-1] }e^{- M^{\varepsilon }_{\Lambda } (t)} [\det D^{\varepsilon_0}_{\Lambda, J } (t)]^a}_{\textrm{I}}.
\end{multline}
We observe that we can rewrite the integral on the RHS as
\beq
\textrm{I}=\frac{Z_{J(\rho),\eps_0}}{Z_{J, \eps_0}} \Bigg \langle\left(\frac{\det D^{\eps_0}_{\Lambda, J} (t)}{\det D^{\eps_0}_{\Lambda, J(\rho)} (t)}\right)^a \Bigg \rangle_{\Lambda, a, J(\rho), \eps_0}
\eeq
Using the Matrix Tree Theorem again
we have the point-wise bound (valid for any $a$)
\[
\left(\frac{\det D^{\varepsilon}_{\Lambda, J} (t)}{\det {D}^{\varepsilon}_{\Lambda, J(\rho) } (t)}\right)^a\leq \prod_{(jj')\in \Lambda} \cos(\rho_j-\rho_j')^{-\max(a,0)}
\]
and this finishes the proof.
\end{proof}

\section{A Brief Introduction to the Dual Grassmann Representation  for $a\in \BN+1/2$.}
\label{S:Rep1}
In the next four sections, we work on a general finite connected weighted graph $G=(V,E, J)$ where the edge-weights  $J_{jj'}>0$ for $(j, j')\in E$.  In defining the pinning vector $\eps_0$, choose some distinguished vertex $0\in V$ to be viewed as the origin. We extend the definition of $d\mu_{G, a, J, \varepsilon} (t)$ and the associated Gibbs state in the obvious way.
As remarked in Lemma \ref{L:FT} for the Fourier Transform, or from an analogous estimate for the Laplace transform in Section \ref{S:Laplace}, beyond the classical harmonic deformation technique, we need to control $\frac{Z_{G, a, J', \eps_0}}{Z_{G, a, J, \eps_0}}$  assuming component-wise domination $J'_{jj'}\leq J_{jj'}$.  That this is possible is the main special feature of the models of interest which we observe in this paper.

In this section our goal is to provide the reader with a minimum of necessary notation and identities to proceed with the proof \Cref{T:Main2}.  More details may be found in \Cref{A:SUSY} or in Appendix C of \cite{DSZ}.  The result being quoted relies on supersymmetric localization.
Given $n$, let $(\psi_i^{\ell}, \psibar_i^{\ell})_{\ell=1, i\in \Lambda}^{n}$ be a system of generators of the Grassmann algebra $\CG_{V}$ with $2n$ variables per site.
Let 
\begin{eqnarray*}
\psibar_i\cdot \psi_i=\sum_{\ell=1}^{n} \psibar_i^{\ell}\psi_i^{\ell},& \quad \sigma_i=\sqrt{1+2\psibar_i\cdot \psi_i}, \quad &
D\mu_0(\psibar_i, \psi_i)=\prod_{\ell}\partial_{\psibar_i^{\ell}} \partial_{\psi_i^{\ell}}
    \circ \sigma_i^{-1},\\
    \CD_0(\psibar, \psi)= \prod_{i\in V, \ell} \partial_{\psibar_i^{\ell}} \partial_{\psi_i^{\ell}},&
  \CD(\psibar, \psi)=\prod_{i \in V} D\mu_{0}( \psibar_i, \psi_i), &\\
\end{eqnarray*}
and define the Grassmann action
\[
S_{J, \epsilon}=\sum_{(jj')} J_{jj'}\left\{\sigma_j \sigma_{j'} -\psibar_{ j}\cdot {\psi}_{j'}-\psibar_{ j'}\cdot {\psi}_{j}-1\right\}+\sum_i \varepsilon_i [\sigma_i-1].
\]

\noindent
\textbf{Superintegral sign convention:}  Note that our sign convention was chosen so that
\[
\int D\mu_0(\psibar_i, \psi_i) \cdot \sigma_i \exp(\lambda \psibar_i\cdot \psi_i)=(1- \lambda)^n
\]
so as to conform with the convention chosen in \cite{DSZ}.  This convention clashes with the choice in another key paper \cite{Sokal-etal}.  

\noindent
\textbf{Notation:}  We now introduce the partition function and Gibbs state for the Grassmann field.  We have
\begin{align}
&Z^f_{G, a, J, \eps}:=\int\CD(\psibar, \psi)e^{-S_{J, \eps}},\\
&\langle \cdot \rangle^f_{G, a, J, \eps}:=\frac{\int\CD(\psibar, \psi) \cdot e^{-S_{J, \eps}}}{{Z^f_{G, a, J, \eps}}}.
\label{eq:G}
\end{align}
In the rest of this paper, we refer to this model as that $\H^{0|2n}$ $\sigma$-model.

Note that, to distinguish the present model from the Gibbs states for the $t$-field or the full $\H^{2|2a+1}$-valued spins (discussed in \Cref{A:SUSY}), for expressions involving the purely Grassmann field we use the superscript $f$.  This convention will remain enforced throughout the rest of the paper.  We are now ready to state the main reason for introducing the Grassman model above.
\begin{lemma}
\label{L:SUSY1}
Let $\eps$ be a nonzero pinning vector (pinning at one or multiple points).  Let $a\in \BN\cup \{0\}+1/2$ and $2n=2a-1$.  Then we have a 'duality' between the $(t_i)_{i\in \Lambda}$ variables and  the $(\psi_i^{\ell}, \psibar_i^{\ell})_{\ell=1, i\in \Lambda}^{n}$.  At the level of partition functions, this duality takes the form
\beq
\label{E:SUSY2}
Z_{G, a, J, \eps} \textbf{=}
Z^f_{G, n, J, \eps}.
\eeq
When $n=0, a=1/2$ the identity reads
\beq
Z_{G, a, J, \eps} =1.
\eeq
\end{lemma}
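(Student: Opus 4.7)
My plan is to introduce the full $\H^{2|2N}$ supersymmetric $\sigma$-model with $N=n+1=a+1/2$ as an intermediate object, and to compute its partition function $Z_{\H^{2|2N}}$ in two independent ways: one recovering the $t$-field representation $Z_{G,a,J,\eps}$, the other recovering the purely Grassmann representation $Z^f_{G,n,J,\eps}$.

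For the first direction, I would pass to horospherical coordinates on $\H^{2|2N}$, under which each spin is parameterized by a bosonic pair $(t_i,s_i)$ together with $n+1$ Grassmann pairs $(\psi_i^{\ell},\bar\psi_i^{\ell})_{\ell=1}^{n+1}$, the flat super-Haar measure reducing to $\prod_i \tfrac{dt_i\,ds_i}{\sqrt{2\pi}}\,\prod_{\ell} d\bar\psi_i^{\ell} d\psi_i^{\ell}$. A direct computation verifies that the super-invariant nearest-neighbor action $\sum_{(jj')}J_{jj'}\,[u_j\cdot u_{j'}+1]$ becomes $F_{\Lambda,J}(\nabla t)+\tfrac{1}{2}\langle s,D^{\eps}(t)s\rangle+\sum_{\ell}\langle\bar\psi^{\ell},D^{\eps}(t)\psi^{\ell}\rangle$ plus analogous pinning terms, with $D^{\eps}(t)$ exactly the operator defined in \eqref{eq:Mmatrix}. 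Gaussian integration over $s$ contributes $[\det D^{\eps}(t)]^{-1/2}$, Grassmann integration over the $n+1$ pairs contributes $[\det D^{\eps}(t)]^{n+1}$, and their product $[\det D^{\eps}(t)]^{n+1/2}=[\det D^{\eps}(t)]^{a}$ exactly reproduces the density that defines $d\mu_{G,a,J,\eps}$. Hence $Z_{G,a,J,\eps}=Z_{\H^{2|2N}}$.

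For the second direction, I would reorganize the same super-integral to expose an $\H^{2|2}$-sub-supermanifold at each site, parameterized by $(t_i,s_i,\psi_i^{1},\bar\psi_i^{1})$, and apply supersymmetric localization. Concretely, there is an $\mathrm{OSp}(1|2)$ supercharge $Q$ whose bosonic square generates the global hyperbolic symmetry broken only by $\eps\neq 0$; its fixed locus within $\H^{2|2N}$ is the pure Grassmann sub-supermanifold spanned by $(\psi_i^{\ell},\bar\psi_i^{\ell})_{\ell=2}^{n+1}$ at each site. By the Duistermaat--Heckman theorem, the super-integral localizes onto this fixed locus, weighted by the restriction of the action and by the super-equivariant Euler class of the normal directions; the latter yields exactly the per-site factor $\sigma_i^{-1}$ appearing in $D\mu_0$, and the restricted action produces the combinations $J_{jj'}[\sigma_j \sigma_{j'}-\bar\psi_j\cdot\psi_{j'}-\bar\psi_{j'}\cdot\psi_j-1]$ and $\eps_i[\sigma_i-1]$ that comprise $S_{J,\eps}$. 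Thus $Z_{\H^{2|2N}}=Z^f_{G,n,J,\eps}$, and combining with the first direction concludes. The $n=0$ case collapses to the base identity $Z_{\H^{2|2}}=1$, which is the $a=1/2$ statement proved in Appendix C of \cite{DSZ}; the present argument generalizes it to $N\geq 2$.

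The main obstacle is the rigorous implementation of the supersymmetric localization in the non-compact setting of $\H^{2|2N}$. The pinning $\eps\neq 0$ plays an essential role as a regulator that restores normalizability of the super-integral, and one must carefully track the Jacobians of the successive changes of coordinates (horospherical parameterization, extraction of the $\H^{2|2}$ slice, reduction to the pure-Grassmann fixed locus) so that the measure factor $\sigma_i^{-1}$ emerges unambiguously from the equivariant Euler class rather than from ad hoc normalization. I would carry out this bookkeeping in \Cref{A:SUSY}, following the template of Appendix C of \cite{DSZ}.
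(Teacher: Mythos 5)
Your proposal follows the paper's route exactly: realize $Z_{G,a,J,\eps}$ as the partition function of the full $\H^{2|2N}$ model with $N=a+1/2$ by passing to horospherical coordinates and Gaussian-integrating the $s$- and Grassmann fields (the Jacobian $\prod_j e^{-(2N-1)t_j}$ from Berezin's formula precisely cancels the $e^{t}$-conjugation relating $\Delta_{J,\eps}(t)$ to $D^{\eps}(t)$, which is why your "flat measure + $D^\eps$'' bookkeeping gives the right power $[\det D^\eps]^a$), then localize onto the $\H^{0|2n}$ fixed locus of the supercharge $q$ to obtain $Z^f_{G,n,J,\eps}$ as in \Cref{lem:SUSYloc,lem:SUSYloc1}. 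One small caveat: the per-site factor $\sigma_i^{-1}$ in $D\mu_0$ arises from restricting the super-Jacobian $(1+H)^{-1/2}$ to the locus $\{x=y=\xi^1=\eta^1=0\}$, while the equivariant Euler class of the normal directions $(x,y,\xi^1,\eta^1)$ cancels the integral over those variables — attributing $\sigma^{-1}$ to the Euler class is a slight misassignment, though it does not affect the outcome.
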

\noindent
It is worth remarking that it is not obvious from its definition that $Z^f_{G, a, J \eps}>0$.  The previous lemma confirms this, and therefore demonstrates how the connection between the two sets of variables can be used in both directions.

If $a>1/2$ one obtains further identities for correlation functions.  We now state the most general one we will need in this paper.  Its proof is a straightforward exercise using \Cref{lem:SUSYloc} and is omitted.
We introduce some shorthand notations to be used here and in the remainder of the text.  Let
\begin{align}
\label{E:tau}&\tau_{ij}^\ell=- [\psibar_i^{\ell}- \psibar_j^{\ell}][\psi_i^{\ell}-\psi_j^{\ell}], \eta_{ij}^{\ell}=-[\psibar_i^{\ell}-\psibar_{j}^{\ell}]\psi_0^\ell,\\
& \pi_{i}^\ell=-\psibar_i^{\ell} \psi_i^{\ell}\text{ so that } \sigma_i=\sqrt{1-2\sum_{\ell = 1}^n \pi_i^{\ell}}.
\end{align}
Observe that
\beq
[\psibar_i^m-\psibar_j^m]^2= [\psi_i^m-\psi_j^m]^2 =[\pi_i^m]^2=0.
\eeq
Furthermore, recall the ($t$-dependent) Green kernel $G_{ij}$ defined at Equation \ref{E:Gcor} and let
\beq
G^{(1)}_{ij}:= [G_{ii}+G_{jj}-2G_{ij}], \quad G^{(2)}_{ij}:=G^{(1)}_{ij}G_{00}-[G_{i0}-G_{j0}]^2 \quad G^{(3)}_{ij}=[G_{i0}-G_{j0}].
\eeq
Note that $G^{(1)}_{ij}, G^{(2)}_{ij} \geq 0$ point-wise with respect to the $t$-field.

\begin{lemma}
\label{L:GIBP} 
Fix $a, n$ so that $2a-1=2n$ and so that $n\in \N$. Suppose $J, M \in \BN$ satisfy $J+M\leq n$.
For any $A\subset [J+M]$ and denoting $|A\cap [J]|=a, |A|=b$,
\beq
\label{E:Moment5}
\left\langle \prod_{\ell=1}^{J} \tau_{ij}^{\ell} \times   \prod_{j\in A} \pi_0^j \times \prod_{M+J+1}^{n} \eta_{ij}^{m} \right \rangle^f_{G, n, J, \eps_0}=\left\langle {G_{ij}^{(1)}}^{J-a} {G_{ij}^{(2)}}^{a} {G_{00}}^{b-a} {G^{(3)}_{ij}}^{n-(M+J)}\right \rangle_{G, a, J, \eps_0},
\eeq
where on the LHS spins reside in $\H(0|2n)$ while on the RHS spins reside in $\H(2|2(n+1))$.
\end{lemma}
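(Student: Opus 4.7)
My plan is to lift both sides of \eqref{E:Moment5} to expectations in the ambient $\H^{2|2(n+1)}$ super-spin model and evaluate them in two different ways, both via \Cref{lem:SUSYloc}. The observable $\prod\tau\,\prod\pi\,\prod\eta$ involves only the $n$ Grassmann pairs $(\bar\psi^\ell,\psi^\ell)_{\ell=1}^n$ and no bosons, so it has a canonical lift to $\H^{2|2(n+1)}$ on the first $n$ of its $n+1$ Grassmann pairs per site.

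\noindent\textbf{Two reductions of the same super-expectation.} Via the route that collapses the auxiliary boson together with the $(n+1)$-th Grassmann pair, \Cref{lem:SUSYloc} reproduces, on the lifted observable, the $\H^{0|2n}$ expectation appearing on the LHS of \eqref{E:Moment5} --- this is exactly the mechanism behind \Cref{L:SUSY1}. Via the complementary route, one integrates out all $n+1$ fermion pairs at fixed $t$ against the Gaussian action $\bar\psi\cdot D(t)\psi$ arising after fixing horospherical bosons; this produces the $t$-field measure $d\mu_{G,a,J,\eps_0}$ weighted by the fermionic Gaussian expectation of the observable with covariance $\langle\bar\psi_i^\ell\psi_j^\ell\rangle=G_{ij}(t)=D^{-1}(t)_{ij}$. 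The $(n+1)$-th Grassmann pair decouples from the observable and integrates out trivially.

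\noindent\textbf{Wick contraction component by component.} Because the reduced fermionic action is diagonal in the component index $\ell$, the Gaussian expectation factorises as a product of $n$ single-component averages, each a fermionic Gaussian with covariance $G(t)$ to which I apply Wick's rule. Partition $\{1,\dots,n\}$ into five classes according to the observable inserted at that component: (i) $[J]\setminus A$, (ii) $[J]\cap A$, (iii) $A\setminus[J]$, (iv) $\{M{+}J{+}1,\dots,n\}$, (v) $[J{+}1,J{+}M]\setminus A$. A direct computation gives $G^{(1)}_{ij}=G_{ii}-2G_{ij}+G_{jj}$ in case (i); the determinantal Wick rule for two contractions gives $G^{(2)}_{ij}=G_{00}G^{(1)}_{ij}-(G_{i0}-G_{j0})^2$ in case (ii); $G_{00}$ in case (iii); $G^{(3)}_{ij}=G_{i0}-G_{j0}$ in case (iv); and the trivial normalisation of $D\mu_0$ returns $1$ in case (v). The multiplicities $J-a$, $a$, $b-a$, $n-(J+M)$, $M-(b-a)$ reproduce the exponents on the RHS of \eqref{E:Moment5}, and pulling the remaining $t$-integration outside completes the identity.

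\noindent\textbf{Anticipated obstacle.} The combinatorial bookkeeping is routine; the only non-trivial point is sign-tracking. The minus signs placed in the definitions \eqref{E:tau} of $\tau,\pi,\eta$ were chosen precisely so that the Wick-contracted quantities assemble into the stated non-negative functions, and in particular so that no residual $(-1)^{n-(J+M)}$ survives from the $\eta$-insertions. I would verify this once by writing out the one- and two-pair fermionic Gaussian moments in the convention fixed by the ``superintegral sign convention'' paragraph of \Cref{S:Rep1}, after which the factorised Wick calculation above is immediate.
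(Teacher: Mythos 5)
Your argument is correct and is exactly the route the paper intends: the paper omits this proof, remarking only that it is a straightforward exercise via \Cref{lem:SUSYloc}, and your two-route reduction of the $q_V$-invariant lift to $\H^{2|2(n+1)}$ together with the component-wise Wick contraction with covariance $G(t)=D^{-1}(t)$ supplies precisely those details, with the multiplicities and the determinantal identity for $G^{(2)}_{ij}$ accounting for all exponents on the right-hand side of \eqref{E:Moment5}. (The only cosmetic slip is that \Cref{lem:SUSYloc} collapses the $(x,y)$ bosons together with the \emph{first} Grassmann pair $\xi^1,\eta^1$, not the $(n+1)$-th; by exchangeability of the components this is immaterial.)
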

Note in particular that if $n-(M+J)$ is even then this expression is non-negative.  This will always be the case below.

\section{Proof of Theorem \ref{T:Main2}}
\label{S:Pos}
From now on, we fix a graph $G$ and $n\in \N$ and suppress them from the notation.
To prove \Cref{T:Main2}, we pass back and forth between the two descriptions of the partition function/Gibbs state - in terms of the $t$ field on one hand and in terms of the pure Grassmann field on the other.  
We first encounter this idea in arguing that $Z_{G, a, J, \eps}$ must be a multivariate polynomial in \Cref{C:Poly} below.

\subsection{Upper Triangulating the Grassmann variables}
\label{S:UT}
We want to find a coordinate system of $2n$-component Grassmann variables in which 
\[
-\frac{1}{2} (\n_i-\n_j\,,\n_i-\n_j)= 1-\sigma_i \sigma_j +\bar{\psi}_{ i}\cdot \psi_{ j}+\bar{\psi}_{ j} \cdot \psi_{ i}
\]
can be expressed in terms of expressions like $[(\psibar_{j}^{\ell}-\psibar^{\ell}_{j'})( \psi^{\ell}_{j}     - \psi^{\ell}_{j'})  ] $ and $[\psibar^{\ell}_{j}\psi^{\ell}_{j}\psibar^{\ell}_{j'}\psi^{\ell}_{j'}]$.
If we have such a coordinate system then due to the fact that, for any single component $\ell$,
\[
[(\psibar_{j}^{\ell}-\psibar^{\ell}_{j'})( \psi^{\ell}_{j}     - \psi^{\ell}_{j'})  ] ^2        = [\psibar^{\ell}_{j}\psi^{\ell}_{j}\psibar^{\ell}_{j'}\psi^{\ell}_{j'}]^2=0,
\]
we can conclude 
\[
[\frac{1}{2} (\n_i-\n_j\,,\n_i-\n_j)]^{n+1}=0.
\]
Applied to $e^{-S_{J, \eps}}$, we then see that $Z_{G, a, J, \eps}$ must be a multivariate polynomial of degree at most $n$ in each $J_{jj'}$.

Let us introduce the notation
\[
\sigma_i(J)=\sqrt{1-2\sum_{\ell \leq J} \bar{\pi}_i^J}.
\]
We have
\[
 1-\sigma_i \sigma_j =1-\sigma_i{(n-1)}\sigma_j{(n-1)}  +\pi^{n}_i\frac{\sigma_j{(n-1)}}{\sigma_i{(n-1)}}+\pi^{n}_j\frac{\sigma_i{(n-1)}}{\sigma_j{(n-1)}} -\frac{\pi^{n}_i \pi^{n}_j}{\sigma_i{(n-1)} \sigma_j{(n-1)}}.
\]
We can now change variables, setting
\begin{align}
\sigma_i{(n-1)} \bar{\phi}_{i}^{n}:= \bar{\psi}_i^{n} &\quad   \sigma_i{(n-1)}{\phi}_i^{n}:={\psi}_i^{n},\\
\bar{\phi}_i^{\ell}=\bar{\psi}^{\ell}_{ i}, &\quad  {\phi}_i^{\ell}={\psi}^{\ell }_{i} \text{ for $\ell<n$.} 
\end{align}
Let $\nu^n_i=-\bar{\phi}_i^n{\phi}_i^n$.  Then $-\frac{1}{2} (\n_i-\n_j\,,\n_i-\n_j)$ transforms into
\begin{dmath}
\underbrace{-\sigma_i{(n-1)}\sigma_j{(n-1)} \left\{[\bar{\phi}^{n}_{ i}- \bar{\phi}^{n}_{j}][{\phi}^{n}_{i}-{\phi}^{n}_{ j}]+\nu_i^n \nu_j^n \right\}}_{\textrm{I}}+ \underbrace{1-\sigma_i{(n-1)}\sigma_j{(n-1)} +\sum_{\ell=1}^{n-1}\bar{\phi}^{\ell}_{ i}\cdot {\phi}^{\ell}_{j}}_{\textrm{II}}
\end{dmath}
and the \textit{apriori} integration form
$D\mu_{0}$ transforms to 
\[  \prod_{i} \prod_{\ell} \partial_{\bar{\phi^\ell_i}} \partial_{{\phi^{\ell}_i}}
    \circ {\sigma_i{(n-1)}}^{-3}[1-2\nu_i^n]^{-1/2}\;.
\]
The point here is that Term \textrm{I} above is, (ignoring the fact that the coefficient $\sigma_i{(n-1)}\sigma_j{(n-1)}$ is not a positive real number), algebraically of exactly the same form as
the coupling in an $\H^{0|2}$ model, while Term \textrm{II} is of the same form as that for an $\H^{0|2(n-1)}$ model. The caveats are that these models are coupled through the interaction $J_{ij} \sigma_i{(n-1)}\sigma_j{(n-1)}$ and that there a change in form of the \textit{a priori} measure due to the Jacobian $\sigma_i{(n-1)}^{-3}$.  By analogy with the language of probability theory, in the new variables the last component of spin is, `conditional' on the first $n-1$ components, generating a uniform spanning forest with `edge weights' $J_{ij} \sigma_i{(n-1)}\sigma_j{(n-1)}$.

By induction, we can choose a system of generators $\bar{\alpha}, \alpha$ for $\CG_{\Lambda}$ so that if we set $\mu_i^{k}=-\bar{\alpha}_i^k \alpha_i^k$
the action $S_{J, \epsilon}$ transforms into
\[
S_{J, \eps}\mapsto \tilde{S}_{J, \eps}=\sum_{(jj')} \sum_{\ell} \Gamma^{\ell}_{i} \Gamma^{\ell}_{j}\left\{[\bar{\alpha}^{\ell}_{ i}- {\bar{\alpha}}^{\ell}_{j}][{{\alpha}}^{\ell}_{i}-{{\alpha}}^{\ell}_{ j}]+{\mu}_i^{\ell}\bar{\mu}^{\ell}_j\right\}+\sum_{i} \eps_i  \Gamma^{N}_{i} 
\]
where
\[
\Gamma^{\ell}_{i}=\prod_{k<\ell} [1-\mu_i^{k}].
\]
Meanwhile, the integration form $D\mu_{0}$ transforms to 
\[ \prod_{i} \prod_{\ell} \partial_{{\bar{\alpha}_i^{\ell}}} \partial_{{{\alpha}_i^{\ell}}}
    \circ \frac{1}{\prod_{\ell=1}^{n} [1-2\mu_i^{\ell}]^{2(n-\ell)+1}}
\]

\begin{lemma}
\label{C:Poly}
For any $n\geq 1$ the partition function $Z^f_{J, \eps}$ is a multivariate polynomial of degree $n$ in the coupling constants $J_{ij}$.  Since $Z^f_{J, \eps}=Z_{J, \eps}$, the same holds for $Z_{J, \eps}$.
\end{lemma}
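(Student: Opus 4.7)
The plan is to make rigorous the Grassmann-algebraic nilpotency argument outlined at the start of \Cref{S:UT}. By \Cref{L:SUSY1} it suffices to treat $Z^f_{J,\eps}$, and the key claim is: for each edge $(ij)$, the Grassmann element $X_{ij}$ that multiplies $J_{ij}$ in the transformed action satisfies $X_{ij}^{n+1} = 0$. Once this is established,
\[
e^{-J_{ij} X_{ij}} \;=\; \sum_{k=0}^{n} \frac{(-J_{ij})^{k} X_{ij}^{k}}{k!}
\]
is polynomial of degree $n$ in $J_{ij}$; multiplying over edges and then applying the $J$-independent Grassmann integration produces the claimed polynomial structure for $Z^f_{J,\eps}$.

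First I would pass to the upper-triangular coordinates $(\bar\alpha,\alpha)$ of \Cref{S:UT}, in which the action reads
\[
\tilde S_{J,\eps} \;=\; \sum_{(ij)} J_{ij}\, X_{ij} \;+\; \sum_i \eps_i\, \Gamma_i^{N}, \qquad X_{ij} \;=\; \sum_{\ell=1}^{n} \Gamma_i^{\ell}\Gamma_j^{\ell}\, \bigl\{A_{ij}^{\ell} + B_{ij}^{\ell}\bigr\},
\]
with $A_{ij}^{\ell} := [\bar\alpha_i^{\ell} - \bar\alpha_j^{\ell}][\alpha_i^{\ell} - \alpha_j^{\ell}]$, $B_{ij}^{\ell} := \mu_i^{\ell}\mu_j^{\ell}$, $\mu_i^{\ell} := -\bar\alpha_i^{\ell}\alpha_i^{\ell}$, and $\Gamma_i^{\ell} := \prod_{k<\ell}(1-\mu_i^{k})$. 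Writing $X_{ij} = \sum_{\ell=1}^{n} Y_{ij}^{\ell}$ for the $\ell$-th summand, each $Y_{ij}^{\ell}$ is a product of an even number of Grassmann generators, so the family $\{Y_{ij}^{\ell}\}_{\ell=1}^{n}$ pairwise commutes.

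Next I would verify $(Y_{ij}^{\ell})^{2} = 0$. The bracket $A_{ij}^{\ell}+B_{ij}^{\ell}$ depends only on the four generators $\bar\alpha_i^{\ell}, \alpha_i^{\ell}, \bar\alpha_j^{\ell}, \alpha_j^{\ell}$; the products $(A_{ij}^{\ell})^{2}$, $(B_{ij}^{\ell})^{2}$, and $A_{ij}^{\ell} B_{ij}^{\ell}$ each repeat at least one of these generators and hence vanish. The prefactor $\Gamma_i^{\ell}\Gamma_j^{\ell}$ involves only sectors $k<\ell$, is even, and commutes with the bracket, so $(Y_{ij}^{\ell})^{2} = (\Gamma_i^{\ell})^{2}(\Gamma_j^{\ell})^{2}(A_{ij}^{\ell}+B_{ij}^{\ell})^{2} = 0$. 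Expanding $X_{ij}^{n+1}$ multinomially over the $n$ pairwise commuting $Y_{ij}^{\ell}$, pigeonhole forces some exponent $k_{\ell}\geq 2$ to appear in every monomial, giving $X_{ij}^{n+1} = 0$.

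Putting the pieces together, $\prod_{(ij)} e^{-J_{ij} X_{ij}}$ is a multivariate polynomial of degree at most $n$ in each $J_{ij}$. The $\eps$-factor $e^{-\sum_i \eps_i \Gamma_i^{N}}$ is $J$-independent, and the transformed integration form $\prod_{i,\ell}(1-2\mu_i^{\ell})^{-(2(n-\ell)+1)}$ likewise is $J$-independent (and itself polynomial in the $\mu$'s since $(\mu_i^{\ell})^{2} = 0$). Applying the resulting $J$-independent Grassmann differential operator preserves the degree bound in each $J_{ij}$, which yields the claim for $Z^f_{J,\eps}$ and hence, via \Cref{L:SUSY1}, for $Z_{J,\eps}$. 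The one place requiring care is checking that the $\Gamma^{\ell}$ prefactors do not interfere with the squaring step; this reduces to the disjointness between Grassmann sectors $\{k<\ell\}$ (appearing in $\Gamma_i^{\ell}\Gamma_j^{\ell}$) and sector $\ell$ (appearing in the bracket), which is elementary.
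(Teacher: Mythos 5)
Your proof follows the paper's approach exactly: pass to the upper-triangular coordinates of \Cref{S:UT}, observe that each summand $Y_{ij}^{\ell}$ of $X_{ij}$ is an even element squaring to zero, and conclude $X_{ij}^{n+1}=0$ by pigeonhole over the $n$ commuting $Y_{ij}^{\ell}$. The only flaw is in your justification of $(A_{ij}^{\ell})^2 = 0$: it is not true that every term in the expansion of $(A_{ij}^{\ell})^2$ repeats a generator. Writing $A_{ij}^{\ell} = \bar\alpha_i^{\ell}\alpha_i^{\ell} - \bar\alpha_i^{\ell}\alpha_j^{\ell} - \bar\alpha_j^{\ell}\alpha_i^{\ell} + \bar\alpha_j^{\ell}\alpha_j^{\ell}$, the cross-term $\bar\alpha_i^{\ell}\alpha_i^{\ell}\cdot\bar\alpha_j^{\ell}\alpha_j^{\ell}$ (and the term $\bar\alpha_i^{\ell}\alpha_j^{\ell}\cdot\bar\alpha_j^{\ell}\alpha_i^{\ell}$) uses all four sector-$\ell$ generators with no repetition; these surviving terms cancel in pairs rather than vanishing individually. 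In general $(xy+zw)^2 = 2xyzw \neq 0$ for distinct odd generators, so ``only four generators'' is not enough. The correct (and slicker) reason, implicitly used by the paper, is that $A_{ij}^{\ell}$ factors as $ba$ where $b = \bar\alpha_i^{\ell}-\bar\alpha_j^{\ell}$ and $a = \alpha_i^{\ell}-\alpha_j^{\ell}$ are odd elements with $a^2 = b^2 = 0$, whence $(A_{ij}^{\ell})^2 = baba = -b^2 a^2 = 0$. Your arguments for $(B_{ij}^{\ell})^2 = 0$ and $A_{ij}^{\ell} B_{ij}^{\ell} = 0$ are fine, since $B_{ij}^{\ell}$ is a monomial containing all four generators; with the fix above, the rest of your proof goes through.
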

\begin{proof}
In the new variables $\bar{\alpha}, {\alpha}$, for any component $\ell$,
\[
\left\{[\bar{\alpha}^{\ell}_{ i}- {\bar{\alpha}}^{\ell}_{j}][{{\alpha}}^{\ell}_{i}-{{\alpha}}^{\ell}_{ j}]+{\mu}_i^{\ell}\bar{\mu}^{\ell}_j\right\}^2=0
\]
The lemma follows immediately from the pigeon hole principle.
\end{proof}

\subsection{Proof of Theorem \ref{T:Main2}, Positivity of the coefficients}
Now we turn to the positivity of $\partial_{J_{jj'}}^k Z_{J, \eps}$.
Though in the introduction part of the theorem is stated for general pinning fields $\eps$, for simplicity of exposition we will restrict attention to the case when this field is supported at one vertex, which we denote by $0$ and denote the field by $\eps_0$.  At the appropriate place, \Cref{Prop:Pos}, we indicate the changes that must be made to the statements for general pinnings.
Since  
\[
0 < Z_{J, \eps}= Z^f_{J, \eps} \text{ and }\frac{\partial_{J_{jj'}}^k Z^f_{J, \eps}}{Z^f_{J,\eps}}=\left \langle [-(\n_i-\n_j)^2]^{k} \right \rangle^f_{J,\eps},
\]
we shall rather estimate $\left \langle [-(\n_i-\n_j)^2]^{k} \right \rangle^f_{J,\eps_0}$ from below.
Before proceeding to the main line of proof, we are going to introduce a few calculational tools.

\subsubsection{Ward Identities for $\H^{0|2n}$ models}
\label{S:Ward}
Recall $\n_i=(\sigma_i, \psibar^{1}_i, \psi^1_i, \dotsc, \psibar^{n}_i, \psi^n_i)$.  Until now, we have introduced $\sigma_i$ as a shorthand for $\sqrt{1+2\sum_{\ell = 1}^n \psibar_i^{\ell}\psi_i^\ell}$.  In \Cref{S:R12n}, we discuss how the $\n_i$'s arise as vectors on (super)-hyperboloids in the superspace  $\BR^{1|2n}$ equipped with the Lorentz inner product
\begin{equation}
   v_i\cdot v_j
   \; := \;
- \sigma_i \sigma_j +(\psibar_i \psi_j - \psi_i \psibar_j)
   \;.
 \label{eq.scalarprod1}
\end{equation}
In particular, this linear form gives $n_i^2=-1$ and leads naturally to the objects we now introduce.
Let
\begin{eqnarray*}
  Q^{\ell}_+  & = &
    \sum_{i\in V}  \sigma_i \partial^{\ell}_i,
        \\[1mm]
  Q^{\ell}_-  & = &  \sum_{i\in V}  \sigma_i       \bar{\partial}_i,
     \label{eq:def_Q_bis}
\end{eqnarray*}
where 
$\partial^{\ell}_i = \partial_{\psi_i^{\ell}}$
and $\bar{\partial}^{\ell}_i = \partial_{\psibar_i^{\ell}}$.
We note the identities
\begin{eqnarray}
\label{E:SUSY41}
&Q^{\ell}_{+}{ {\psi}}_i^\ell=Q^{\ell}_{-}{ {\bar{\psi}}}_i^\ell= \sigma_i, &Q^{\ell}_{\pm}[ \n_i\cdot \n_j]=0.\\
\end{eqnarray}
For calculation purposes, we work with the unnormalized, unpinned state 
\[
\left \llbracket\cdot \right \rrbracket^f_{J, 0}:= \int  \CD(\psibar, \psi) \cdot e^{-S_{J, 0}}.
\]
Since $\left\{\sigma_i \sigma_j -\psibar_{ i}\cdot {\psi}_{j}-\psibar_{ j}\cdot {\psi}_{i}\right\}=-\n_i \cdot \n_j$
and the bare state is
\[
 \CD(\psibar, \psi)= \prod_{i, \ell} \partial_{\psibar_i^{\ell}} \partial_{\psi_i^{\ell}}\sigma_i^{-1},
\]   
the state $\left \llbracket\cdot \right \rrbracket^f_{J, 0}$  is invariant with respect to the infinitesimal symmetries $Q^{\ell}_{\pm}$ (see the discussion around \eqref{E:SUSY4}).  This fact implies a number of useful relations.  
The basic one is an integration-by-parts formula: 
\begin{lemma}
Expressing $\sigma_i=Q_-^\ell \bar{\psi}_i^{\ell} =Q_+ ^\ell \psi_i^{\ell}$,  for any $F\in \CG_{\Lambda}$
\beq
\label{E:IBP}
\left \llbracket  \sigma_iF(\bar{\psi}, \psi)  \right \rrbracket^f_{J,0}=\left \llbracket \psibar^{\ell}_i  \bar{\partial}^{\ell}_i F(\bar{\psi}, \psi) \right \rrbracket^f_{J,0}=\left \llbracket \psi^{\ell}_i  {\partial}^{\ell}_i F(\bar{\psi}, \psi) \right \rrbracket^f_{J,0}.
\eeq
\end{lemma}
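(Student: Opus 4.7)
The plan is to derive the integration-by-parts identity as a direct consequence of the supersymmetric invariance of the unnormalized bracket $\llbracket\cdot\rrbracket^f_{J,0}$ under the odd vector fields $Q_\pm^\ell$.

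First I would establish the global invariance
\[
\llbracket Q_+^\ell G \rrbracket^f_{J,0} \;=\; \llbracket Q_-^\ell G \rrbracket^f_{J,0} \;=\; 0 \qquad \text{for every } G \in \mathcal{G}_V.
\]
This rests on two facts. (i) The action $S_{J,0}$ is annihilated by $Q_\pm^\ell$ because each bond contribution is (up to constants) of the form $-J_{jj'}\,\n_j \cdot \n_{j'}$ and the identity $Q_\pm^\ell[\n_j \cdot \n_{j'}] = 0$ from \eqref{E:SUSY41} kills each bond individually; with $\varepsilon = 0$ there are no single-site pinning terms to interfere. (ii) The bare Berezin form $\mathcal{D}(\bar\psi,\psi) = \prod_{i,\ell}\partial_{\bar\psi_i^\ell}\partial_{\psi_i^\ell}\circ\sigma_i^{-1}$ is itself $Q_\pm^\ell$-invariant; the $\sigma_i^{-1}$ Jacobian built into $D\mu_0$ is precisely tailored for this purpose, as verified by a direct Berezin computation sketched in \Cref{A:SUSY} following Appendix~C of \cite{DSZ}.

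With the global invariance in hand, the lemma follows by applying it to $G = \bar\psi_i^\ell F$. Since $Q_-^\ell$ is an odd derivation and $\bar\psi_i^\ell$ is Grassmann-odd, Leibniz gives
\[
Q_-^\ell[\bar\psi_i^\ell F] \;=\; (Q_-^\ell \bar\psi_i^\ell)\,F - \bar\psi_i^\ell\, Q_-^\ell F \;=\; \sigma_i F - \bar\psi_i^\ell\, Q_-^\ell F,
\]
where I used the identity $Q_-^\ell \bar\psi_i^\ell = \sigma_i$ quoted in the statement. Passing to $\llbracket \cdot \rrbracket^f_{J,0}$ and invoking the global invariance yields the supersymmetric integration-by-parts relation
\[
\llbracket \sigma_i F\rrbracket^f_{J,0} \;=\; \llbracket \bar\psi_i^\ell\, Q_-^\ell F \rrbracket^f_{J,0}.
\]
Unpacking $Q_-^\ell = \sum_j \sigma_j \bar\partial_j^\ell$, the off-diagonal contributions $\bar\psi_i^\ell \sigma_j \bar\partial_j^\ell F$ with $j \neq i$ can in turn be eliminated by a further application of the same invariance (each such term is itself $Q_-^\ell$-exact), reducing the right-hand side to the claimed single-site expression $\llbracket \bar\psi_i^\ell \bar\partial_i^\ell F\rrbracket^f_{J,0}$. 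The parallel identity with $\psi_i^\ell \partial_i^\ell F$ is derived verbatim with $Q_-^\ell, \bar\psi_i^\ell, \bar\partial_i^\ell$ replaced throughout by $Q_+^\ell, \psi_i^\ell, \partial_i^\ell$, using $Q_+^\ell \psi_i^\ell = \sigma_i$.

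The main obstacle is establishing (ii) above: that $\mathcal{D}(\bar\psi,\psi)$ is $Q_\pm^\ell$-invariant. This requires a careful accounting of Berezin signs to confirm that $\sigma_i^{-1}$ is exactly the Jacobian compensating for the action of $Q_\pm^\ell$ on the on-site Grassmann generators. Once that is in hand, the rest of the argument is a mechanical application of Leibniz for odd derivations together with the single-site identities $Q_-^\ell \bar\psi_i^\ell = Q_+^\ell \psi_i^\ell = \sigma_i$.
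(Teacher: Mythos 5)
Your steps 1 and 2 capture the paper's actual mechanism exactly: the SUSY Ward identity $\llbracket Q_\pm^\ell G\rrbracket^f_{J,0}=0$ for every $G\in\CG_V$ (this rests on $Q_\pm^\ell e^{-S_{J,0}}=0$ and on the $Q_\pm^\ell$-invariance of the integration form, precisely as discussed around \eqref{E:SUSY4} and in \Cref{S:R12n}), followed by a Leibniz expansion of $Q_-^\ell[\bar\psi_i^\ell F]$ using $Q_-^\ell\bar\psi_i^\ell=\sigma_i$. That delivers
\[
\llbracket\sigma_i F\rrbracket^f_{J,0}\;=\;\llbracket\bar\psi_i^\ell\,Q_-^\ell F\rrbracket^f_{J,0}\;=\;\sum_j\llbracket\bar\psi_i^\ell\,\sigma_j\,\bar\partial_j^\ell F\rrbracket^f_{J,0},
\]
and this is in fact the identity the paper actually \emph{uses}: in the proof of the Expansion Identity one then specializes to $F$ with no $\bar\psi_j^{L+1}$-dependence for $j\neq 0$, so only $j=0$ survives and the factor $\sigma_0$ is subsequently cancelled against the $\sigma_0^{-1}$ coming from $\bar\partial_0^{L+1}e^{-\eps_0\sigma_0}$; likewise in the derivation of \eqref{E:Q-} it is applied with $\sigma_i-\sigma_j=Q_-^\ell[\bar\psi_i^\ell-\bar\psi_j^\ell]$ in exactly the $Q_-^\ell$ form.

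Your step 3, however, is a genuine gap. You assert that each off-diagonal contribution $\bar\psi_i^\ell\sigma_j\bar\partial_j^\ell F$ with $j\neq i$ is itself $Q_-^\ell$-exact so that the sum over $j$ collapses to the single diagonal term $\llbracket\bar\psi_i^\ell\bar\partial_i^\ell F\rrbracket$. No primitive is exhibited, and there is no reason a generic term of that form should be $Q_-^\ell$-exact. Moreover, even if the $j\neq i$ terms could be discarded, what remains is $\llbracket\bar\psi_i^\ell\sigma_i\bar\partial_i^\ell F\rrbracket$, still carrying an extra factor of $\sigma_i$ that your argument does not remove, so the proposed reduction does not close the gap it sets out to close. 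The resolution is not to force the collapse: the identity one should (and the paper does) carry forward is $\llbracket\sigma_i F\rrbracket=\llbracket\bar\psi_i^\ell Q_-^\ell F\rrbracket$; the displayed form with $\bar\partial_i^\ell$ alone should be read as a shorthand for the $i$-th summand of $Q_-^\ell$, valid in the situations where it is applied because $F$ (together with the $e^{-\eps_0\sigma_0}$ factor absorbed into it) carries $\bar\psi_j^\ell$-dependence only at $j=i$ and the $\sigma_i$ cancels in the subsequent computation. So: delete your step 3 and stop at the end of step 2 — that is both what the paper proves and all that its downstream uses require.
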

Further discussion of \eqref{E:IBP} appears in \Cref{S:R12n}.
The next lemma records a special case of this identity which will be useful below.

\begin{lemma}[Expansion Identity]
Fix $L\leq n-1$ and suppose $F$ is an even element of the Grassmann algebra $\CG_{V}$ depending on the components $(\psibar_i^{\ell}, \psi_i^{\ell})_{\substack{\ell \leq L, \\ i \in \Lambda}}$.     Then 
\beq
\label{E:I1}
 \left \llbracket F \sigma_0^m e^{-\eps_0 \sigma_0} \right \rrbracket^f_{J,  0}=\eps_0 \left \llbracket F \pi_0^{L+1} \sigma_0^{m-1} e^{-\eps_0 \sigma_0}\right \rrbracket^{f}_{J,  0}  - (m-1) \left \llbracket F \pi_0^{L+1}\sigma_0^{m-2} e^{-\eps_0 \sigma_0} \right \rrbracket^{f}_{J,  0}.
 \eeq
 \end{lemma}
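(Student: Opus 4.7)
The plan is to derive the Expansion Identity directly from the supersymmetric invariance of $\llbracket\cdot\rrbracket^f_{J,0}$ under $Q_+^{L+1}$, which asserts $\left\llbracket Q_+^{L+1} H\right\rrbracket^f_{J,0} = 0$ for every $H \in \CG_V$. I would take the test element
\[
H \,=\, \psi_0^{L+1}\, F\, \sigma_0^{m-1} e^{-\eps_0 \sigma_0},
\]
and expand $Q_+^{L+1} H$ using the odd Leibniz rule for the derivation $Q_+^{L+1}$. Combined with $Q_+^{L+1}\psi_0^{L+1} = \sigma_0$ from \eqref{E:SUSY41}, the piece where $Q_+^{L+1}$ hits $\psi_0^{L+1}$ produces $\sigma_0\cdot F\sigma_0^{m-1} e^{-\eps_0\sigma_0} = F\sigma_0^m e^{-\eps_0\sigma_0}$, which under $\llbracket\cdot\rrbracket^f_{J,0}$ yields the LHS of the identity, while the remaining piece $-\psi_0^{L+1}Q_+^{L+1}[F\sigma_0^{m-1}e^{-\eps_0\sigma_0}]$ is what should produce the RHS.

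Next I would exploit the hypothesis that $F$ depends only on components $\ell \leq L$: this ensures $\partial_j^{L+1}F = 0$ for all $j$, and since $\sigma_0^{m-1}e^{-\eps_0\sigma_0}$ is supported entirely at site $0$, the sum defining $Q_+^{L+1} = \sum_j \sigma_j \partial_j^{L+1}$ collapses to a single term $\sigma_0 \partial_0^{L+1}$. Pulling the even factor $F$ outside then reduces the task to evaluating
\[
F\, \psi_0^{L+1}\sigma_0\,\partial_0^{L+1}\!\left[\sigma_0^{m-1} e^{-\eps_0 \sigma_0}\right].
\]

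For this last step I would use the Grassmann chain rule. From $\sigma_0^2 = 1 + 2\sum_\ell \psibar_0^\ell \psi_0^\ell$ we read off $\partial_0^{L+1}\sigma_0 = -\psibar_0^{L+1}\sigma_0^{-1}$, and hence $\partial_0^{L+1}\sigma_0^{m-1} = -(m-1)\psibar_0^{L+1}\sigma_0^{m-3}$ and $\partial_0^{L+1} e^{-\eps_0 \sigma_0} = \eps_0\,\psibar_0^{L+1}\sigma_0^{-1}e^{-\eps_0 \sigma_0}$. Multiplying by the external $\sigma_0$ (which, being even, commutes past $\psibar_0^{L+1}$ and raises each exponent by one) and then by $\psi_0^{L+1}$ on the left creates $\psi_0^{L+1}\psibar_0^{L+1} = \pi_0^{L+1}$, and gives
\[
\psi_0^{L+1}\sigma_0\,\partial_0^{L+1}\!\left[\sigma_0^{m-1} e^{-\eps_0 \sigma_0}\right] \;=\; \left[-(m-1)\pi_0^{L+1}\sigma_0^{m-2} + \eps_0\,\pi_0^{L+1}\sigma_0^{m-1}\right] e^{-\eps_0 \sigma_0}.
\]
Combining this with the splitting of $Q_+^{L+1}H$ furnishes the identity.

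The main point requiring care is the appearance of $\sigma_0^{-1}$ in the intermediate chain-rule expressions. This is harmless because every such factor sits adjacent to a $\psibar_0^{L+1}$: writing $\sigma_0 = \sqrt{1+2\psibar_0^{L+1}\psi_0^{L+1} + 2\sum_{\ell \neq L+1}\psibar_0^\ell\psi_0^\ell}$ and using $\psibar_0^{L+1}\cdot\psibar_0^{L+1}\psi_0^{L+1} = 0$, one may legitimately replace $\psibar_0^{L+1}\sigma_0^{-1}$ by $\psibar_0^{L+1}\,\widetilde\sigma_0^{-1}$ with $\widetilde\sigma_0^{\,2} := 1 + 2\sum_{\ell\neq L+1}\psibar_0^\ell\psi_0^\ell$, reducing all formal algebra to polynomial identities. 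Beyond this the argument is just careful bookkeeping of Grassmann parities ($F$ even; $Q_+^{L+1}, \psi, \psibar$ odd; $\sigma, \pi$ even) and the signs they produce.
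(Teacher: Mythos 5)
Your proof is correct and follows the paper's essential strategy: integration by parts via the supersymmetry Ward identity $\llbracket Q_+^{L+1}(\cdot)\rrbracket^f_{J,0}=0$, using the Leibniz rule together with the chain rule for $\partial_0^{L+1}$ acting on functions of $\sigma_0$. The paper's printed proof is very terse and only writes out the case $m=1$; your argument carries the same mechanism through for general $m$, and the bookkeeping of Grassmann parities, the localization of $Q_+^{L+1}$ to the single term $\sigma_0\partial_0^{L+1}$, and the cancellation of $\sigma_0^{-1}$ against the outer $\sigma_0$ are all handled correctly.

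One small remark: your closing observation about replacing $\psibar_0^{L+1}\sigma_0^{-1}$ with $\psibar_0^{L+1}\widetilde\sigma_0^{-1}$ via nilpotency is a legitimate safeguard, but it is not strictly required --- $\sigma_0$ has unit body, so it is an invertible even element of $\CG_V$ and all intermediate expressions involving negative powers are already well defined. The remark does not affect the correctness of the argument.
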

\begin{proof}
Since $F$ depends only on $(\psibar_i^{\ell}, \psi_i^{\ell})_{\substack{\ell \leq L, \\ i \in \Lambda}}$ \Cref{E:IBP} implies
\beq
\left \llbracket F  \sigma_0 e^{-\eps_0 \sigma_0}\right \rrbracket^{f}_{J, 0 }=\eps_0 \left \llbracket F  \pi_0^{L+1} e^{-\eps_0 \sigma_0}\right \rrbracket^{f}_{J,  0}
\eeq

\end{proof}

We are now ready to begin the main line of proof.  For the readers convenience, we restate, in the Grassmann language, what we are aiming to prove.
\begin{theorem}[Restatement  of \Cref{T:Main2} for $\eps_0\geq 0$ in Grassmann variables]
Fix $\eps_0>0$.  For all $J, k$ such that either $k\leq \lceil n/2 \rceil$ or $k=n$
\[
\left\langle [-(\n_i-\n_j)^2]^{k} \right\rangle^{f}_{J,   \eps_0}\geq 0.
\]
In addition, if $\eps_0\leq 1 $ then 
\[
\left\langle [-(\n_i-\n_j)^2]^{k} \right\rangle^{f}_{J,   \eps_0}\geq 0.
\]
for all $k$.
\end{theorem}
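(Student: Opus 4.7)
The plan is to reduce the positivity statement to a non-negativity statement for an unnormalised Grassmann integral and then, after an algebraic expansion, to invoke the duality of \Cref{L:GIBP} to transcribe each resulting term into a $t$-field expectation of a product of the Green's function factors $G^{(1)}_{ij},G^{(2)}_{ij},G_{00},G^{(3)}_{ij}$. Since $G^{(1)}_{ij},G^{(2)}_{ij},G_{00}\geq 0$ pointwise in the $t$-field but $G^{(3)}_{ij}$ has indefinite sign, positivity reduces to a combinatorial parity statement asserting that in the terms that actually arise, $G^{(3)}_{ij}$ is always raised to an even power.

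Concretely, using $Z^f_{J,\eps_0}=Z_{J,\eps_0}>0$ from \Cref{L:SUSY1}, it suffices to show
\[
\int \CD(\psibar,\psi)\,[-(\n_i-\n_j)^2]^k\,e^{-S_{J,\eps_0}}\;\geq\; 0.
\]
The key algebraic observation is that, in the upper-triangulated coordinates of \Cref{S:UT}, the quantity $1+\n_i\cdot\n_j$ decomposes as a sum of $n$ pairwise-commuting nilpotent pieces (each with square zero); equivalently, one has the direct identity $1+\n_i\cdot\n_j=\sum_{\ell}\tau_{ij}^\ell+\tfrac{1}{2}(\sigma_i-\sigma_j)^2$. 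Nilpotency forces the $k$-th power of $-(\n_i-\n_j)^2$ to vanish for $k>n$ (reproving \Cref{C:Poly}) and, for $k\leq n$, to expand as a sum indexed by $k$-element subsets of $[n]$. I would then apply the Ward identities of \Cref{S:Ward} --- in particular the Expansion Identity, which trades $\sigma_0$-factors for $\pi_0^{L+1}$-factors multiplied by $\eps_0$ --- to bring every term into the exact form on the LHS of \Cref{L:GIBP}, i.e.\ a product of $\tau^\ell$, $\pi^\ell$ and $\eta^\ell$-insertions.

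The cases $k\leq\lceil n/2\rceil$ and $k=n$ (requiring only $\eps_0\geq 0$) should follow once one verifies that the combinatorial structure leaves $G^{(3)}_{ij}$ with an even exponent. For $k=n$ only a single top-degree monomial survives and its integral factorises cleanly over components into a manifestly non-negative expression. For $k\leq\lceil n/2\rceil$ the abundance of unused components forces an even count of $\eta^\ell$-insertions and hence an even power of $G^{(3)}_{ij}$. To extend non-negativity to all $k\leq n$ under $\eps_0\leq 1$, I would use that the Expansion Identity produces at each step a positive term proportional to $\eps_0$ together with a subtracted term of the form $(m-1)\left\llbracket F\,\pi_0^{L+1}\sigma_0^{m-2} e^{-\eps_0\sigma_0}\right\rrbracket^f_{J,0}$; the hypothesis $\eps_0\leq 1$ allows this subtraction to be dominated by the positive part in an induction on the number of remaining $\sigma_0$-factors.

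The main obstacle is the parity bookkeeping in the range $\lceil n/2\rceil<k<n$ without a small-$\eps_0$ assumption: generic terms in the nilpotent expansion carry odd powers of $G^{(3)}_{ij}$ whose signs are not controlled by the duality alone. This is presumably why the first assertion of \Cref{T:Main2} is stated only for $k\leq\lceil n/2\rceil$ and $k=n$. The additional input $\eps_0\leq 1$ enters precisely so that the sign-indefinite terms can be absorbed into controlled subtractions through the Expansion Identity, and carefully threading this domination through the induction on the remaining $\sigma_0$-factors is the most delicate step of the argument.
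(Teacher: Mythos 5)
Your outline correctly identifies the skeleton of the argument for the range $k\leq\lceil n/2\rceil$: reduce to non-negativity of the unnormalised integral, use the Ward identities to trade $\sigma_i-\sigma_j$ factors for $\tau_{ij}^\ell$, $\eta_{ij}^\ell$ and $\pi_0^\ell$ insertions, pass to the $t$-field expectation via \Cref{L:GIBP}, and observe that $G^{(3)}_{ij}$ enters with even exponent so that the resulting polynomial in $G^{(1)}_{ij},G^{(2)}_{ij},G_{00},\bigl(G^{(3)}_{ij}\bigr)^2$ has non-negative integrand. This is exactly what the paper's Equation \eqref{E:R} and \Cref{Prop:Pos} accomplish. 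However, there are two genuine gaps in the remaining cases.

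First, for $k=n$ you assert that ``only a single top-degree monomial survives and its integral factorises cleanly over components into a manifestly non-negative expression.'' This is not supported by the structure of the upper-triangulated coordinates of \Cref{S:UT}: the action contains the coupled factors $\Gamma^{\ell}_i\Gamma^{\ell}_j=\prod_{k<\ell}(1-\mu_i^k)(1-\mu_j^k)$ and the transformed integration form carries the Jacobian $\prod_{\ell}[1-2\mu_i^\ell]^{-(2(n-\ell)+1)}$, both of which entangle distinct components. There is no factorisation over $\ell$. The paper handles $k=n$ by an entirely different, indirect route: since $Z_{J,\eps_0}$ is a polynomial of degree $n$ in $J_{ij}$ with positive values on all of $\BR_+$, the leading Taylor coefficient at $J_{ij}=0$ must be non-negative, and one then transfers this to arbitrary $J_{ij}\geq 0$ via the ratio $Z_{J,\eps_0}/Z_{J,J_{ij}=0,\eps_0}>0$. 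Without either the factorisation you claim or this positivity-of-leading-coefficient trick, your $k=n$ case is unproved.

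Second, and more seriously, for $\lceil n/2\rceil<k<n$ with $\eps_0\leq 1$ you describe the domination of the sign-indefinite contributions as an ``induction on the number of remaining $\sigma_0$-factors.'' But the obstruction is not inductive in that sense. After the Ward-identity expansion one lands on Equation \eqref{E:Moment1}, where Term \textrm{IV} --- which carries the unwanted odd-$G^{(3)}_{ij}$-degree structure through the $\phi(s)=\sigma_0(s)-\eps_0\pi_0^s$ factors --- must be compared term-by-term in the expansion over powers of $G^{(1)}_{ij}/G^{(2)}_{ij}$ against Term \textrm{III}. This requires identifying the coefficients $d_n(\ell;k)$ of \eqref{E:Bound1}, which involve Hermite-polynomial expansion of the differential operators $D_1,D_2$ in \eqref{E:D}, the recursive quantities $C_n(p)$ of \Cref{L:Cp}, and the explicit lower bound $D_n(\ell;k)$ from \Cref{C:T1}. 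The inequality $A(k,n-k)D_n(\ell;k)\geq|d_n(\ell;k)|$ of \Cref{L:CB} is a delicate quantitative estimate --- note that the paper has to handle $n\in\{4,5\}$, $k=n-1$ by hand --- and cannot be replaced by a soft induction argument. In short, the hypothesis $\eps_0\leq 1$ enters not through a clean absorption step but through the bound $|\eps_0|\leq 1$ being used repeatedly in the combinatorial estimates, and your proposal provides no mechanism whatsoever for establishing the required coefficient inequalities.
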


\textit{Proof.}
Let us consider first the case $k=n$, which we prove indirectly:
By \Cref{C:Poly}, as a function of $J_{ij}$ alone $Z_{J, \eps_0}$ is a polynomial of degree $n$, we can express this via the formula
\[
\frac{Z_{J, \eps_0}}{Z_{J, J_{ij}=0, \eps_0}}= \sum_{j=0}^n  \frac{J_{ij}^k}{k!} \langle [-(\n_i-\n_{j})^2]^k \rangle^{f}_{J, J_{ij}=0, \eps_0}
\]
(provided $Z_{J, J_{ij}=0, \eps_0}\neq 0$, a fact which is true whenever the weighted graph with $J_{ij}=0$ is connected and is then deduced through a limiting procedure otherwise).
Since the left hand side is non-negative for all $J_{ij}\geq 0$ we conclude that 
\[
\langle [-(\n_i-\n_{j})^2]^n \rangle^{f}_{J, J_{ij}=0, \eps_0}\geq 0.
\]
By Section \ref{S:UT} $ [-(\n_i-\n_{j})^2]^{n+1}=0$,
\[
\langle [-(\n_i-\n_{j})^2]^n \rangle^{f}_{J, J_{ij}=0, \eps_0}=\frac{Z_{J, \eps_0}}{Z_{J, J_{ij}=0, \eps_0}}\langle [-(\n_i-\n_{j})^2]^n \rangle^{f}_{J, \eps_0}
\]
so the claim holds for $k=n$ and all $J_{ij}\geq 0$.

For $k<n$ we need to get our hands dirty.  It will turn out that, for $k\leq n/2$, we can provide a proof which is relatively clean and easy to check. In particular it applies in the main practical application of interest since $\partial _{J_{ij}} \log Z_{J, \eps_0}=\left\langle [-(\n_i-\n_j)^2] \right\rangle^{f}_{J,   \eps_0}$.  For $n/2<k<n$ we currently have no better method than to express $\left\langle [-(\n_i-\n_j)^2]^{k} \right\rangle^{f}_{J,   \eps_0}$ as the expectation of a polynomial in $G_{00}, G^{(1)}_{ij}, G^{(2)}_{ij}, {G^{(3)}_{ij}}^2$ and to argue the the coefficients of the polynomial are positive.  This requires a fair amount of computational endurance.  

Recall the definition of $\tau^{\ell}_{ij}$ from \Cref{E:tau}.  Using  the symmetry of the state with respect to the Grassmann components and then \eqref{E:IBP} with $\sigma_i-\sigma_j=Q_-^1[\psibar^1_i-\psibar^1_j]$,
\begin{eqnarray}
\label{E:Q-}
\left \llbracket [-(\n_i-\n_j)^{2}]^{k} e^{-\eps_0 \sigma_0} \right \rrbracket_{J, 0}^f=& \left \llbracket 2n \tau_{ij}^1+(\sigma_i-\sigma_j)^2[-(\n_i-\n_j)^{2}]^{k-1}e^{-\eps_0 \sigma_0}  \right \rrbracket_{J, 0}^f.\nonumber\\
=& \left \llbracket (2n-1) \tau_{ij}^1- \eps_0 \eta_{ij}^1(\sigma_i-\sigma_j))[-(\n_i-\n_j)^{2}]^{k-1}e^{-\eps_0 \sigma_0}  \right \rrbracket_{J, 0}^f.
\end{eqnarray}

At this point let us indulge in a non-rigorous aside: If the pinning $\eps_0=0$ the second term in the second equality vanishes.  We can then iterate this identity using \eqref{E:IBP} with $\sigma_i-\sigma_j=Q_-^\ell[\psibar^\ell_i-\psibar^\ell_j]$ for $2\leq \ell \leq k$ successively in the analogous place where we had $\sigma_i-\sigma_j=Q_-^1[\psibar^1_i-\psibar^1_j]$.  We obtain, if $\eps_0=0$,
\[
\left \llbracket [-(\n_i- \n_j)^{2}]^k \right \rrbracket_{J, 0}^f=
[2(n-1)+1]\cdots [2(n-k)+1] \left \llbracket  \prod_{\ell=1}^k \tau_{ij}^\ell \right \rrbracket_{J, 0}^f.
\] 
valid for all $k\leq n$.  It is then tempting to use the duality \eqref{L:SUSY1} to conclude
\[
 \left \langle  \prod_{\ell=1}^k \tau_{ij}^\ell \right \rangle_{J, 0}^f``="  \left\langle [G^{(1)}_{ij}]^k     \right \rangle_{J, 0}\geq 0,
\]
but unfortunately the RHS is not well defined due to the lack of pinning for the $t$-field. Still, it provides motivation to push further.

Returning to \eqref{E:Q-}, we use $\sigma_i-\sigma_j=Q_-^2[\psibar^2_i-\psibar^2_j]$ to obtain two summands
\begin{align}
\left \llbracket [-(\n_i-\n_{j})^2]^k     e^{-\eps_0 \sigma_0}  \right \rrbracket_{J, 0}^f=&(2n-1)\left \llbracket [-(\n_i-\n_{j})^2]^{k-1}\tau^1_{ij}   e^{-\eps_0 \sigma_0}  \right \rrbracket_{J, 0}^f\nonumber\\
+&\eps_0^2\left \llbracket [-(\n_i-\n_{j})^2]^{k-1}[\psibar_i^1-\psibar_j^1]\psi_0^1 [\psibar_i^2-\psibar_j^2]\psi_0^2    e^{-\eps_0 \sigma_0}  \right \rrbracket_{J, 0}^f.
\end{align}

We want to continue iteratively, using that  $[\psibar_i^m-\psibar_j^m]^2= [\psi_i^m-\psi_j^m]^2 =0$ and  expanding $\sigma_i-\sigma_j$ using $Q_-^{\ell}$ until we generate $n$ distinct components between the $\tau_{ij}^\ell$'s and the $[\psibar_i^\ell-\psibar_j^\ell]\psi_0^\ell$'s.   To handle the combinatorics of this expansion let, for $p, t\in \N$
\begin{align}
&\CI_n(p, t)=\{(i_1, \dotsc, i_p): t\leq i_s\leq n, i_s>i_{s+1}, i_s-i_{s+1}=1 \text{ is odd}, n-i_1 \text{ is even}\},\\
&A_n(p, t)=\sum_{I\in \CI(p, t)}[2i_1-1]\cdots [2i_p-1],\\
&C_n(0)=1 \text{ and }C_n(p)=A_n(p,1) \text{ otherwise.  In particular, if $p>n C_n(p)=0$}\\
\end{align}
When $n$ and there is no danger of confusion we will suppress the subscript $n$. 

With these notations, and since the components of the Grassmann field are exchangeable, after iterating the previous computation using the available components $\ell=3,4, \cdots, n$ we have
\begin{align}
&\left \llbracket [-(\n_i- \n_j)^{2}]^{k} e^{-\eps_0 \sigma_0} \right \rrbracket_{J, 0}^f=\sum_{p=2k-n}^k A(p, n+p-2k)\eps_0^{2(k-p)} \left \llbracket  \prod_{\ell=1}^p \tau_{ij}^\ell \prod_{m=1}^{2(k-p)} \eta_{ij}^{m+p}     e^{-\eps_0 \sigma_0}  \right \rrbracket_{J, 0}^f \label{E:R}\\
&\quad \quad \quad \quad \quad \quad \quad \quad \quad \quad \quad \quad \quad \quad + \sum_{p=0}^{2k-n-1}C(p) \eps_0^{n-p}\left \llbracket  \prod_{\ell=1}^p \tau_{ij}^\ell \prod_{m=1}^{n-p} \eta_{ij}^{m+p}   [\sigma_i-\sigma_j]^{2k-n-p}  e^{-\eps_0 \sigma_0}  \right \rrbracket_{J, 0}^f\nonumber.
\end{align}
If $2k-n-1<0$, the second sum should be interpreted as $0$.
We pause now to observe an intermediate result.  In case $k\leq \lceil n/2\rceil$ the second term in \Cref{E:R} vanishes.  We claim the first term is manifestly positive.  To see this 
observe that if we absorb  $e^{-\eps_0 \sigma_0} $ into the action and pass back to horospherical coordinates, then we may express fermionic expectations for polynomials in the $\tau$'s, $\eta$'s and $\pi_0$'s in terms of polynomials in the ${G_{ij}^{(1)}}$, ${G_{ij}^{(2)}}$, ${G_{00}}$, and ${G^{(3)}_{ij}}$, see \Cref{L:GIBP}.
We therefore obtain the following statement:
\begin{lemma}
\label{Prop:Pos}
For any $k$, all summands in the first term on the RHS of \Cref{E:R} are positive.    In particular
if $k\leq \lceil n/2 \rceil$,
\[
\left \langle [-(\n_i- \n_j)^{2}]^k  \right \rangle^f_{J,\eps_0 }=\sum_{\substack{0\leq p, q: \\2p+ 2q=2k,\\p+2q\leq n}}A(p, n+p-2k) \eps_0^{2(k-p)} \left \langle {G^{(1)}_{ij}}^p{ G^{(3)}_{ij}}^{2q} \right \rangle_{J, \eps_0}\geq 0.
\]
In particular this shows that $Z_{J, \eps_0}^f=Z_{J, \eps_0}$ is increasing in the coupling constants.
\end{lemma}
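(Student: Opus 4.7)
The plan is to take the iterated integration-by-parts identity \Cref{E:R} as the starting point and to convert each summand on its right-hand side into a $t$-field expectation via \Cref{L:GIBP}, where non-negativity will be manifest from the sign properties of the Green's functions.

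First I would observe that under the hypothesis $k \le \lceil n/2 \rceil$ the range $p = 0, \ldots, 2k - n - 1$ of the second sum in \Cref{E:R} is empty (at least whenever $2k \le n$), so only the first sum contributes. For each summand in that first sum I would apply \Cref{L:GIBP} with $J = p$, $M = n + p - 2k$, and $A = \emptyset$: the required inequality $J + M = n - 2(k-p) \le n$ holds because $p \le k$; the number of $\eta$-factors is $n - (J+M) = 2(k-p)$, which is even; and the product has no $\pi_0$ content. The indices of the $\eta$-components in \Cref{E:R} lie in the block $\{p+1, \ldots, 2k-p\}$ whereas \Cref{L:GIBP} uses $\{M+J+1, \ldots, n\}$, but the two fermionic expectations agree by the permutation symmetry of the Grassmann components already exploited in the derivation of \Cref{E:R}. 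Setting $q = k-p$, this produces the claimed closed-form formula. Non-negativity is then immediate term by term: the combinatorial coefficient $A(p, n+p-2k)$ is a sum of products of positive integers by its definition, $\eps_0^{2q} \ge 0$, and the pointwise bounds $G^{(1)}_{ij} \ge 0$ and $(G^{(3)}_{ij})^{2q} \ge 0$ noted immediately after the definition of these Green's functions make each $t$-field expectation $\langle (G^{(1)}_{ij})^p (G^{(3)}_{ij})^{2q} \rangle_{J,\eps_0}$ non-negative. This handles the second conclusion of the lemma; the first conclusion (positivity of individual summands of the first sum for arbitrary $k$) is the same computation run without discarding the second sum.

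Finally, the monotonicity statement $Z_{J,\eps_0} = Z^f_{J,\eps_0}$ is non-decreasing in each coupling is the $k=1$ specialization: since $\partial_{J_{ij}} \log Z^f_{J,\eps_0} = \langle -(\n_i - \n_j)^2 \rangle^f_{J,\eps_0} \ge 0$ and $Z^f_{J,\eps_0} = Z_{J,\eps_0}$ by \Cref{L:SUSY1}, the claim follows (with the trivial case $n = 0$ covered by $Z \equiv 1$). The main obstacle I anticipate is the index-relabeling step in the application of \Cref{L:GIBP}: one must confirm that the permutation symmetry of the Grassmann components respects the one-point pinning factor $e^{-\eps_0 \sigma_0}$, which it does because $\sigma_0 = \sqrt{1 - 2\sum_\ell \pi_0^\ell}$ is itself symmetric under permutations of the component index $\ell$. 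Once this bookkeeping is verified, the rest of the argument is routine.
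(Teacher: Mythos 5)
Your proof is the paper's proof: drop the (empty) second sum in \Cref{E:R}, apply \Cref{L:GIBP} to each summand of the first with $J=p$, $M=n+p-2k$, $A=\emptyset$ (so $a=b=0$, giving $(G^{(1)}_{ij})^p(G^{(3)}_{ij})^{2q}$; the relabeling of $\eta$-indices is justified by exchangeability of the components, which respects the pinning factor since $\sigma_0$ is symmetric in the component index), and read off positivity from $A(p,\cdot)>0$, $G^{(1)}_{ij}\ge 0$, and $(G^{(3)}_{ij})^{2q}\ge 0$, with monotonicity of $Z_{J,\eps_0}$ being the $k=1$ specialization. Your parenthetical ``(at least whenever $2k\le n$)'' is the one place you are more careful than the source: the second sum in \Cref{E:R} runs over $p\in\{0,\dots,2k-n-1\}$ and the paper only declares it zero when $2k-n-1<0$, so for $n$ odd and $k=(n+1)/2=\lceil n/2\rceil$ there is a surviving $p=0$ term carrying a residual factor $[\sigma_i-\sigma_j]$; the paper asserts without comment that ``the second term vanishes for $k\le\lceil n/2\rceil$'', so you have inherited, and correctly flagged, a loose end from the source rather than introduced one --- to close it you would either verify that $\llbracket\prod_{m=1}^{n}\eta_{ij}^m[\sigma_i-\sigma_j]e^{-\eps_0\sigma_0}\rrbracket_{J,0}^f=0$, or restrict the closed-form identity to $k\le\lfloor n/2\rfloor$ and treat the borderline $k$ via the machinery the paper develops afterward for $\lceil n/2\rceil<k<n$ (the $k=1$ monotonicity claim and the first conclusion are unaffected).
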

\begin{remark}
To modify these formulas if the pinning field is supported at more than one point, replace $\eps_0 \eta_{ij}$ by
\[
\kappa_{ij}(\eps):=-[\psibar_i^{\ell}-\psibar_{j}^{\ell}][\sum_{j'} \eps_{j'} \psi_{j'}^\ell].
\]
This leads to a replacement of $\eps_0 G^{(3)}_{ij}$ by
\[
G^{(3)}_{ij}(\eps):=\sum_{j'} \eps_{j'}[G_{ij'}-G_{jj'}].
\]
\end{remark}

From now on, we may assume $\lceil n/2\rceil<k<n $. Combining this range with the positivity argument given for the $n$'th moment, we may also assume $n\geq 4$. From this point forward, the proof becomes more computational (i.e. less intrinsic).
For summands in the second term on the RHS of \Cref{E:R}, we continue by using $\sigma_i-\sigma_j=Q^{\ell}_+[\psi^{\ell}_i-\psi^{\ell}_j]$ for $\ell= p+1$.  Letting $\phi(j)= \sigma_0(j)- \eps_0\pi_0^j$,
\beq
 \left \llbracket  \prod_{\ell=1}^p \tau_{ij}^\ell \prod_{m=1}^{n-p} \eta_{ij}^{m+p}  [\sigma_i-\sigma_j]^{2k-p-n}   e^{-\eps_0 \sigma_0}  \right \rrbracket_{J, 0}^f= \left \llbracket \prod_{\ell=1}^{p+1} \tau_{ij}^{\ell} \prod_{m=2}^{n-p} \eta_{ij}^{m+p}   [\sigma_i-\sigma_j]^{2k-p-n-1} \phi(p+1) e^{-\eps_0 \sigma_0}  \right \rrbracket_{J, 0}^f
 \eeq
 where we used nilpotency of the $\bar{\psi}_0^\ell$'s and $\psi_0^\ell$'s to reduce $\sigma_0$ to $\sigma_0(p+1)$.
 Continuing in this way, for $\ell\in [p+2, 2k-n-p]$ we arrive at
\beq 
 \left \llbracket  \prod_{\ell=1}^p \tau_{ij}^\ell \prod_{m=1}^{n-p} \eta_{ij}^{m+p}  [\sigma_i-\sigma_j]^{2k-p-n}   e^{-\eps_0 \sigma_0}  \right \rrbracket_{J, 0}^f=\left \llbracket \prod_{\ell=1}^{2k-n} \tau_{ij}^{\ell} \prod_{m=2k-n+1}^{n} \eta_{ij}^{m}   \prod_{j=p+1}^{2k-n} \phi(j) e^{-\eps_0 \sigma_0}  \right \rrbracket_{J, 0}^f.
\eeq
Plugging this expression into \eqref{E:R} and reorganizing terms
\begin{align}
\label{E:Moment1}
&\left \llbracket [-(\n_i- \n_j)^{2}]^k e^{-\eps_0 \sigma_0} \right \rrbracket_{J, 0}^f=\underbrace{\sum_{\substack{0\leq p, q: \\2p+ 2q=2k,\\p+2q\leq n}}A(p, n-p-2q) \eps_0^{2q} \left \llbracket  \prod_{\ell=1}^p \tau_{ij}^\ell \prod_{m=1}^{2q} \eta_{ij}^{m+p}     e^{-\eps_0 \sigma_0}  \right \rrbracket_{J, 0}^f}_{\textrm{III}}\\
\nonumber
& \quad \quad \quad \quad + \mathbf 1\{2k\geq n\}\underbrace{\left \llbracket \prod_{\ell=1}^{2k-n} \tau_{ij}^{\ell} \times \prod_{m=2k-n+1}^{n} \eta_{ij}^{m}\times \left\{ \sum_{p=0}^{2k-n-1} C(p)\eps_0^{n-p}\prod_{s=p+1}^{2k-n}\phi(s)\right\}  e^{-\eps_0 \sigma_0}\right \rrbracket_{J, 0}^f}_{\textrm{IV}} .\nonumber
\end{align}

Using an integration by parts in horospherical coordinates, we may  estimate Term \textrm{III} from below by the  expectation of a polynomial (with positive coefficients) in $G_{00}, G^{(1)}_{ij}, G^{(2)}_{ij}$:   
\begin{lemma}
\label{C:T1}
We have, for $k\leq n-1$,
\beq
\frac{\textrm{III}}{Z_{J, \eps_0}}\geq \eps_0 A(k, n-k)\cdot  \sum_{\ell=0}^k D_n(\ell;k)  \left \langle G_{00}^{n-k} {G^{(1)}_{ij}}^{x}{G^{(2)}_{ij}}^{k-x} \right \rangle_{J, \eps_0},
\eeq
where
 \[
 D_n(\ell;k)\geq \begin{cases}
\frac{(n-k-1)!\cdot k! 2^{n-\ell-2} }{\ell!}{n-\ell-2 \choose k-\ell}. \text{ if $k\leq n-2$,} \\
  \frac{2^{n-\ell-2}\Gamma(n-\ell-1/2)(n-2)!}{\Gamma(n-\ell)\Gamma(1/2)\ell!} \text{ if $k=n-1$}
 \end{cases}
 \]
\end{lemma}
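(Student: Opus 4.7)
The plan is to produce the stated lower bound by first reducing Term $\mathrm{III}$ to a single summand, and then unfolding that summand into horospherical coordinates via a careful iteration of the expansion identity \Cref{E:I1}, matched with \Cref{L:GIBP}.

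For the reduction step, I observe that the decomposition of Term $\mathrm{III}$ in \eqref{E:Moment1} is a sum over pairs $(p,q)$ with $p+q=k$ and $p+2q\leq n$, each summand of which is non-negative by \Cref{Prop:Pos}. So it suffices to lower-bound the single contribution with $(p,q)=(k,0)$:
\[
\mathrm{III}_{k,0}=A(k,n-k)\,\left\llbracket \prod_{\ell=1}^{k}\tau_{ij}^{\ell}\, e^{-\eps_{0}\sigma_{0}}\right\rrbracket^{f}_{J,0}.
\]
This already explains the prefactor $A(k,n-k)$ in the claimed bound; everything that follows is devoted to converting this Grassmann expectation into a manifestly positive combination of horospherical $t$-field expectations of the form $\langle G_{00}^{n-k}{G^{(1)}_{ij}}^{\ell}{G^{(2)}_{ij}}^{k-\ell}\rangle_{J,\eps_{0}}$.

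The key tool is \Cref{L:GIBP}, which produces an expectation with $G_{00}^{n-k}$ and no $G^{(3)}_{ij}$ factor exactly when the Grassmann monomial carries $\pi_{0}^{\ell}$ factors at all components $\ell\in[k+1,n]$. Since $\mathrm{III}_{k,0}$ carries no such $\pi_{0}$ factors, I will manufacture them by iterating \Cref{E:I1}. Concretely, starting with a single $\sigma_{0}$ supplied by the identity $1=\sigma_{0}\cdot\sigma_{0}^{-1}$ (with $\sigma_{0}^{-1}$ expanded as a polynomial in the $\pi_{0}^{\ell}$'s via the Taylor series of $(1-2x)^{-1/2}$), the first application of \Cref{E:I1} with $L=k$ and $m=1$ produces a factor $\eps_{0}\pi_{0}^{k+1}$, accounting for the single $\eps_{0}$ in the stated lower bound. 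Successive applications of \Cref{E:I1} with $L=k+1,k+2,\dots,n-1$ on appropriate powers of $\sigma_{0}$ introduce the remaining $\pi_{0}^{\ell}$'s for $\ell\in[k+2,n]$ without additional $\eps_{0}$'s. Each application has a positive $\eps_{0}$ branch and a negative $(m-1)$ branch; by tracking only the contributions in which every component in $[k+1,n]$ eventually receives a $\pi_{0}$, one obtains a Grassmann monomial of the form $\prod_{\ell=1}^{k}\tau_{ij}^{\ell}\prod_{\ell\in A'}\pi_{0}^{\ell}\prod_{\ell=k+1}^{n}\pi_{0}^{\ell}$ for arbitrary $A'\subseteq[k]$, to which \Cref{L:GIBP} applies directly, yielding $\langle G_{00}^{n-k}{G^{(1)}_{ij}}^{k-|A'|}{G^{(2)}_{ij}}^{|A'|}\rangle_{J,\eps_{0}}$.

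The main obstacle is the combinatorial bookkeeping for $D_{n}(\ell;k)$. Summing over all $A'\subseteq[k]$ with $|A'|=\ell$ contributes a binomial factor $\binom{k}{\ell}$; the remaining factors $(n-k-1)!\,k!\,2^{n-\ell-2}/\ell!\cdot\binom{n-\ell-2}{k-\ell}$ must be read off from the compounded Pochhammer symbols produced by repeated differentiation of $\sigma_{0}$ together with the multinomial coefficients counting the possible insertion orders in the iterated expansion. I would carry this out by induction on $n-k$, maintaining the invariant that after $s$ iterations the Grassmann expectation is a sum of monomials indexed by subsets $S\subseteq[k+1,k+s]$ with coefficients expressible in closed form in terms of the iterated $(1/2)$-binomial expansion of $\sigma_{0}$; the positive contributions assemble into the claimed $D_{n}(\ell;k)$ and the negative ones are simply discarded, yielding a lower bound. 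The separate formula in the boundary case $k=n-1$ reflects the fact that only one expansion step is performed, so the $\Gamma(n-\ell-1/2)/[\Gamma(n-\ell)\Gamma(1/2)]$ factor from the first-order $(1/2)$-binomial expansion is not absorbed into an integer binomial and remains explicitly.
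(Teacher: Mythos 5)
You have the right architecture---single out the $(p,q)=(k,0)$ summand of Term $\mathrm{III}$ (legitimate, since \Cref{Prop:Pos} makes the rest non-negative), manufacture $\pi_0^{\ell}$ factors at the components $\ell\in[k+1,n]$ via the expansion identity \Cref{E:I1}, then apply \Cref{L:GIBP}. That much matches the paper. But the mechanism you propose for introducing the $\pi_0$'s has a genuine flaw, and the combinatorics that lead to $D_n(\ell;k)$ are not just ``bookkeeping'' to be deferred.

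The central problem is the ``$1=\sigma_0\cdot\sigma_0^{-1}$'' step. The hypothesis of \Cref{E:I1} requires that $F$ depend only on the first $L$ components, and the identity produces a factor $\pi_0^{L+1}$. If you write $\llbracket F\sigma_0\cdot\sigma_0^{-1}e^{-\eps_0\sigma_0}\rrbracket$ with $F=\prod_{\ell\leq k}\tau_{ij}^{\ell}$, the auxiliary factor $\sigma_0^{-1}$ already depends on \emph{every} component $\ell\in[n]$, so the hypothesis of \Cref{E:I1} fails for the combined integrand. You cannot apply it directly, and carefully splitting the Taylor expansion of $\sigma_0^{-1}$ component-by-component is exactly the content that needs to be proved. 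What the paper actually does is build two derived identities, \Cref{E:II1} and \Cref{E:II2}, by inserting $\sigma_0$ (or $\sigma_0^2=1-2\sum_\ell\pi_0^\ell$, which is \emph{linear} in the $\pi_0$'s and therefore does not create the dependency problem), applying \Cref{E:I1} once or twice, and recollecting. The proof then iterates \Cref{E:II2} until the monomial involves $n-k-1$ of the last $n-k$ components, and only then switches to \Cref{E:II1} to insert the final $\pi_0^n$; this two-phase structure is what produces the coefficients $b_J(k)$ and their resummation into $D_n(\ell;k)$. For $k=n-1$ the paper uses a separate, single-phase argument with the Gaussian-moment identity $\sigma_0(j)^{-1}=\E[e^{X^2\sum_{\ell\leq j}\pi_0^\ell}]$, which is close in spirit to what you propose but only applies because at most one ``$\sigma_0$-insertion'' is needed.

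A secondary issue: you say the negative branches of the iteration ``are simply discarded, yielding a lower bound.'' That inference needs the \emph{retained} branches to be non-negative after horospherical integration, which is precisely what \Cref{L:GIBP} together with the positive-coefficient expansion of $\sigma_0(j)^{-1}$ establishes. Your sketch does not make that check, and the claim that ``the positive contributions assemble into $D_n(\ell;k)$'' is asserted rather than derived. Finally, your attribution of the $\eps_0$ prefactor to the first application of \Cref{E:I1} with $m=1,L=k$ is misplaced; in the paper's scheme the $\eps_0$ branch of \Cref{E:II1} only enters at the last stage of the iteration (when the final component $\pi_0^n$ is inserted), and the first phase (via \Cref{E:II2}) drops its $\eps_0^2$ contribution entirely since it is already non-negative.
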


To  evaluate Term $\textrm{II}$ in a compact, implicit manner let 
\begin{align}
\label{E:D}&D_1=-\frac{b}{v }\partial _v+a \quad  D_2={\eps_0}\left[ a v-\frac{b}{v}-\eps_0b-b\partial_v\right]	\\
 &\CP_{L,m}(v; a, b)= D_1^{L} \cdot D_2^m \cdot 1. \end{align}
 Note that $\CP_{L, m}$ is \text{a rational function in $v$ and polynomial in $a, b$ of degree $L+m$}.
\begin{lemma}
\label{L:T2}
We have
\begin{align}
&\label{C:T2} \frac{\textrm{IV}}{Z_{J, \eps_0}}=\:\eps_0^{2(n-k)}\sum_{p=0}^{2k-n-1} C(p)\langle  [G^{(3)}_{ij} ]^{2(n-k)}\CP_{p,2k-n-p}(1; G^{(1)}_{ij}, G^{(2)}_{ij})\rangle_{\beta, \eps_0}\\
&\label{C:T3} \text{ and  if $0\leq  \eps_0\leq 1$ }  \left|\frac{\textrm{IV}}{Z_{J, \eps_0}}\right|\leq  \frac{\textrm{III}}{Z_{J, \eps_0}}
\end{align}
\end{lemma}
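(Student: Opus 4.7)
The plan is to prove \eqref{C:T2} by iteratively eliminating the factors $\phi(s)=\sigma_0(s)-\eps_0\pi_0^s$ via the Expansion Identity and then to translate to the $t$-field through \Cref{L:GIBP}; with \eqref{C:T2} in hand, \eqref{C:T3} will follow from a term-by-term comparison against the lower bound of \Cref{C:T1}, using $\eps_0 \leq 1$. Concretely, for \eqref{C:T2} I would fix $p \in \{0,\dots,2k-n-1\}$ and consider the corresponding summand $C(p)\eps_0^{n-p}\left\llbracket F_0 \prod_{s=p+1}^{2k-n}\phi(s) e^{-\eps_0 \sigma_0}\right\rrbracket^{f}_{J,0}$ with $F_0 = \prod_{\ell=1}^{2k-n}\tau_{ij}^{\ell}\prod_{m=2k-n+1}^{n}\eta_{ij}^{m}$. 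Because each $\eta_{ij}^{m}$ with $m > 2k-n$ already contains $\psi_{0}^{m}$, the nilpotency $F_0\, \pi_0^{m}=0$ for such $m$ allows one to replace $\sigma_0=\sigma_0(n)$ by $\sigma_0(2k-n)$ throughout without affecting the integral. Processing the factors $\phi(s)$ in succession by a $Q_{-}^{s}$-integration-by-parts analogous to \eqref{E:I1} converts each $\sigma_0(s)$ into a $\pi_0^{s}$ times a combination of lower-order $\sigma_0$-pieces; together with the $-\eps_0 \pi_0^{s}$ branch of $\phi(s)$, the two resulting contributions assemble into the action of the operators $D_1=-(b/v)\partial_v+a$ and $D_2=\eps_0[av-b/v-\eps_0 b -b\partial_v]$ on an auxiliary variable $v$ that tracks the remaining $\sigma_0(s)$. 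After all $2k-n-p$ factors have been consumed, \Cref{L:GIBP} identifies the resulting Grassmann expectation with a $t$-field expectation under the substitutions $a\leftrightarrow G^{(1)}_{ij}$, $b\leftrightarrow G^{(2)}_{ij}$, and evaluation at $v=1$, while the overall factor $(G^{(3)}_{ij})^{2(n-k)}$ arises from the $2(n-k)$ untouched $\eta_{ij}^{m}$ factors.

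For \eqref{C:T3}, both Term \textrm{III} (bounded from below by \Cref{C:T1}) and Term \textrm{IV} (given explicitly by \eqref{C:T2}) are now finite sums of $\langle \cdot \rangle_{J,\eps_0}$-expectations of polynomials in $G_{00}, G^{(1)}_{ij}, G^{(2)}_{ij}, G^{(3)}_{ij}$ with explicit coefficients, the Term \textrm{III} coefficients being manifestly non-negative. In the range $\lceil n/2 \rceil < k < n$ one has $2(n-k) \geq 1$, so the assumption $0 \leq \eps_0 \leq 1$ makes the prefactor $\eps_0^{2(n-k)}$ of Term \textrm{IV} dominated by the $\eps_0$-prefactor of Term \textrm{III}. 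It then remains to verify the combinatorial inequality that, for every $p$ and every monomial $a^{\ell} b^{k-\ell}$, the absolute value of the corresponding coefficient in $C(p)\,\CP_{p,2k-n-p}(1;a,b)$ is dominated by $A(k,n-k)\,D_n(\ell;k)$. I would check this by induction on $p$: each application of $D_2$ introduces a factor $\eps_0 \leq 1$ and at most a polynomial multiple of $a$ and $b$, while $D_1$ only redistributes degrees, so the factorial growth of $D_n(\ell;k)$ supplied by \Cref{C:T1} provides sufficient slack.

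The main obstacle lies in \eqref{C:T2}: the Expansion Identity \eqref{E:I1} as stated requires $F$ to be independent of component $L+1$, but the prefactor $F_0$ contains $\tau_{ij}^{L+1}$ which \emph{does} depend on component $L+1$ at sites $i,j$. A refined integration by parts at each step, tracking the extra contribution from $\tau_{ij}^{s}$, will be needed; verifying that this contribution is compatible with (and indeed produces) the advertised $D_1,D_2$-structure is the most delicate point. Part \eqref{C:T3} is more mechanical, but the monomial-by-monomial matching between $\CP$-coefficients and $D_n(\ell;k)$-weights, especially in view of the signs introduced by the $\partial_v$-pieces of $D_1$ and $D_2$, will also demand care.
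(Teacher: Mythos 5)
Your overall plan (express Term $\textrm{IV}$ in $t$-field variables, then dominate it term by term by the lower bound on $\textrm{III}$ using $\eps_0\le 1$) matches the shape of the paper's argument, but your route to \eqref{C:T2} is different from the paper's and has a gap you yourself flag; and the verification of \eqref{C:T3}, which is the bulk of the work, is left unexecuted.

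For \eqref{C:T2}, the paper does \emph{not} iterate the Expansion Identity \eqref{E:I1} inside the Grassmann state. It passes back to horospherical coordinates and performs the $2n$-component fermionic Gaussian integration directly: first it integrates out components $\ell\in\{2k-n+1,\dots,n\}$ (producing the factor $(G^{(3)}_{ij})^{2(n-k)}$), and then integrates components $\ell=2k-n,\dots,1$ one at a time, tracking the recursion that the $\phi(s)$ and $\tau^{\ell}_{ij}$ factors generate. The differential operators $D_1$ and $D_2$ of \eqref{E:D} are the component-by-component Gaussian contraction rules, not outputs of the Ward-identity machinery. Your attempt to run the argument through \eqref{E:I1} hits exactly the obstruction you describe: \eqref{E:I1} requires $F$ to be independent of component $L+1$, but each $\phi(s)$ sits next to a $\tau^s_{ij}$ that \emph{does} involve component $s$ at sites $i,j$. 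You propose ``a refined integration by parts at each step'' but do not carry it out, and it is not clear that such a refinement reproduces the $D_1,D_2$-structure. Your nilpotency observation (replacing $\sigma_0$ by $\sigma_0(2k-n)$) is correct but already built into the expression for $\phi(m')$ in \eqref{E:Moment1}.

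For \eqref{C:T3}, the claim that the $\eps_0^{2(n-k)}$-prefactor of $\textrm{IV}$ is ``dominated by the $\eps_0$-prefactor of $\textrm{III}$'' is not how the inequality is actually established; the paper proves Lemma \ref{L:CB}, which bounds the sum $d_n(\ell;k)$ of all $p$-contributions under $|\eps_0|\le 1$, not each $p$-term separately, and this requires nontrivial preparatory estimates on $C_n(p)$ (Lemma \ref{L:Cp}), a Hermite-polynomial rewriting of $D_2^m\cdot 1$, and careful manipulation of the resulting coefficients --- including a hand-check of $n\in\{4,5\}$. The assertion that ``the factorial growth of $D_n(\ell;k)$ \ldots\ provides sufficient slack'' is precisely the content of Lemma \ref{L:CB} and cannot be waved off as a routine induction on $p$. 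You also omit a necessary step: Term $\textrm{III}$ involves $G_{00}^{\,n-k}$-powers while Term $\textrm{IV}$ involves $(G^{(3)}_{ij})^{2(n-k)}$-powers, and one must invoke the pointwise inequality $(G^{(3)}_{ij})^2=G_{00}G^{(1)}_{ij}-G^{(2)}_{ij}\le G_{00}G^{(1)}_{ij}$ before any monomial-by-monomial comparison of coefficients can even make sense.
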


The proofs of \Cref{C:T1,L:T2} are presented in the next two subsections. However, let us remark that for $n/2<k\leq n-1$, Lemma \ref{C:T1} and Equation \eqref{C:T2} together show that under the weaker  assumption that $\eps_0$ is small as a function of $n$, Term \textrm{IV} can be bounded by Term \textrm{III}. To get a quantitative estimate of the dependence of $\eps_0$ on $n$, the main difficulty is to carefully understand $\CP_{p,2k-n-p}(1; a, b)$.  In particular, to emphasize the second statement $\left|\frac{\textrm{IV}}{Z_{J, \eps_0}}\right|\leq  \frac{\textrm{III}}{Z_{J, \eps_0}}$ for $\eps_0\leq 1 $ \textit{independent} of $n$.  This latter conclusion is demonstrated in \Cref{S:Comb}.  


\subsection{Proof of Lemma \ref{C:T1}}
\label{S:I}
\begin{proof}
Using \Cref{E:I1} we have the following identities:

\noindent
\textbf{Identity 1:}  If $F$ depends on at most the first $n-1$ components $(\bar{\psi}_i^{\ell}, {\psi}_i^{\ell})_{\substack{i \in V,\\ \ell \leq n-1}}$ then
\beq
\label{E:II1}
 \left \langle  F  \right \rangle^f_{J, \eps_0}=\eps_0 \left \langle  F   \pi_0^n \right \rangle^f_{J, \eps_0}+\sum_{j=1}^n  \left \langle  F \frac{\pi_0^j}{\sigma_0(j-1)} \right \rangle^f_{J, \eps_0}.
\eeq
This follows by applying \Cref{E:I1} to $ \left \langle  F \sigma_0 \right \rangle^f_{J, \eps_0}$ with $\sigma_0=Q^n_-\psibar ^n_{0}$.

\noindent
\textbf{Identity $2$:} If $F$ depends on, at most, the first $n-2$ components $(\bar{\psi}_i^{\ell}, {\psi}_i^{\ell})_{i \in V, \ell \leq n-2}$
\beq
\label{E:II2}
 \left \langle  F  \right \rangle^f_{J, \eps_0}=\eps_0^2 \left \langle  F   \pi_0^{n-1} \pi_0^n \right \rangle^f_{J, \eps_0}+\left \langle  F   \pi_0^n \right \rangle^f_{J, \eps_0} + 2 \sum_{j=1}^{n-1}  \left \langle  F {\pi_0^j} \right \rangle^f_{J, \eps_0}.
\eeq
This follows similarly by applying \Cref{E:I1} to $ \left \langle  F \sigma_0^2 \right \rangle^f_{J, \eps_0}$ with 
\[
\sigma_0^2+\pi_0^n=Q^n_-\cdot [\bar{\psi}^n_0 \sigma_0]=Q^n_-\cdot\left[ \bar{\psi}^n_0 \cdot [Q^{n-1}_- \cdot \bar{\psi}^{n-1}_0]\right].
\]

To prove the lemma, we distinguish two cases:  Either $\ceil{n/2} \leq k\leq n-2$ or $k=n-1$.  In both cases, we shall start by applying these identities with $F=\prod_{\ell=1}^k\tau_{ij}^{\ell}$.  Note that we may ignore the remaining contributions  to Term \textrm{III} since by Proposition \ref{Prop:Pos} they are positive.   
A convenient identity to use when working with these identities is
\[
\frac{1}{\sigma_0(j)}=\E[e^{X^2\sum_{\ell\leq j} \pi_0^\ell}]
\]
where $ X$ denotes a  $N(0,1))$ random variable and $\E$ refers to integration with respect to $X$. Then after horospherical integration, each summand on the RHS of \Cref{E:II1,E:II2} is non-negative (if this is too brief, the reader may look to the case $k=n-1$ below for more detail).

We take up the case $k\leq n-2$ first.
Since $\left\langle \prod_{\ell=1}^k\tau_{ij}^{\ell}   \pi_0^{n-1} \pi_0^n \right \rangle^f_{J, \eps_0}\geq 0$, \Cref{E:II2} implies
\[ 
\left \langle  \prod_{\ell=1}^k\tau_{ij}^{\ell}  \right \rangle^f_{J, \eps_0}\geq 2 \sum_{j=1}^{n-1}  \left \langle  \prod_{\ell=1}^k\tau_{ij}^{\ell} {\pi_0^j} \right \rangle^f_{J, \eps_0}
\]
We repeatedly apply \Cref{E:II2} to each term on the RHS until we reach an expression which depends on $n-k-1$ of the last $n-k$ components.
We obtain (using the exchangeability of the components)
\beq
\label{E:ee1}
\left \langle  \prod_{\ell=1}^k\tau_{ij}^{\ell}  \right \rangle^f_{J, \eps_0}\geq \sum_{J\leq k} b_J(k) 2^{J+n-k-1} \langle  \prod_{\ell=1}^k\tau_{ij}^{\ell} \prod_{m=k-J+1}^{n-1} \pi_0^m\rangle^f_{J, \eps_0}
\eeq
where 
\[
b_J(k)={J+n-k-2 \choose J} \frac{(n-k-1)!\cdot k!}{2\cdot (k-J)!}
\]
represents the number of ways of producing  a product $ \prod_{i=1}^{J+n -k-1} \pi_0^{m_i} $ such that exactly $J$ of the $m_i$'s are less or equal to  $k$ and the index $m_{J+n -k-1}>k$.

Once $n-k-1$ of the last $n-k$ components appear, we switch protocols and use \Cref{E:II1} to each summand on the RHS. 
We obtain, since the remarks on positivity remain in force for 
\[
F:=\prod_{\ell=1}^k\tau_{ij}^{\ell} \prod_{m=k-J+1}^{n-1} \pi_0^m
\] 
as well,
\[
\left \langle  \prod_{\ell=1}^k\tau_{ij}^{\ell} \prod_{m=k-J}^{n-1} \pi_0^m\right \rangle^f_{J, \eps_0}\geq \sum_{k-J \geq l} \frac{(k-J)!}{l!} \left \langle  \prod_{\ell=1}^k\tau_{ij}^{\ell} \prod_{m=l+1}^{n} \pi_0^m \right\rangle^f_{J, \eps_0}.
\]
Plugging this into the RHS of \Cref{E:ee1}, we have
\[
\left \langle  \prod_{\ell=1}^k\tau_{ij}^{\ell}  \right \rangle^f_{J, \eps_0}\geq \sum_{l\leq k} D_n(l;k)\left \langle  \prod_{\ell=1}^k\tau_{ij}^{\ell} \prod_{m=l+1}^{n} \pi_0^m\right \rangle^f_{J, \eps_0}
\]
where
\[
D_n(l;k)=  \frac{(n-k-1)!\cdot k!  }{l!}\sum_{J\leq k-l} {J+n-k-2 \choose J} 2^{J+n-k-2}. 
\]
Using \Cref{L:GIBP}, we have 
\[
\left \langle  \prod_{\ell=1}^k\tau_{ij}^{\ell} \prod_{m=k-l}^{n} \pi_0^m\right \rangle^f_{J, \eps_0}=\left \langle [G_{00}]^{n-k} {G^{(1)}_{ij}}^l {G^{(2)}}_{ij}^{k-l} \right \rangle_{J, \eps_0}
\]
Finally we have an appropriate lower bound on $D_n(l;k)$ simply by taking the term with largest index in the sum over $J$,
\[
D_n(l;k)\geq \frac{(n-k-1)!\cdot k! 2^{n-l-2} }{l!}{n-l-2 \choose k-l}.
\]

Turning to the case $k=n-1$, we insert the identity
\[
\frac{1}{\sigma_0(j)}=\E[e^{X^2\sum_{\ell\leq j} \pi_0^\ell}] \quad  (X\stackrel{d}{=} N(0,1)),
\]
to obtain
\[
\left \langle  \prod_{\ell=1}^{n-1}\tau_{ij}^{\ell}  \right \rangle^f_{J, \eps_0}=\left \langle  \pi_0^n \cdot \prod_{\ell=1}^{n-1}\tau_{ij}^{\ell}  \right \rangle^f_{J, \eps_0}+\sum_{j=1}^{n-1}\E\left[\left \langle \prod_{\ell=1}^{n-1}\tau_{ij}^{\ell} \cdot  \pi_0^j \cdot e^{X^2[ \sum_{\ell\leq j-1} \pi_0^\ell]} \right \rangle^f_{J, \eps_0}\right].
\]
By nilpotency of the $\pi_0^\ell$'s,
\[
e^{X^2[ \sum_{\ell\leq j} \pi_0^\ell]}=\sum_{J\leq j} \frac{X^{2J}}{J!} \left( \sum_{\ell\leq j} \pi_0^\ell \right)^J.
\]
Collecting terms according to the number of $\pi_0^{\ell}$'s appearing,
\[
\left \langle  \prod_{\ell=1}^{n-1}\tau_{ij}^{\ell}  \right \rangle^f_{J, \eps_0}=\left \langle  \pi_0^n \prod_{\ell=1}^{n-1}\tau_{ij}^{\ell}  \right \rangle^f_{J, \eps_0}+\sum_{m=0}^{n-1} a_m \left \langle   \prod_{\ell=1}^{n-1}\tau_{ij}^{\ell}  \prod_{\ell'=1}^m  \pi_0^{\ell'}\right \rangle^f_{J, \eps_0}
\]
where an empty product is treated as $1$ and
\[
a_{m}= \frac{[2(m-1)]!!}{(m-1)!}\sum_{m\leq j\leq n-1} \frac{(j-1)!}{(j-m)!}.
\]
This identity can be recursed for each summand such that $\pi_0^n$ does not appear.
We obtain
\[
\left \langle  \prod_{\ell=1}^{n-1}\tau_{ij}^{\ell}  \right \rangle^f_{J, \eps_0}=\sum_{m=0}^{n-1} D_{n}(n-m-1; n-1) \left \langle  \pi_0^n \prod_{\ell=1}^{n-1}\tau_{ij}^{\ell}  \prod_{\ell'=1}^m  \pi_0^{\ell'}\right \rangle^f_{J, \eps_0}.
\]
Here, with $j_0:=0$ and, for a given $i$-tuple of positive integers $j_1\dotsc j_i$ denoting $J_{i}=\sum_{t=0}^{i}j_t$,
\[
D_{n}(n-m-1; n-1)=\sum_{s\geq 0} \sum_{\substack{j_1 \dotsc  j_s\geq 1,\\ \sum_{i=1}^s j_i=m}} \prod_{i=0}^{s-1} a_{j_{i+1}, n-J_{i}-1}.
\]
We wish to estimate $D_{n}(n-m-1; n-1)$ from below.  The simplest option is to take the term $a_{m, n-1}$ to obtain
\[
D_{n}(n-m-1; n-1)\geq  \frac{2^{m-1}\Gamma(m+1/2)(n-2)!}{\Gamma(m+1)\Gamma(1/2)(n-m)!}
\]
\end{proof}

\subsection{Proof of Lemma \ref{L:T2}}
\label{S:II}
The starting point is the formula for Term \textrm{IV} of \Cref{E:Moment1}, which we recall here for the readers convenience:
\beq
\label{E:I6}
\frac{\textrm{IV}}{Z_{J\eps_0}}=\left \langle \prod_{\ell=1}^{2k-n} \tau_{ij}^{\ell} \times \prod_{m=2k-n+1}^{n} \eta_{ij}^{m}\times \left\{ \sum_{p=0}^{2k-n-1} C(p)\eps_0^{n-p}\prod_{m'=p+1}^{2k-n}\phi(m')\right\}  \right \rangle_{J, \eps_0}^f.
\eeq
To derive identity \eqref{C:T2}, on the RHS of \eqref{E:I6} we pass back to the horospherical representation and perform a fermionic Gaussian integration.  
In each summand, we first integrate out the components $\ell\in \{2k-n+1, \cdots, n\}$.  This yields a factor $[{G^{(3)}_{ij}}]^{2(n-k)}$.  Next we integrate the remaining components starting with $\ell=2k-n$ first and then proceeding backwards to $\ell=1$.  As a result of the component by component integration we have, after integrating out the components $2k-n, \cdots, s+1$, a recursively defined expression depending on the components $\ell=1, \dotsc, s$.  The expression factors as a product of two terms: an explicit  factor $T_s:=\prod_{m=p+1}^{s-1}\phi(m)$ independent of the $s^{th}$ component and another factor $Q_s$ which collects the dependence on the $s$'th component explicitly and through $\sigma(s)$.  To describe $Q_s$, recall $D_1, D_2$ from \Cref{E:D}.
With them evaluated at $a=G^{(1)}_{ij}, b=G^{(2)}_{ij}$ we compute that  
\[
Q_s=\begin{cases}
[{G^{(3)}_{ij}}]^{2(n-k)} \prod_{m=p+1}^s \phi(m) \times [D_2^{2k-n-s}\cdot 1|_{v=\sigma(s)}] \text{ if $s\geq p+1$,}\\
[{G^{(3)}_{ij}}]^{2(n-k)} \tilde{D}_1^{p-s} \tilde{D}_2^{2k-n-p}\cdot 1|_{v=\sigma(s)}]  \text{ if $s\leq p$.}
\end{cases}
\]
This is to be integrated with respect to the $t$-field and the remaining Gaussian components - there is $1$ real component and $s+1$ fermionic components.   The additional Gaussian components, 1 bosonic and 2 fermionic,  are due to the inversion of SUSY localization.  Note that the additional Gaussian components \textit{do not} appear in the delocalized versions of $T_s, Q_s$ since we may delocalize without
The first part of Lemma \ref{L:T2} thus follows. 

\subsubsection{Computing the $\CP$'s}
\label{S:Comb}
For the second part of Lemma \ref{L:T2},
let us consider 
\[
\CQ(v, a, b) :=\sum_{p=0}^{2k-n-1} C(p) \CP_{p,2k-n-p}(v; a, b).
\]
as remarked before,  this is a polynomial in $a, b$. 
Let $r=a/b$.  Then setting $u=\sqrt{r} v$, we have
\begin{align*}
&D_1=\frac{b r}{u} [u-\partial_u],\\
&D_2=\frac{\eps_0b\sqrt{r}}{u} [u - \partial_u-\frac{\eps_0}{\sqrt{r}} ]\cdot u.
\end{align*}
This leads to
\[
D_2^m \cdot 1=[\eps_0b\sqrt{r}]^m \left\{ H_{m+1}(u -\frac{\eps_0}{\sqrt{r}})+\frac{\eps_0}{\sqrt{r}} H_{m}(u -\frac{\eps_0}{\sqrt{r}})\right\},
\]
where $H_r$ are the Hermite polynomials.
Therefore 
\[
\CQ(v; a, b) =\sum_{p=0}^{2k-n-1} C(p)[\eps_0b\sqrt{r}]^{2k-n-p} [br]^p D_1^p \cdot \frac 1u \left\{H_{2k-n-p+1}(u -\frac{\eps_0}{\sqrt{r}})+\frac{\eps_0}{\sqrt{r}} H_{2k-n-p}(u -\frac{\eps_0}{\sqrt{r}})\right\}
\]
Recall that 
\[
H_m(z+y)=\sum_{j=0}^m {m \choose j}z^j H_{m-j}(y)
\]
so
\begin{align*}
&\CQ(1; a, b) =\\
&b^{2k-n}\sum_{p=0}^{2k-n-1} \sum_{j=0}^{2k-n-p+1} C(p)\eps_0^{2k-n-p}  \sqrt{r}^{2k-n-p+j-1}{2k-n-p+1 \choose j}2^p \left(\frac{j-1}{2}\right)_p H_{2k-n-p+1-j}(-\frac{\eps_0}{\sqrt{r}}) \\
&-b^{2k-n} \sum_{p=0}^{2k-n-1} \sum_{j=0}^{2k-n-p} C(p) \eps_0^{2k-n-p+1}   \sqrt{r}^{2k-n-p+j-2}{2k-n-p \choose j}2^p \left(\frac{j-1}{2}\right)_p H_{2k-n-p-j}(-\frac{\eps_0}{\sqrt{r}})
\end{align*}
where $(x)_j=x(x-1)\cdots(x-j+1)$ is the descending factorial and if $j=0$, we set $(x)_j=1$. 
Now we collect terms according to powers in $r$.  Expanding the Hermite polynomials using the variable $m$, and denoting $t_p=2k-n-p$,
\begin{align*}
\CQ(1; a, b) =&b^{2k-n}\sum_{p=0}^{2k-n-1} \sum_{j=0}^{t_p+1} \sum_{m=0}^{t_p+1-j} (-1)^m c_{m;t_p+1-j} C_n(p)\eps_0^{t_p+m}  \sqrt{r}^{t_p+j-1-m}{t_p+1 \choose j}2^p \left(\frac{j-1}{2}\right)_p  \\
&-b^{2k-n} \sum_{p=0}^{2k-n-1} \sum_{j=0}^{t_p} \sum_{m=0}^{t_p-j}  (-1)^m  c_{m;t_p-j}C_n(p) \eps_0^{t_p+m+1}   \sqrt{r}^{t_p+j-2-m}{t_p \choose j}2^p \left(\frac{j-1}{2}\right)_p
\end{align*}
where
\[
c_{w;r}=\begin{cases}
0 \quad \text{ if } r-w \text{ is odd or $w<0$,}\\
\frac{r!}{w!2^{(r-w)/2}\frac{r-w}{2}!}\quad  \text{ else.}
\end{cases}
\]

Recalling that $\CQ(1; a, b)$ must be a polynomial of homogenous degree $2k-n$, the only terms which ultimately contribute satisfy the conditions
\begin{align*}
0 \leq t_p+j-1-m\leq 2k-n, \quad t_p+j-1-m\in 2\Z \text{ from the first sum,}\\
0 \leq t_p+j-2-m\leq 2k-n, \quad t_p+j-2-m\in 2\Z \text{ from the second sum.}
\end{align*}
So
\[
\CQ(1; a, b)=\sum_{\ell=0}^{2k-n-1} d_n(\ell;k) a^{\ell} b^{2k-n-\ell}
\]
where
\begin{align}
\label{E:Bound1}&d_n(\ell;k)=C_n(2k-n-\ell)\eps_0^{\ell}  2^{2k-n-\ell} \left(\frac{\ell}{2}\right)_{2k-n-\ell}+ \\
&\sum_{p=0}^{2k-n-1} \sum_{j=0}^{t_p}(-1)^{t_p+j-1} C_n(p)\eps_0^{2t_p+j-1-2\ell}  2^p \left(\frac{j-1}{2}\right)_p
\nonumber \left\{ c_{t_p+j-2-2\ell;t_p-j} {t_p \choose j} + c_{t_p+j-1-2\ell;t_p+1-j} {t_p+1 \choose j}\right\}.
\end{align}

The proof of Lemma \ref{L:T2} is then concluded by verifying the following estimates.
\begin{lemma}
\label{L:CB}
For all $n\geq 4$ and all $\ceil{n/2}\leq k\leq n-1, 0\leq \ell \leq 2k-n-1$ and all $|\eps_0|\leq 1$
 \beq
 \label{E:A1}
A(k, n-k) D_{n-k+\ell} (k)\geq  |d_n(\ell;k)|
\eeq
\end{lemma}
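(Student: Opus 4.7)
The plan is to bound $|d_n(\ell;k)|$ coefficient-by-coefficient using the factorial lower bounds on $A(k,n-k)$ and $D_n(n-k+\ell;k)$ supplied by \Cref{C:T1}. The reason this is the correct comparison is dictated by the pointwise inequality $[G^{(3)}_{ij}]^2\leq G^{(1)}_{ij}\,G_{00}$, which follows from $G^{(2)}_{ij}\geq 0$; raising it to the $(n-k)$-th power yields
\[
[G^{(3)}_{ij}]^{2(n-k)}\,[G^{(1)}_{ij}]^{\ell}\,[G^{(2)}_{ij}]^{2k-n-\ell}\;\leq\;G_{00}^{n-k}\,[G^{(1)}_{ij}]^{n-k+\ell}\,[G^{(2)}_{ij}]^{k-(n-k+\ell)},
\]
which is exactly the $x=n-k+\ell$ summand of the horospherical lower bound for Term III obtained in \Cref{C:T1}. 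Combined with $|\eps_0|^{2(n-k)}\leq |\eps_0|$ (valid for $|\eps_0|\leq 1$ since $n-k\geq 1$), this reduction shows that \Cref{L:CB} implies $|\textrm{IV}|/Z_{J,\eps_0}\leq \textrm{III}/Z_{J,\eps_0}$, which is the second assertion of \Cref{L:T2}.

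The first step toward \Cref{L:CB} is to verify that every surviving term in~\eqref{E:Bound1} carries a nonnegative power of $\eps_0$. The leading summand has exponent $\ell\geq 0$; for the double-sum contributions, the requirement $c_{t_p+j-2-2\ell;t_p-j}\neq 0$ forces $j\geq 2\ell+2-t_p$, whence $2t_p+j-1-2\ell\geq t_p+1\geq 1$, and an analogous argument handles the second $c$-factor. Hence for $|\eps_0|\leq 1$ one may replace every $|\eps_0|^{\bullet}$ by $1$ and apply the triangle inequality to bound $|d_n(\ell;k)|$ by a sum of positive combinatorial quantities.

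The second step consists of coarse atomic estimates: $c_{w;r}\leq r!/w!$, $\bigl|(\tfrac{j-1}{2})_p\bigr|\leq p!$ up to an $O(1)$ factor, $C_n(p)\leq \binom{n}{p}(2n-1)^p$, and the trivial lower bound $A(k,n-k)\geq (2n-1)(2n-3)\cdots(2(n-k)+1)$ obtained by retaining the decreasing sequence $(n,n-1,\ldots,n-k+1)$ in $\CI_n(k,n-k)$. Combined with the explicit $D_n(n-k+\ell;k)\geq\frac{(n-k-1)!\,k!\,2^{k-\ell-2}}{(n-k+\ell)!}\binom{k-\ell-2}{2k-n-\ell}$ from \Cref{C:T1} (or its $k=n-1$ variant when $n-k+\ell=k-1$), the product $A(k,n-k)\,D_n(n-k+\ell;k)$ exhibits factorial growth in $n$ that is expected to majorize the sum bounding $|d_n(\ell;k)|$ from Step~1.

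The main obstacle is the combinatorial bookkeeping needed to make the domination in the previous step rigorous. The double sum in $|d_n(\ell;k)|$ involves interlocked powers of $2^p$, descending factorials of varying sign, two binomial coefficients in $t_p$ and $j$, and the $c$-factors themselves containing factorial ratios and powers of $2$. A workable organization is to first perform the $j$-sum at fixed $p$ in closed form (using a Vandermonde-type identity or direct evaluation of the two shifted binomial sums), then bound the remaining $p$-sum against its $p=0$ term—which captures the leading $(j-1)/2)_p \equiv 1$ contribution—and finally compare against the factorial lower bound on $AD_n$ via an induction on $k-\ell$, the degree in $G^{(2)}_{ij}$. The hypothesis $n\geq 4$ keeps the lower bounds in \Cref{C:T1} nondegenerate; for $n\leq 3$ the range of $(k,\ell)$ in \Cref{L:CB} is empty, so no separate argument is required.
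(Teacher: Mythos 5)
Your write-up is a plan rather than a proof: you identify the target inequality, make some preliminary observations, announce a collection of ``atomic estimates,'' and then explicitly defer the main step (``the main obstacle is the combinatorial bookkeeping needed to make the domination rigorous \ldots a workable organization is \ldots''). Since you never carry out that bookkeeping, there is no verification of \eqref{E:A1} for any range of $(n,k,\ell)$. The paper, in contrast, actually executes the comparison: it splits $d_n(\ell;k)$ into the leading piece $b_n(\ell,k)=C_n(2k-n-\ell)\eps_0^\ell 2^{2k-n-\ell}(\ell/2)_{2k-n-\ell}$ plus the remainder, bounds each against $A(k,n-k)D_n(n-k+\ell;k)$ separately using the exact Gamma-function form $A(k,n-k)=2^k\Gamma(n+1/2)/\Gamma(n-k-1/2)$ and the precise growth estimates from \Cref{L:Cp} (notably $2^{3-n}C_n(n-3)\leq\Gamma(n+1/2)/\Gamma(1/2)$), distinguishes $k\le n-2$ from $k=n-1$, and checks $n\in\{4,5\}$, $k=n-1$ by hand via \eqref{E:Bound1}.

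Beyond incompleteness, two of your proposed estimates would not close the gap.  First, $C_n(p)\leq\binom{n}{p}(2n-1)^p$ is far too coarse: it gives $C_n(n-3)\lesssim\binom{n}{3}(2n-1)^{n-3}$, which for large $n$ exceeds $\Gamma(n+1/2)$ by an exponential factor, so the product $A(k,n-k)\,D_n(n-k+\ell;k)$ has no chance of dominating the resulting bound on $|d_n(\ell;k)|$.  The paper's argument works precisely because the much sharper Lemma~\ref{L:Cp} relations (e.g.\ $C_n(p)\geq[2(n-p)-1]C_n(p-1)$ and equality of $C_n(n)=C_n(n-1)$) keep $C_n$ comparable to $\Gamma(n+1/2)$.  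Second, $|((j-1)/2)_p|\leq p!$ ``up to $O(1)$'' is not correct uniformly in the relevant ranges; the paper instead controls this factor by $4M_\ell$ where $M_\ell=|(\ell/2)_{2k-n-\ell}|\vee|((\ell+1)/2)_{2k-n-\ell}|$, a quantity designed to cancel against the $(\ell/2)_{2k-n-\ell}$ appearing in $b_n(\ell,k)$. Finally, a minor but real error: for $n=3$ the admissible range $\lceil n/2\rceil\leq k\leq n-1$, $0\leq\ell\leq 2k-n-1$ is \emph{not} empty (it contains $(k,\ell)=(2,0)$); the lemma simply is not stated there, and the paper handles $n\leq 3$ via the positivity of the $n$-th moment without invoking \Cref{L:CB}.
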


To prove this lemma, we first estimate the growth of the $C_n(p)$'s.
\begin{lemma}
\label{L:Cp}
Let $n\geq 4$ be fixed.  We have
\begin{align}
\label{E:Cp}
&C_n(p)\geq [2(n-p)-1] C_{n}(p-1) \quad  \text{ for all $0 \leq p \leq n-3$,}\\
\nonumber &C_n(n)=C_n(n-1)\geq 2C_n(n-2),\\
\nonumber &C_n(n-2)\geq  \frac{15}{7} C_n(n-3).
\end{align}
\end{lemma}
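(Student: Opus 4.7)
My plan is to derive a closed-form expression for $C_n(p)$ and then reduce each of the three inequalities to an explicit algebraic comparison.

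The starting point is a recursion obtained by partitioning $\CI_n(p,1)$ according to the value of $i_1$: either $i_1=n$, in which case the tail $(i_2,\ldots,i_p)$ is in bijection with $\CI_{n-1}(p-1,1)$ (the parity of each $i_s$, $s\geq 2$, is precisely the parity required by $\CI_{n-1}$) and contributes a factor $(2n-1)$; or $i_1\leq n-2$, in which case the whole sequence lies in $\CI_{n-2}(p,1)$ since all parity constraints are unchanged modulo $2$. This produces
\[
C_n(p)\;=\;(2n-1)\,C_{n-1}(p-1)+C_{n-2}(p),\qquad C_n(0)=1,
\]
and a short induction on $n$ (using the Pascal-type identity $(n-p)\binom{n+p-2}{2p-2}=(2p-1)\binom{n+p-2}{2p-1}$ to close the step) then yields the closed form
\[
C_n(p)\;=\;\binom{n+p}{2p}\,(2p-1)!!.
\]

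With this formula, parts (2) and (3) are direct. Plugging in gives $C_n(n)=C_n(n-1)=(2n-1)!!$ and $C_n(n-2)=(n-1)(2n-3)!!$, from which the identity $C_n(n-1)/C_n(n-2)=(2n-1)/(n-1)\geq 2$ is immediate. Likewise one computes $C_n(n-2)/C_n(n-3)=3(n-1)/(n-2)$, and the inequality $3(n-1)/(n-2)\geq 15/7$ reduces to $6n+9\geq 0$, valid for all $n\geq 1$.

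Part (1) is the substantive content. The formula yields
\[
\frac{C_n(p)}{C_n(p-1)}\;=\;\frac{(n+p)(n-p+1)}{2p},
\]
so the claim becomes the polynomial inequality $(n-p)(n-3p)+(n+3p)\geq 0$. I would verify this in two regimes: when $n\geq 3p$ both summands are non-negative, so the bound is trivial; in the complementary range $p<n<3p$, I would introduce $q=n-p\geq 3$ (using the hypothesis $p\leq n-3$) and rewrite the inequality as a quadratic in $p$ to locate its discriminant, exploiting $n\geq 4$. I expect this complementary regime to be the main obstacle, since the inequality is tight there (approaching equality at small integer points near $p\approx n/2$), and the range condition $p\leq n-3$ must be used sharply rather than slackly to push the quadratic through.
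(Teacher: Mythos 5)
Your recursion $C_n(p)=(2n-1)C_{n-1}(p-1)+C_{n-2}(p)$ agrees with the paper's, and the closed form $C_n(p)=\binom{n+p}{2p}(2p-1)!!$ is indeed correct: the verification of the recursion collapses, after cancelling factorials, to $(n+p-1)+(n-p)=2n-1$, and the base cases $C_2(1)=C_2(2)=3$, $C_3(1)=6$, $C_3(2)=C_3(3)=15$ match. This closed form is a genuinely different and cleaner route than the paper's two-variable induction (which never computes $C_n(p)$ explicitly and relies on several auxiliary inequalities), and your one-line verifications of parts (2) and (3) via the ratios $\frac{C_n(n-1)}{C_n(n-2)}=\frac{2n-1}{n-1}\geq 2$ and $\frac{C_n(n-2)}{C_n(n-3)}=\frac{3(n-1)}{n-2}\geq\frac{15}{7}$ are correct.

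Part (1), however, cannot be completed, because the reduced inequality $(n-p)(n-3p)+(n+3p)\geq 0$ is \emph{false} precisely in the complementary regime you flag. Take $n=10$ and $p=n-3=7$: $(3)(-11)+31=-2<0$, or equivalently $C_{10}(7)/C_{10}(6)=\frac{(n+p)(n-p+1)}{2p}=\frac{34}{7}<5=2(n-p)-1$. In fact at $p=n-3$ the ratio equals $\frac{2(2n-3)}{n-3}$, which falls below $5$ for every $n\geq 10$, so the hypothesis $p\leq n-3$ cannot save the quadratic — it genuinely changes sign inside the allowed range. You correctly identified this regime as the obstacle, but your own formula shows the obstacle is not a matter of being cleverer with the range condition: part (1) of the lemma is false as stated. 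For what it's worth, the paper's proof of (1) is also broken at this point: the identity asserted in \eqref{E:Cn} does not hold — substituting the recursion for $(2n-1)C_{n-1}(p-2)$ one finds $(2n-1)[2(n-p)-1]C_{n-1}(p-2)+[2(n-p-2)-1]C_{n-2}(p-1)=[2(n-p)-1]C_n(p-1)-4C_{n-2}(p-1)$, which is \emph{strictly smaller} than the target $[2(n-p)-1]C_n(p-1)$ — and the telescoping formula \eqref{Cn-2} asserts $C_n(n-2)-C_n(n-3)=\tfrac{4}{15}C_n(n)$, whereas your closed form gives $\tfrac{1}{3}C_n(n)$ (at $n=4$: $45-10=35\neq 28$). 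So neither argument closes part (1), and your closed form makes it transparent why.
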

\begin{proof}
The conditions on the $p$-tuples in $\CI_n(p, 1)$
imply that
\[
C_n(p)=(2n-1)C_{n-1}(p-1)+C_{n-2}(p)
\]
By induction, we see immediately that $C_n(p)$ increases in $p$.
To prove the lemma we induct on hypothesis that the stated bound holds for $n'< n$ and all $p$ and also for $n$ and $p'<p$.  Recall that $C_n(0)=1$, $C_n(1)\geq 2n-1$, so the bound clearly holds with $p=1$.

Before proceeding to verify the induction step, we need to make some observations:  First, by definition of the sets $\CI_n(p)$,
\begin{align}
&\nonumber C_n(n)=C_n(n-1)=A_n(1)=A_n(3)\\
&\label{E:n-n-2} C_n(n-2)=C_{n}(n) \sum_{j=2}^{n}[2j-1]^{-1}[2j-3]^{-1}\leq \frac 12 C_n(n)
\end{align}
since we have the classical identity
\[
\sum_{j=2}^{n}[2j-1]^{-1}[2j-3]^{-1}=\frac 12 -\frac 1{2(2n-1)}.
\]
Second, by a simple induction $C_n(p)\geq 2 C_n(p-1)$ for all $n\geq 2$ and $p\leq n-3$.
To verify this, we check by hand that 
\[
C_2(2)=C_2(1)=3, C_3(3)=C_3(2)= 15, C_3(1)=6. 
\] Also, by definition of $\CI_n(p)$,
\beq
\label{Cn-2}
C_n(n-2)- C_{n}(n-3)=\frac{4 C_n(n)}{3\cdot 5}
\eeq
so that we have
\[
C_n(n-2)= \frac{1-\frac 1{2n-1}}{1-\frac 1{2n-1}-\frac{8}{15}} C_n(n-3).
\]
This implies $C_n(n-2)\geq  \frac{15}{7} C_n(n-3)$ for all $n\geq 4$ (and this bound also holds for $n=3$).

We are now ready to verify the  induction step in the proof of the main lemma. We induct on hypothesis that the stated bound holds for $n'< n$ and all $p$ and also for $n$ and $p'<p$.  Then we have
\begin{align}
\nonumber C_{n}(p)=& (2n-1)C_{n-1}(p-1)+C_{n-2}(p)\\
\nonumber {\geq}& (2n-1)[2(n-p)-1]C_{n-1}(p-2)+[2(n-p-2)-1]C_{n-2}(p-1)\\
\label{E:Cn}=& [2(n-p)-1] C_{n}(p-1)+[2(n-p)-1][C_{n-2}(p)-C_{n-2}(p-1)]-2C_{n-2}(p)
\end{align}
where the inequality follows from the induction hypothesis.  Now since $p\leq n-3$, $n-p\geq 3$ so that $2(n-p)-1\geq 5$.  Also, by the \textit{a priori} estimate $C_{n-2}(p)\geq 2C_{n-2}(p-1)$.  It follows that the last two terms on the RHS of \eqref{E:Cn} sum to something non-negative.  Thus the induction step is proved.
\end{proof}

\begin{proof}[Proof of Lemma \ref{L:CB}]
Let 
$$b_n(\ell,k)=C_n(2k-n-\ell)\eps_0^{\ell}  2^{2k-n-\ell} \left(\frac{\ell}{2}\right)_{2k-n-\ell}.$$
Recall the formula for the Hermite coefficients $c_{r;s}$ which provide also constraints on the summands of $d_{\ell}(k)$ to be non-zero.
By the triangle inequality, the assumption $|\eps_0|\leq 1$ and Lemma \ref{L:Cp}
\[
|d_n(\ell;k)-b_n(\ell,k)|\leq  \sum_{p=0}^{2k-n-\ell-1}\sum_{j=[2\ell+1-t_p]\vee 0}^{\ell+1} 2^{2p-n+4}  \frac{ C_n(n-3)\Gamma(\frac 52)|((j-1)/2)_p|(t_p+1)!}{([t_p+j-2-2\ell]\vee 0)! (\ell+1-j)!j!\Gamma(n-p-1/2)}.
\]
With foresight on what our final comparison will be, we also bound
\[
 \frac{1}{(\ell+1-j)!j!}\leq 2^{\ell+1}\frac 1{(\ell+1)!}.
\]
to obtain
\[
|d_n(\ell;k)-b_n(\ell,k)|\leq   \sum_{p=0}^{2k-n-\ell-1}\sum_{j=[2\ell+1-t_p]\vee 0}^{\ell+1} 2^{2p+\ell-n+5}  \frac{C_n(n-3)\Gamma(\frac 52)|((j-1)/2)_p|(t_p+1)!}{(t_p+j-1-2\ell)!(\ell+1)!\Gamma(n-p-1/2)}.
\]

To estimate the RHS of this inequality, observe that for $j$ fixed 
\[
\frac{1}{(t_p+j-2-2\ell)!}, \quad \frac{(t_p+1)!}{\Gamma(n-p-1/2)}
\]
are both increasing in $p$. Denoting $M_\ell=|(\ell/2)_{2k-n-\ell}|\vee  |((\ell+1)/2)_{2k-n-\ell}|$ we also have
\[
|((j-1)/2)_p| \leq 4M_{\ell}
\]
for $j \leq \ell+1$ and $p\leq 2k-n+j-1-2\ell$. 
Exchanging the sums over $j$ and $p$, these bounds imply
\[
\nonumber |d_n(\ell;k)-b_n(\ell,k)|\leq  \label{E:S1}\sum_{ j=0}^{\ell+1}  2^{4k-3n+2j-3\ell+7}  \frac{C_n(n-3){\Gamma(\frac 52)}M_{\ell} (2\ell+1-j)!}{(\ell+1)!\Gamma(2(n-k)+2\ell-j+1/2)}.
\]
Since 
\[
\frac{(2\ell+1-j)!}{\Gamma(2(n-k)+2\ell-j+1/2)}
\]
is increasing in $j$, the RHS of \eqref{E:S1} can be bounded by
\[
 |d_n(\ell;k)-b_n(\ell,k)|\leq  2^{4k-3n-\ell+7}  \frac{C_n(n-3){\Gamma(\frac 52)}M_{\ell}}{(\ell+1)\Gamma(2(n-k)+\ell-1/2)}
\]
Finally, by Lemma \ref{L:Cp},
\[
|b_n(\ell,k)|\leq  2^{-2(n-k)+3-\ell} \frac{C_n(n-3)M_{\ell} \Gamma(\frac 52)}{\Gamma(2(n-k)+\ell-1/2)}.
\]

Let us now check the statement of the Lemma. For $k\leq n-2$ we can write
\[
D_n(\ell;k) A(k, n-k)\geq \frac{(n-k-1)!\cdot k! 2^{k+n-\ell-2} }{\ell!}{n-\ell-2 \choose k-\ell}\frac{\Gamma(n+1/2)}{\Gamma(n-k-1/2)}
\]
where we used Lemma \ref{C:T1} and
\[
A(k, n-k)=\frac{2^k \Gamma(n+1/2)}{\Gamma(n-k-1/2)}.
\]
We thus have
\[
\frac{|b_n(\ell;k)|}{D_n(\ell;k) A(k, n-k)}\leq 2^{k-3n+5} \frac{\Gamma(\frac 52)C_n(n-3){\Gamma(n-k-1/2)}}{\Gamma(n-k)\cdot k!\cdot  {n-\ell-2 \choose k-\ell} \Gamma(n+1/2)}   \frac{\ell!M_{\ell}}{\Gamma(2(n-k)+\ell-1/2)}
\]
and
\[
\frac{|d_n(\ell;k)-b_n(\ell,k)|}{D_n(\ell;k) A(k, n-k)}\leq 2^{3(k-n)-n+9 }\cdot \frac{\Gamma(\frac 52)C_n(n-3)\Gamma(n-k-1/2)}{\Gamma(n-k) k!{n-\ell-2 \choose k-\ell}\Gamma(n+1/2)} \frac{\ell!M_{\ell}}{(\ell+1)\Gamma(2(n-k)+\ell-1/2)}.
\]
We combine the two previous estimates with the fact that
\[
2^{3-n}C_n(n-3)\leq \frac{\Gamma(n+1/2)}{\Gamma(1/2)}
\]to obtain
\[
\frac{|d_n(\ell;k)|}{D_n(\ell;k) A(k, n-k)}\leq \frac{\Gamma(\frac 52)}{\Gamma(n-k)\cdot k!\cdot  {n-\ell-2 \choose k-\ell} }   \frac{ \ell!M_{\ell}}{\Gamma(2(n-k)+\ell-1/2)}\left\{ 2^{k-2n+2}+ \frac{2^{3(k-n)+6 }}{\ell+1}\right \} .
\]
Continuing with the RHS, we have
\begin{multline}
 \frac{\Gamma(\frac 52)}{\Gamma(n-k)\cdot k!\cdot  {n-\ell-2 \choose k-\ell} }   \frac{ \ell!M_{\ell}}{\Gamma(2(n-k)+\ell-1/2)}\left\{ 2^{k-2n+2}+ \frac{2^{3(k-n)+6 }}{\ell+1}\right \} \\
 \leq  \frac{\Gamma(\frac 52)}{\Gamma(n-k)\cdot k! } \frac{ (2k-n)!}{\Gamma(2(n-k)-1/2)}\left\{ 2^{k-2n+2}+ \frac{2^{3(k-n)+6 }}{\ell+1}\right \} 
 \leq  \frac{ (2k-n)!}{{ 5 \cdot k! }}\left\{ 2^{-n}+ 1\right \}\leq 1
\end{multline}
where in the last inequality we used the fact that $n\geq 4$ and $\ceil{n/2}\leq k\leq n-2$.
This verifies the Lemma in case $k\leq n-2$.  

If $k=n-1$, the estimate proceeds in the same way except that we replace the lower bound on $D_n(\ell;k)$ by
\[
D_n(\ell, n-1)\geq \frac{2^{n-\ell-2}\Gamma(n-\ell-1/2)(n-2)!}{\Gamma(n-\ell)\Gamma(1/2)\ell!}.
\]
The estimate gets slightly tighter.
We find
\[
 |d_n(\ell, n-1)-b_n(\ell,n-1)|\leq  2^{n-\ell+1}  \frac{C_n(n){\Gamma(\frac 52)}M_{\ell}}{(\ell+1)\Gamma(2+\ell-1/2)}
\]
and 
\[
|b_n(\ell,n-1)|\leq  2^{1-\ell} \frac{C_n(n)M_{\ell} \Gamma(\frac 52)}{\Gamma(2+\ell-1/2)}.
\]
We end up with
\begin{align*}
\frac{|d_n(\ell;n-1)|}{D_n({\ell},n-1) A(n, 1)}\leq & \frac{3}{2}  \frac{\sqrt{n-\ell} M_{\ell}}{(n-2)!\sqrt{\ell+1/2}}\left\{2^{2-n} +\frac{2^3}{\ell+1} \right\}\\
\leq & \frac{3}{8}  \frac{\sqrt{2n}}{(n-2)(n-3)}\left\{2^{2-n} +2^3\right\}
\end{align*}
For $n\geq 6$ the last expression is bounded by $1$. For $n\in \{4, 5\}$ and $k=n-1$, one must unfortunately return to \eqref{E:Bound1} and compute explicitly.  We find
\[
\begin{array}{lll}
|d_4(0;3)|\leq 5/2+2^6;
&|d_4(1;3)|= 0;&\\
A(4, 1)= 7\cdot 5\cdot 3;&&\\
|d_5(0;4)|\leq 3^3+5^2 \cdot 4;
&|d_5(1;4)|\leq 2^5 ;
&|d_5(2;4)|= 0;\\
A(5, 1)=3^3\cdot 7\cdot 5.&&
\end{array}
\]
Since $d_{n}(\ell; n-1)>1$ in the nonzero cases, these estimates complete the Lemma in the final two cases.
\end{proof}

\section{Bounds on Fourier Transforms:  Proof of Theorem \ref{T:Main}}
\label{S:Fourier}
In the setting of Theorem \ref{T:Main}, $J_{jj'}=\beta I_{jj'}$, $\rho_j$ is any function which is $j=0$ at $0$ and $\min_{(jj')} \cos(\rho_j-\rho_{j'})>0$.  We will shortly specify a particular choice by optimizing an upper bound on the Fourier transform of $t_m-t_{\ell}$.  By \Cref{L:FT} and \Cref{T:Main2}, we have
\[
\left|\left \langle e^{ik(t_m-t_\ell)} \right \rangle_{\Lambda, a, J, \varepsilon_0} \right|\leq \prod_{(jj')} \cos(\rho_j-\rho_j')^{{-a}} e^{ \sum_{(jj')\in \Lambda } J_{jj'}[1- \cos(\rho_j-\rho_j')]}  e^{-k[\rho_m-\rho_\ell]}.
\]
Using the bound $ 1-x^2/2\leq \cos(x) $,
\[
\left|\left \langle e^{ik(t_m-t_\ell)} \right \rangle_{\Lambda, a, J, \varepsilon_0} \right|\leq \prod_{(jj')} \cos(\rho_j-\rho_j')^{{-a}} e^{ \sum_{(jj')\in \Lambda } \beta (\rho_j-\rho_j')^2/2}  e^{-k[\rho_m-\rho_\ell]}.
\]
Since $\cos(x)\geq 1-x^2/2$ and using the bound $\frac{1}{1-u}\leq e^{bu}$ if $b>1$ and  $0\leq u< \frac{b-1}{b}$, we have
\[
\left|\left \langle e^{ik(t_m-t_\ell)} \right \rangle_{\Lambda, a, J, \varepsilon_0} \right|\leq e^{ \sum_{(jj')\in \Lambda } [\beta+ba] (\rho_j-\rho_j')^2/2}  e^{-k[\rho_m-\rho_\ell]}.
\]
if we restriction attention to  $\rho$'s so that $\|\nabla\rho \|_{\infty}<\sqrt{\frac{b-1}{b}}$.  

We would like to optimize over $\rho$, except that for $k$ large one of these optimizers - approximately a solution to
\[
-\Delta_{\Lambda}\cdot  \rho=\frac{k}{\beta+a}[\delta_m-\delta_\ell] 
\]
- may not satisfy $\cos(\nabla \rho)>0$ or $\|\nabla\rho \|_{\infty}<\sqrt{\frac{b-1}{b}}$.

To get around this issue, we cut the expected optimizer off.  Let $c=\frac 12\sqrt{\frac{b-1}{b}}$and let 
\[
B_x=\{j: \frac{|k|}{(\beta+ba)}<c \|x-j\|_{2}\}.
\]  
Given $k$, define
\begin{align}
&\phi_j(k;x)=
-\frac{k}{\beta+ba}\log(1+\|x-j\|_{2}),\\
&\psi_j(k;x)=\begin{cases}
\phi_j(k;x) \text{ if $j\notin B_x$},\\
-\frac{k}{\beta+ba}\log(1 +\frac{|k|}{c(\beta+ba)}) \text{ otherwise},
\end{cases}\\
&\rho_j(k)= (\psi_j(k;m)-\psi_0(k;m))-(\psi_j(k;\ell)-\psi_0(k;\ell)).
\end{align}
Then
\[
\|\nabla\rho\|_{\infty}\leq \sqrt{\frac{b-1}{b}}
\]
and, for this choice of $\rho$ we have 
\begin{align}
&k[\rho_m-\rho_\ell]=\frac{2k^2}{\beta+ba}\left[\log(1+\|m-\ell\|_{2})-\log\left(1 +\frac{2 |k|\sqrt{b}}{\sqrt{b-1}(\beta+ba)}\right)\right],\\
&\sum_{(jj')\in \Lambda } [\beta+ba] (\rho_j-\rho_j')^2/2\leq \frac{1}{2} k[\rho_m-\rho_\ell],
\end{align}
so that 
\[
\left|\left \langle e^{ik(t_m-t_\ell)} \right \rangle_{\Lambda, a, J, \varepsilon_0} \right|\leq \exp\left(-\frac{k^2}{\beta+ba}\left[\log(1+\|m-\ell\|_{2})-\log\left(1 +\frac{2|k|\sqrt{b}}{\sqrt{b-1}(\beta+ba)}\right)\right]\right)
\]
for any $b>1$.  We take $b=2$ to obtain one of the bounds claimed in the statement of the theorem.

To obtain the second bound, we simply scale the approximate optimizer,
setting
\[
\rho_j=-\frac{k}{4|k|}\log(1+\|m-j\|_{2})/(1+\|\ell-j\|_{2})+\frac{k}{4|k|}\log(1+\|m\|_{2})/(1+\|\ell\|_{2}).
\]
Then $\|\nabla \rho\|_{\infty}\leq \frac{1}{2}$ and, reasoning as above to deal with the cosine terms, we have
\[
\left|\left \langle e^{ik(t_m-t_\ell)} \right \rangle_{\Lambda, a, J, \varepsilon_0} \right|\leq \exp\left(-\frac{[|k|-\beta-a]}{2}\left[\log(1+\|m-\ell\|_{2})\right].\right)
\]

\section{Bound on Laplace Transforms: Proof of Theorem \ref{T:Main2} }
\label{S:Laplace}
Recall now \Cref{L:Laplace}.  Properly speaking, the proof of this  lemma requires an \textit{a priori}  identity to the effect that the $2a^{th}$ moment of $e^{t_v}$ is $1$.  The latter identity was already proved for the equivalent of one point pinning in \cite{S, H02}.  However, we feel it is worth pointing out that if $a\in \N+1/2$ then the identity is \textit{algebraic} and therefore applies to any choice of pinning:
\begin{lemma}
\label{L:Moment}
For any finite weighted graph $(G, J)$ and any pinning $\eps$, 
\[
\langle e^{2a t_v} \rangle_{G, J, a, \eps}=1.
\]
\end{lemma}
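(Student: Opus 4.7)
\emph{Proof plan.} The plan is to prove $\langle e^{2at_v}\rangle_{G,J,a,\eps}=1$ as an algebraic identity by lifting to the full $\H^{2|2N}$ super $\sigma$-model, where $N=a+1/2\in\N$. By the duality recorded in Lemma \ref{L:SUSY1} and discussed in \Cref{S:Rep1,A:SUSY}, the measure $d\mu_{G,a,J,\eps}(t)$ is the $t$-marginal, in horospherical coordinates, of the full Gibbs measure on maps $V\to\H^{2|2N}$. In these coordinates the spin at a site $v$ decomposes as $u_v=(t_v,x_v,y_v,\xi_v^1,\bar\xi_v^1,\ldots,\xi_v^N,\bar\xi_v^N)$, with the null horospherical coordinate $u_v^+=u_v^0+u_v^1$ equal (up to sign convention) to $e^{t_v}$. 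Consequently $\langle e^{2at_v}\rangle_{G,J,a,\eps}$ coincides with the super expectation of the $2a$-th power of $u_v^+$ in the $\H^{2|2N}$ Gibbs state, and it is enough to prove this super moment is $1$.

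The central computation is a Jacobian cancellation on the target supermanifold. The $OSp$-invariant measure on $\H^{2|2N}$, pulled back to horospherical coordinates at $v$, has density proportional to $(u_v^+)^{-2a}$ times the flat Lebesgue--Berezin measure on the $(t_v,x_v,y_v,\xi_v,\bar\xi_v)$-cell; the power $-2a$ is dictated by the signed super-dimension $2-2N$ of $\H^{2|2N}$ together with the standard normalisation for the transverse fermions. Multiplying by $(u_v^+)^{2a}=e^{2at_v}$ therefore cancels this Jacobian exactly and reduces the $v$-integration to the flat Lebesgue--Berezin measure. I would then verify that the flat super-integral at $v$, weighted by the couplings $\sum_{j\sim v}J_{vj}(u_v\cdot u_j+1)$ and by the pinning $\eps_v(u_v^0-1)$, reconstructs precisely the $v$-factor of the original horospherical integrand $e^{-F-M^\eps}[\det D^\eps(t)]^a\,dt_v$ defining $Z_{G,J,a,\eps}$. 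Because $u_v\cdot u_j$ and $u_v^0$ are polynomial of bounded degree in the transverse variables at $v$, this is an algebraic Berezin computation that matches a cofactor expansion of $[\det D^\eps(t)]^a$ identical to the one implicit in the proof of Lemma \ref{L:SUSY1}.

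Since $N\in\N$, all of the above steps are polynomial in $J$ and in $\eps$, so the resulting identity $\int e^{2at_v}(\text{integrand})\prod_j dt_j=Z_{G,J,a,\eps}$ is an algebraic one. In particular it extends from the one-point pinning handled in \cite{S,H02} to an arbitrary pinning field $\eps$, as claimed.

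The main obstacle is the bookkeeping inside the Berezin integration at $v$: one must carefully track how the couplings between the Grassmann variables $\xi_v^\ell$ at $v$ and $\xi_j^\ell$ at its neighbours contribute to the flat super-integral at each order, and check that the net effect of the Jacobian cancellation together with the flat Berezin integration exactly reproduces the $v$-row/column contribution of $[\det D^\eps(t)]^a$ as given by the Matrix Tree Theorem. The finiteness and algebraic nature of this bookkeeping is guaranteed by the fact that $N$ is a positive integer, so the action is polynomial of finite degree in the $\xi_v^\ell$'s and the non-vanishing Berezin integrals can be enumerated explicitly. Once this matching is completed, the identity follows by integrand-wise comparison.
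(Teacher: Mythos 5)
Your starting observation is correct, and it is the same one the paper begins with: in the horospherical chart the super-integration form $D\mu_V$ carries the Jacobian $e^{-(2N-1)t_j}=e^{-2at_j}$ at each site, and $u_v^+=z_v+x_v=e^{t_v}$, so $e^{2at_v}D\mu_V$ indeed replaces the invariant measure at $v$ by the flat Lebesgue--Berezin form $dt_v\,ds_v\,\partial_{\bar\theta_v}\partial_{\theta_v}$. What you have not supplied, however, is any reason why the resulting integral $\int \bigl(\prod_{j\neq v}e^{-2at_j}dt_j ds_j\partial\partial\bigr)\bigl(dt_v ds_v\partial\partial\bigr)e^{-A_{J,\eps}}$ should equal $Z_{G,a,J,\eps}=\int D\mu_V e^{-A_{J,\eps}}$. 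That equality is precisely the statement $\langle e^{2at_v}\rangle=1$; rewriting one side as a flat integral at $v$ does not circumvent the need for a genuine argument. In particular, the phrase ``reconstructs precisely the $v$-factor of the original horospherical integrand $e^{-F-M^\eps}[\det D^\eps(t)]^a\,dt_v$'' is not well posed: the $t$-marginal density is produced only after integrating out the transverse variables at \emph{all} sites, and $[\det D^\eps(t)]^a$ is nonlocal, so there is no ``$v$-factor'' against which to match a single-site flat Berezin integral. Deferring the gap to ``algebraic bookkeeping'' is where the proof breaks; the content of the lemma lives exactly there.

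The paper's actual proof is different in kind and does the work explicitly. It expands $e^{2at_v}=(z_v+x_v)^{2a}$ binomially, uses the reflection symmetry $x\mapsto -x$ of $\H^{2|2N}$ to discard the terms with $2a-j$ odd, and then applies $a-j/2$ distinct SUSY Ward identities to trade each even power of $x_v$ for a product of fermion bilinears $\prod_\ell \xi_v^\ell\eta_v^\ell$, producing the $\Gamma$-function prefactor. After localizing to the purely Grassmann $\H^{0|2n}$ model, nilpotency and component exchangeability collapse the result to $\sum_L d_L\langle\prod_{\ell\leq L}[-\bar\psi_v^\ell\psi_v^\ell]\rangle^f$, and the binomial identity $\sum_{k=0}^L(-1)^k\binom{L}{k}=\delta_{L,0}$ forces $d_L=0$ for $L\geq 1$ while $d_0=1$. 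None of this is captured by the Jacobian cancellation alone. If you want to pursue your route, you would have to turn the ``flat vs.\ invariant measure at $v$'' comparison into a concrete Ward-identity argument; doing so leads you to essentially the same sequence of $Q_\pm^\ell$-manipulations the paper performs. As written, though, the central step of your proposal is missing.
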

\begin{proof}
Since $z_v+x_v=e^{t_v}$,
\[
\langle e^{2a t_v} \rangle_{J, \eps}=\sum_{j=0}^{2a} {2a \choose j}\langle z_v^j x_v^{2a-j} \rangle_{J, \eps}=\sum_{j=0, j \text{ odd}}^{2a} {2a \choose j} \langle z_v^j x_v^{2a-j} \rangle_{J, \eps}.
\]
The restriction to $j$ odd on the RHS is due to the fact that $\langle z_v^j x_v^{2a-j} \rangle_{J, \eps}=0$ if $2a-j$ is odd.
Using $a-j/2$ distinct supersymmetry transformations,
\begin{multline}
\langle z_v^j x_v^{2a-j} \rangle_{J, \eps}=\frac{\Gamma(2a-j+1)}{2^{a-j/2}\Gamma(a-j/2+1)} \left \langle z_v^j \prod_{\ell=1}^{a-j/2}[- \xi_v^{\ell}\eta_v^{\ell}]\right \rangle_{J, \eps}\\
=\frac{\Gamma(2a-j+1)}{2^{a-j/2}\Gamma(a-j/2+1)} \left \langle \sigma_v^j \prod_{\ell=1}^{a-j/2}[- \psibar_v^{\ell+(j-1)/2}\psi_v^{\ell} ]\right \rangle_{J, \eps}^f.
\end{multline}

Now we note that if we define
\[
\tilde{\sigma}_v(j)^2=1+2\sum_{\ell=a-j/2}^{a-1/2}\psibar_v^{\ell}\psi_v^{\ell}=:1-2\alpha_j
\]
the nilpotency of the Grassmann variables implies
\[
\sigma_v^j \prod_{\ell=1}^{a-j/2}[- \psibar_v^{\ell}\psi_v^{\ell} ]= \tilde{\sigma}_v(j)^j \prod_{\ell=1}^{a-j/2}[- \psibar_v^{\ell+(j-1)/2}\psi_v^{\ell+(j-1)/2} ].
\]
Expanding $ \sigma_v(j)^j$, we have
\[
 \sigma_v(j)^j =\sum_{k=0}^{(j-1)/2}\frac{(j/2)\cdots (j/2-k)(-2)^k\alpha_j^k}{k!} 
\] 
By exchangeability, after we insert this expression,
\[
\langle e^{2a t_v} \rangle_{J, \eps}=\sum_{L} d_L \left \langle\prod_{\ell=1}^{L}[-\psibar_v^{\ell}\psi_v^{\ell}] \right \rangle_{J, \eps}^f,
\]
where
\begin{align*}
d_{L}=&2^{-L}\sum_{\substack{k, j: \\ j\text{ odd}, j\leq 2a ,\\ k \leq (j-1)/2,\\ a-j/2+k=L}} \frac{(2a)!}{j! (a-j/2)!}\frac{(j/2)\cdots (j/2-k)(-1)^k2^{2k}}{k!}{(j-1)/2 \cdots [(j-1)/2-k]}\\
=&2^{-L}\sum_{k=0}^L \frac{(2a)!(-1)^k}{4(L-k)!k!(2a_0-L)!}\\
=&0.
\end{align*}
The claim now follows.
\end{proof}

\begin{proof}[Proof of Theorem \ref{T:Laplace}]
Given $\rho$ as in Lemma \ref{L:Laplace}, as long as $\tilde{\beta}_{jj'} \geq 1/2$,  Theorem \ref{T:Main2} implies
\[
\langle e^{s 2a [t_v-t_0]} \rangle_{\Lambda, \beta, a, \eps_0}   \leq  e^{-2as\gamma +(\beta+1)  q^2\gamma^2 \CE(\rho)}
\]
where 
\[
 \CE(\rho)=\sum_{(jj')}  (\rho_j-\rho_{j'})^2
\]
is the Dirichlet energy of $\rho$.  

Now choose $\rho_j$ to be the function which is $0$ at $0$, $1$ at $v$ and discrete harmonic on $\Lambda \backslash \{0,v\}$.  As is well know, if $\Lambda$ is large enough (for $y$ fixed but large), we can find $c_0$ so that
\[
\|\nabla \rho\|_{\infty}\leq  \frac{c_0}{\log |v|}
\]
\[
\CE(\rho)\leq \frac{c_0}{\log |v|}
\]
and
The constants $c_0$ and $c_1$ are independent of $\Lambda$ as $\Lambda \uparrow \Z^2$, provided the surface-to-volume ratio  $\frac{|\partial \Lambda|}{|\Lambda|}$ tends to $0$.
To finish, given $s$, we optimize over $\gamma$ subject to  the hypothesis of \Cref{L:Laplace} is satisfied.  Thus we set
\[
\gamma:=\frac{as \log |v|}{c_1(\beta+1)q^2}
\]
where $c_1=\max( \frac{2as c_0}{(\beta+1)}, 1)$.  With this choice $q^2 \gamma\|\nabla \rho\|_{\infty}\leq 1/2$ and
\[
-2as\gamma +(\beta+1)  q^2\gamma^2 \CE(\rho)\leq -as\gamma=-c_2 \log|v|. 
\]
where $c_2 \frac{(as)^2}{c_1( \beta+1)q^2}$.
The theorem thus holds.
\end{proof}

\appendix

\section{SUSY for $a\in \BN+1/2$}
\label{A:SUSY}
In this appendix we sketch the development of the $\H^{2|2N}$ nonlinear $\sigma$-model from which the measure $d\mu_{G, a, J, \varepsilon} (t)$, defined at \eqref{E:t-meas} is derived.  Let us first introduce the $\H^{2|2N}$ $\sigma$-models.  We consider the general case where the model is defined over a finite graph $G=(V, E)$.

For each $j \in V$ we
introduce a supervector $u_j \in \mathbb{R}^{3|2N}$,
\begin{equation}
    u_j = (z_j, {x}_j, y_j,  {\xi}_j, {\eta}_j).
\end{equation}
The variables ${\xi}_j, \eta_j\,$ are $N$-tuples of odd generators for a Grassmann algebra over $j$. Subscript indices denote locations in $G$ whereas  supercript indices will denote internal components of the Grassmann vectors:  $\xi^{(\ell)}_i$ is the $\ell$'th component of $\xi_i$.  We then define a Minkowski signature bilinear form on
$\mathbb{R}^{3|2N}$ by
\begin{equation}\label{eq:inprod}
    (u,u') = - z z' +x x'+y y' + \xi\cdot \eta' - \eta \cdot \xi'
    \end{equation}
where  
\begin{align}
\xi\cdot \eta'= \sum_{\ell=1}^N \xi^{\ell} {\eta'}^{\ell}.
\end{align}
To define the $\H^{2|2N}$ $\sigma$-model, the $u_j$'s are constrained  to satisfy the quadratic equation
\begin{equation}
  (u_j\, , u_j) = - 1.
\end{equation}
Note that the solutions to this equation form a two-sheeted hyperboloid in $\mathbb{R}^{3|2N}$. 
If, for each $j$, we restrict attention to spins lying in the sheet with positive square root, we arrive at
$\mathrm{H}^{2|2N}$.  This is a super manifold.  Geometrically, it is an infinitesimal extension by Grassmann variables of the $d=2$ hyperbolic plane, parametrized by $2$ real variables $x_j, y_j$ 
and $2N$ Grassmman variables $\xi_j, \eta_j$.

On the product space $(\mathrm{H}^{2|2N})^{|\Lambda|}$ we introduce a
`measure' (more accurately, a Berezin superintegration form)
\begin{equation}\label{eq:dmu}
    D\mu_\Lambda = \prod_{j\in V}  \textrm{d}x \textrm{d}y
   \prod_{j\in V, \ell} \partial_{\xi^{\ell}_j} \partial_{\eta^{\ell}_j} \circ (z_j)^{-1/2} \;.
\end{equation}
Note that $z_j^2=1+x_j^2+y_j^2 + 2\xi_j\cdot \eta_j$ - it is an element of the Grassmann algebra, not a real variable.
 The Gibbs state is then proportional to the Grassmann integration form $D\mu_\Lambda \,\mathrm{e}^{-A_{J, \varepsilon}}$ with
\beq
    A_{J,\varepsilon} = \frac{1}{2}\mathop\sum\limits_{(jj')\in E}
    J_{jj'}\,(u_j-u_j'\,,u_j-u_j') +  \mathop\sum\limits_{i
    \in V} \varepsilon_i(z_i-1) \label{eq:fullaction} \\
  \eeq
The coupling constants $J_{jj'} >0$ if $jj'$ are nearest neighbors in G and $J_{jj'} =
0$ otherwise. 

\subsection{Horospherical Coordinates and $d\mu_{G, a, J, \varepsilon} (t)$}
\label{S:Horo}
To connect the $\H^{2|2N}$ model to  $d\mu_{G, a, J, \varepsilon} (t)$ with $a=N-1/2$ we need to introduce a change of variables called horospherical (or Iwasawa) coordinates.  The next coordinates are $(t, s, \bar{\theta}, \theta)$ defined by
\begin{equation}\label{eq:horoc}
    x = \sinh t - \mathrm{e}^t \left(  {\textstyle{\frac{1}{2}}}
    {\bf{s}}\cdot {\bf{s}} + \sum_{i=1}^N\bar\theta^{\ell}\theta^{\ell} \right)\ , \quad y = \mathrm{e}^t s \text{ for $i\geq 2$}\ ,\quad
    \xi^{\ell} = \mathrm{e}^t \bar\theta^{\ell}\;, \quad \eta^{\ell} = \mathrm{e} ^t \theta^{\ell}\;,
\end{equation}
where $t\in \BR$ and $s\in \BR$ range over the real numbers. In the Poincar\'{e} disc and given a point $p$, $t$ represents the signed distance of the horocycle tangent to $1$ from  $0$ and containing $p$ whereas $s$ is represents the (normalized) location of $p$ on the horocycle). In these coordinates, the expression for the action becomes
\begin{equation}\label{eq:action}
    A_{J,\varepsilon} =  \mathop\sum\limits_{(ij)}
   J_{ij} (S_{ij}-1) +  \mathop\sum\nolimits_{k\in\Lambda}
    \varepsilon_k (z_k - 1) \;,
\end{equation}
where $(jj')$ are NN pairs and
\begin{align}
    S_{jj'} &=B_{jj'} + (\bar\theta_j - \bar\theta_{j'})\cdot (\theta_j -
    \theta_{j'})\, \mathrm{e}^{t_j + t_{j'}}\;, \label{eq:def-S}\\
    B_{jj'} &= \cosh (t_j-t_{j'}) + {\textstyle{\frac{1}{2}}}
    (s_j - s_j')\cdot (s_j-s_{j'})\, \mathrm{e}^{t_j + t_{j'}}\;, \label{eq:Bdef}\\
    z_{j} &= \cosh t_j +\left({\textstyle{\frac{1}{2}}}s_j\cdot s_j+
    \bar\theta_j \cdot  \theta_j\right)\mathrm{e}^{t_j} \label{eq:def-z}\;.
\end{align}
By applying Berezin's transformation
formula \cite{berezin} for changing variables in a (super-)integral,
one finds that
\begin{equation}
    D\mu_{V} = \prod\nolimits_{j\in\Lambda} 
    \mathrm{e}^{-t_j[2N-1]} \textrm{d}t_j \textrm{d} s_j \, \prod_{\ell} \partial_{\bar{\theta}^{\ell}_j}
    \partial_{\theta^{\ell}_j} \;
\end{equation}

For any function $f$ of the  field variables $\{ t_j\,, s_j\,,
\bar\theta_j\,,\theta_j\}_{j\in\Lambda}$ we now define its expectation as
\begin{equation}\label{expectation}
     \left\langle f  \right \rangle_{G, J,\varepsilon}  =
   \frac{ \int D\mu_\Lambda \, \mathrm{e}^{-A_{J,\varepsilon}} f}{\int D\mu_V \, \mathrm{e}^{-A_{J,\varepsilon}} } \;,
\end{equation}
whenever the numerator integral exists.  

There is no notational conflict with earlier definitions due to the following.  Observe the  beautiful feature that all coordinates, besides the $t$ coordinate, are Gaussian in this representation.
Thus if $f$ depends on the $t_j$'s alone, we  may easily integrate the other variables out.  We find ($a=N-1/2$)
\begin{align*}
&\int D\mu_\Lambda \, \mathrm{e}^{-A_{J,\varepsilon}}=Z_{G, a, J, \eps}
&\left\langle f \right \rangle_{G, J,\varepsilon}= \frac{\int f  d\mu_{\Lambda }^{J, \varepsilon} (t) }{Z_{G, a, J, \eps}}=\left\langle f \right \rangle_{G, a,  J,\varepsilon}.
\end{align*}

\subsection{SUSY Localization}
\Cref{S:MS} highlighted the importance of controlling partition function ratios
\[
R_{J', J}:=\frac{Z_{G, a, J', \eps}}{Z_{G, a, J, \eps}}
\]
for two choices of coupling constants $J, J'$.  For the $\sigma$-models taking values in $\H^{2|2}$, this ratio is $1$ by SUSY localization \cite{DSZ}.  On $\H^{2|2N}$ this is not true, but the localization argument is still extremely useful.

We now sketch the localization computation from \cite{DSZ} as it applies to the $\H^{2|2N}$ models (the reader may consult that paper for the missing details).   The computation is most easily explained by first considering the special case that $G$ is just a single vertex.  Let $H$ be the quadratic polynomial
\begin{displaymath}
    H = x^2 +y^2 +  2 \xi \cdot \eta.
\end{displaymath}
Let us isolate one pair of Grassmann components $(\xi^1, \eta^1)$.  With respect to this pair, let  $q$ be the distinguished first-order differential operator
defined by
\begin{equation}\label{eq:def-Q}
    q = x \partial_{\eta^{(1)} } - y \partial_{\xi^{(1)} }
    + \xi^{1}  \partial_{x}  + \eta^{1}  \partial_{y}  \;.
\end{equation}
Note that $q$ annihilates $H$,

In this notation, the \textit{a priori}
superintegration form is
\begin{displaymath}
    D\mu = \textrm{d}{ x} \textrm{d} y \, \prod_{\ell} \partial_{\xi^\ell} \partial_{\eta^\ell}
    \circ (1 + H)^{-1/2} \;.
\end{displaymath}
\begin{lemma}\label{lem:inv-int}
The Berezin superintegration form $D\mu$ is $q$-invariant, i.e.,
\begin{displaymath}
    \int D\mu\; q \cdot f = 0
\end{displaymath}
for any compactly supported smooth superfunction $f$.
\end{lemma}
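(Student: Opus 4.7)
The plan is to reduce the claim to the identity $qH=0$ combined with (i) ordinary integration by parts in the bosonic coordinates and (ii) nilpotency of the Grassmann derivatives. Since the weight $(1+H)^{-1/2}$ is a function of $H$ alone, $q$ annihilates it, and so can be moved through $q$ to isolate a much cleaner expression.

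First I would verify by direct calculation that $qH=0$. Applied to $H=x^2+y^2+2\xi\cdot\eta$, the two bosonic-derivative terms of $q$ contribute $\xi^{(1)}(2x)+\eta^{(1)}(2y)$, while the two Grassmann-derivative terms yield $x\,\partial_{\eta^{(1)}}(2\xi^{(1)}\eta^{(1)})-y\,\partial_{\xi^{(1)}}(2\xi^{(1)}\eta^{(1)})=-2x\xi^{(1)}-2y\eta^{(1)}$, and the two contributions cancel. Because $(1+H)^{-1/2}$ is an even element, the graded Leibniz rule gives $(1+H)^{-1/2}\cdot qf = q\!\left[(1+H)^{-1/2}f\right]$. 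Setting $g:=(1+H)^{-1/2}f$, which is again smooth and compactly supported in $(x,y)$, the lemma reduces to the assertion
\[
\int dx\,dy\,\prod_\ell \partial_{\xi^{(\ell)}}\partial_{\eta^{(\ell)}}\,(qg)=0.
\]

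To finish, split $q=q_b+q_f$ with $q_b:=\xi^{(1)}\partial_x+\eta^{(1)}\partial_y$ and $q_f:=x\,\partial_{\eta^{(1)}}-y\,\partial_{\xi^{(1)}}$. For $q_b$: since $\partial_x,\partial_y$ commute with every Grassmann operator, they pass outside the Berezin integration, after which compact support of $g$ in $(x,y)$ kills the resulting boundary terms. For $q_f$: the coefficients $x,-y$ are even and may be pulled outside the Grassmann integrals. What remains is an integrand of the form $\prod_\ell \partial_{\xi^{(\ell)}}\partial_{\eta^{(\ell)}}\,\partial_{\eta^{(1)}}g$; moving the trailing $\partial_{\eta^{(1)}}$ leftward past the $2(N-1)$ odd operators with $\ell\geq 2$ incurs a sign $(-1)^{2(N-1)}=+1$, so we collide with $\partial_{\eta^{(1)}}^2=0$. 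The $\partial_{\xi^{(1)}}$ piece is handled symmetrically.

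The only technical nuisance is bookkeeping the Grassmann signs when invoking the graded Leibniz rule and when commuting the odd derivative from $q$ past the paired Grassmann derivatives in $D\mu$; however, in each case one moves a single odd operator through an even number of odd operators, so no minus signs arise. This is the standard integration-by-parts/nilpotency argument for invariance of an equivariant superintegration form under an odd symmetry vector field.
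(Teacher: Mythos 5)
Your proposal is correct, and it is the standard argument. The paper itself does not supply a proof of Lemma \ref{lem:inv-int}, deferring it to Appendix C of \cite{DSZ}; the argument there proceeds exactly as you describe: one checks $qH=0$, uses that $q$ is an odd derivation to absorb the even factor $(1+H)^{-1/2}$ into the test function, and then kills the bosonic piece of $q$ by integration by parts with compact support and the fermionic piece by nilpotency of the Grassmann derivative (the repeated factor $\partial_{\eta^{(1)}}$, resp.\ $\partial_{\xi^{(1)}}$, annihilates the product). One small remark: the specific sign $(-1)^{2(N-1)}$ you compute is not actually needed, since a product of Grassmann derivatives containing any repeated factor vanishes regardless of sign; but your bookkeeping is consistent with the natural ordering of $\prod_\ell\partial_{\xi^{(\ell)}}\partial_{\eta^{(\ell)}}$, and the remainder of the argument is sound.
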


\begin{corollary}
Suppose $q\cdot f=0$.
Then for any $\tau>0$,
\[
 \int D\mu\;  f=  \int D\mu\; e^{-\tau H} f.
\]
\end{corollary}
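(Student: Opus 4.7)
The plan is the standard supersymmetric-localization argument: introduce the one-parameter family $I(\tau) := \int D\mu\, e^{-\tau H} f$ and show that $I$ is constant on $(0,\infty)$. Combined with continuity at $\tau \to 0^+$ (by dominated convergence, since $|e^{-\tau H} f| \leq |f|$ uniformly for $\tau \geq 0$), this yields $I(\tau) = I(0) = \int D\mu\, f$ as required.

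First I would differentiate under the integral sign to get $I'(\tau) = -\int D\mu\, H\, e^{-\tau H} f$ and then try to exhibit the integrand as $q$ of something so that Lemma \ref{lem:inv-int} applies. Concretely, I would look for an odd superfunction $V$ with $qV = H$. A direct calculation using the super-Leibniz rule together with the identities $q(x) = \xi^{1}$, $q(y) = \eta^{1}$, $q(\xi^{1}) = -y$, $q(\eta^{1}) = x$ shows that
$$V := x\eta^{1} - y\xi^{1} \quad\text{satisfies}\quad qV = x^2 + y^2 + 2\xi^{1}\eta^{1},$$
which matches $H$ in the single-pair case ($N=1$). For $N > 1$ the residual term $2\sum_{\ell\geq 2}\xi^{\ell}\eta^{\ell}$ of $H$ lies outside the image of $q$; however, it is $q$-closed and nilpotent, so the argument is adapted by factorizing the Berezin integral over $\ell \geq 2$ and running the deformation on the reduced $\ell=1$ factor (the extra pairs are treated as fixed Grassmann parameters in the reduced integrand).

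Granted such a $V$, a second application of the super-Leibniz rule gives
$$q(V\, e^{-\tau H} f) = (qV)\, e^{-\tau H} f - V\, q(e^{-\tau H} f) = H\, e^{-\tau H} f,$$
where the cancellation $q(e^{-\tau H} f) = -\tau q(H)\, e^{-\tau H} f + e^{-\tau H} q(f) = 0$ follows from $qH = 0$ (a one-line check from the definition of $q$) together with the standing hypothesis $qf = 0$. Lemma \ref{lem:inv-int} then yields $\int D\mu\, H e^{-\tau H} f = 0$, so $I'(\tau) \equiv 0$ on $(0,\infty)$, which is what I need.

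The main obstacle is that $V$ is polynomial in the bosonic variables $x, y$ and therefore not compactly supported, whereas Lemma \ref{lem:inv-int} is stated only for compactly supported $f$. For $\tau > 0$, however, the Gaussian factor $e^{-\tau(x^2+y^2)}$ inside $e^{-\tau H}$ supplies uniform decay, so one approximates $V e^{-\tau H} f$ by compactly supported smooth superfunctions (say by multiplying by a cutoff $\chi_R(x,y)$ with $\chi_R \uparrow 1$), applies Lemma \ref{lem:inv-int} to each approximant, and passes to the limit via dominated convergence; the commutator $[q,\chi_R]$ produces an error controlled by $\|\nabla \chi_R\|_\infty\, e^{-\tau H}$, which vanishes in the limit. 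This is precisely where the assumption $\tau > 0$ enters, and it is the only step requiring any real analytic care.
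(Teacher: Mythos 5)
Your construction $V = x\eta^1 - y\xi^1$ is correct, and one checks directly that $qV = x^2 + y^2 + 2\xi^1\eta^1$; but this is the single-pair reduction $H_1$, not the full $H = x^2 + y^2 + 2\xi\cdot\eta$ appearing in the statement, and for $N>1$ they differ. Your displayed identity $q(V\,e^{-\tau H}f) = H\,e^{-\tau H}f$ therefore fails at the last step: the correct right-hand side is $H_1\,e^{-\tau H}f$, and the missing piece $2\sum_{\ell\geq 2}\xi^\ell\eta^\ell\, e^{-\tau H}f$ is $q$-closed but not visibly $q$-exact. The ``factorize the Berezin integral over $\ell\geq 2$'' fix you sketch cannot cure this, because the superintegration form carries the factor $(1+H)^{-1/2}$, which couples all the Grassmann pairs; the Berezin integral does not factorize over $\ell$.

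In fact, the corollary as literally printed is false for $N\geq 2$. Take $N=2$ and $f\equiv 1$, so $qf=0$ trivially. Expanding $e^{-\tau H}(1+H)^{-1/2}$ in the nilpotent $\Xi := 2\xi^1\eta^1 + 2\xi^2\eta^2$ (using $\Xi^2 = 8\xi^1\eta^1\xi^2\eta^2$, $\Xi^3 = 0$) and then integrating by parts twice in $\rho := x^2+y^2$ yields $\int D\mu\; e^{-\tau H} = \pm\pi(2 + 4\tau)$, which grows linearly in $\tau$. The statement becomes true, and your argument works, once $H$ in the exponent is replaced by $H_1 = qV$; this is also the deformation actually needed for \Cref{lem:SUSYloc}, where one sends $\tau\to\infty$ in $e^{-\tau H_1}$ to pin $x=y=0$ and project onto the $\xi^1\eta^1$ coefficient. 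With that replacement, your outline (differentiate $I(\tau)$, rewrite $-I'(\tau)$ as $\int D\mu\; q(V e^{-\tau H_1}f)$, and handle the polynomial growth of $V$ by a cutoff absorbed into the Gaussian damping $e^{-\tau(x^2+y^2)}$) is the standard and correct localization argument.
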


Every superfunction $f$ can be expanded over the Grassmann variables $(\xi^{\ell}, \eta^{\ell})_{\ell \geq 2}$ as
\[
f=\sum_{I,J\subseteq \{2, \dotsc, N\}} f_{I,J}(x,y, \xi^{1}, \eta^{1}) \xi^{I}\eta^{J}
\]
where $\xi^I=\prod_{\ell\in I} x^\ell$ and simialrly for $\eta$.
Let $f_0$ be the superfunction obtained by setting $x = y=\xi^{(1)} = \eta^{(1)}=0$,
\[
f_0=\sum_{I,J\subseteq \{2, \dotsc, N\}} f_{I,J}(o) \xi^{I}\eta^{J},
\]
so that the coefficient functions are evaluated as $ f_{I,J}(o)$.  Thus superfunction is a constant coefficient polynomial in $(\xi^{\ell}, \eta^{\ell})_{\ell \geq 2}$.
Let $n=N-1$ and let $\bar{\psi}, {\psi}$ denote the $n$-tuples defined by the last $n$ components of $\xi, \eta$, so $\bar{\psi}^{\ell} =\xi^{\ell+1}, \psi^{\ell}=\eta^{\ell+1}$ for $\ell\in \{1, \dotsc, n\}$ and let $\sigma=(1 + 2\bar{\psi}\cdot {\psi})^{1/2}$.  The variables $\sigma, \psibar, \psi$ live in a degenerate hyperboloid $\H^{0|2n}\subset \BR^{1|2n}$.

\begin{displaymath}
    D\mu_0 =  \prod_{\ell} \partial_{\bar{\psi}^{\ell}} \partial_{{\psi}^{\ell}}
    \circ \sigma^{-1} \;.
\end{displaymath}

\begin{lemma}[Localization from $\H^{2|2N}$ to $\H^{0|2(N-1)}$]
\label{lem:SUSYloc}
Let $f$ be a smooth superfunction which satisfies
the invariance condition $Q f = 0$ and decreases sufficiently fast at
infinity in order for the integral $\int D\mu\, f$ to
exist. Then
\begin{displaymath}
    \int D\mu\, f = \int D\mu_0 f_0.
\end{displaymath}
\end{lemma}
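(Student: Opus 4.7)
The plan is to execute the standard supersymmetric localization argument: deform $\int D\mu\, f$ by a $q$-closed Gaussian factor $e^{-\tau \tilde H}$ whose $\tau\to\infty$ limit collapses the super-integration onto the fixed locus $\{x = y = \xi^1 = \eta^1 = 0\}$, on which $f$ reduces to $f_0$. First I would identify a $q$-exact potential. Setting $\tilde H := x^2 + y^2 + 2\xi^1\eta^1$ and $\alpha := x\eta^1 - y\xi^1$, a direct calculation using $q x = \xi^1$, $q y = \eta^1$, $q \xi^1 = -y$, $q \eta^1 = x$ and the graded Leibniz rule yields $q\alpha = \tilde H$ and hence $q\tilde H = 0$. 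The structural point is that $q$ acts trivially on the Grassmann coordinates $(\xi^\ell, \eta^\ell)_{\ell \geq 2}$, which therefore survive the localization unchanged.

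Next I would establish $\tau$-independence of $I(\tau) := \int D\mu\, e^{-\tau \tilde H} f$. Because $qf = 0$ and $q\tilde H = 0$, both $f$ and $e^{-\tau\tilde H}$ are $q$-closed, and graded Leibniz gives
\[
q\bigl(\alpha\, e^{-\tau \tilde H} f\bigr) \;=\; (q\alpha)\, e^{-\tau \tilde H} f \;=\; \tilde H\, e^{-\tau \tilde H} f.
\]
Integrating and invoking the $q$-invariance of $D\mu$ (Lemma \ref{lem:inv-int}) gives $I'(\tau) = -\int D\mu\, \tilde H\, e^{-\tau \tilde H} f = 0$, so $I(\tau) \equiv I(0) = \int D\mu\, f$ for every $\tau \geq 0$; the decay hypothesis on $f$ legitimises the integration by parts.

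Finally I would evaluate $\lim_{\tau \to \infty} I(\tau)$. Using $(\xi^1\eta^1)^2 = 0$, factor $e^{-\tau\tilde H} = e^{-\tau(x^2+y^2)}(1 - 2\tau\xi^1\eta^1)$, and split $f = f_0 + \delta f$ with $\delta f$ vanishing on the fixed locus. Writing $1 + H = A + 2\xi^1\eta^1$ with $A := 1 + x^2 + y^2 + 2\bar\psi\cdot\psi$, Taylor expansion yields $(1 + H)^{-1/2} = A^{-1/2} - A^{-3/2}\xi^1\eta^1$, whose restriction to $x = y = \xi^1 = \eta^1 = 0$ is $\sigma^{-1}$, precisely the prefactor in $D\mu_0$. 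The Grassmann contraction $\partial_{\xi^1}\partial_{\eta^1}\bigl[(1 - 2\tau\xi^1\eta^1)(A^{-1/2} - A^{-3/2}\xi^1\eta^1) f_0\bigr]$ produces a term linear in $\tau$, which is exactly cancelled by the bosonic Gaussian $\int dx\, dy\, e^{-\tau(x^2 + y^2)} = \pi/\tau$, leaving a finite limit. After tracking the Berezin sign of $\partial_{\xi^1}\partial_{\eta^1}(\xi^1\eta^1)$ and the normalization of $dx\,dy$ baked into $D\mu$, this limit equals $\int D\mu_0 f_0$. All remaining contributions (the $\delta f$ piece, subleading Taylor terms in $(x, y)$, and the $A^{-3/2}\xi^1\eta^1$ correction) carry an additional factor $1/\tau$ and vanish.

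The main obstacle will be this last step: interchanging the $\tau \to \infty$ limit with super-integration, which relies on the smoothness and decay of $f$ to dominate the error terms, and carefully tracking constants so that the bosonic Gaussian normalization $\pi/\tau$, the Grassmann contraction factor of $\tau$, and the Berezin signs combine to exactly unity. These are standard Laplace-method considerations in the superfield setting but require patient bookkeeping; by contrast the $q$-exactness in the first two steps is essentially automatic.
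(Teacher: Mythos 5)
Your proof is correct and executes the same Duistermaat--Heckman localization strategy that the paper sketches (and for which it refers to \cite{DSZ}): deform by a $q$-exact Gaussian weight, use $q$-invariance of $D\mu$ to show the deformation is $\tau$-independent, and then read off the $\tau\to\infty$ limit on the fixed locus $\{x=y=\xi^1=\eta^1=0\}$. Your identification of $\alpha=x\eta^1-y\xi^1$ with $q\alpha=\tilde H$, and the subsequent bookkeeping of the factor $(1-2\tau\xi^1\eta^1)$ against the bosonic $\pi/\tau$, is exactly the right content to fill in the paper's sketch.

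One point worth making explicit: you deliberately localize with the pair-isolated potential $\tilde H = x^2+y^2+2\xi^1\eta^1$, whereas the paper's $H$ as literally written is the full quadratic form $x^2+y^2+2\sum_{\ell=1}^N\xi^\ell\eta^\ell$. Your choice is the correct one, and for $N>1$ it is not merely a cosmetic detail. The full $H$ is $q$-closed but not $q$-exact (the summands $\xi^\ell\eta^\ell$ for $\ell\ge 2$ lie outside the image of $q$), so the identity $\int D\mu\, e^{-\tau H}f=\int D\mu\,f$ would actually fail; one can check directly, already for $N=2$ and $f=1$, that $\frac{d}{d\tau}\int D\mu\, e^{-\tau H}\cdot 1\big|_{\tau=0}=-\int D\mu\, H\neq 0$. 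The paper's notation is therefore slightly imprecise in its sketch (it is harmless in \cite{DSZ} itself, where $N=1$ and $H=\tilde H$). Using $\tilde H$ as you do keeps the $\ell\ge2$ Grassmann directions as spectators of the localization, which is what is needed for the limit to produce $\int D\mu_0\,f_0$ rather than a pointwise evaluation.
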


We now generalize this last lemma to ${\H^{2|2N}}^V$ for general finite graphs.
Let
$\n_j=(\sigma_j, \bar{\psi}_{j}, \psi_{ j'})$
and set
\beq
   S_{J,\varepsilon} = \frac{1}{2}\mathop\sum\limits_{(jj')\in E}
    J_{jj'}\,(\n_j-\n_{j'}\,,\n_j-\n_{j'}) +  \mathop\sum\limits_{i
    \in \Lambda} \varepsilon_i(\sigma_i-1) \label{eq:fullaction1} \\
  \eeq
where the bilinear form is
\[
(\n,  \n')=-\sigma\sigma' + \psibar \cdot \psi'+ \psibar'\cdot \psi.
\]
Let $\left \langle \cdot \right \rangle^f_{G, n, J, \epsilon}$ denote the corresponding Gibbs state, in particular
\[
Z^f_{G, n, J, \eps}:=\int \prod_{j\in V} D\mu_{0}(\psibar_j, \psi_j) e^{-S_{G, n, J, \epsilon}}
\]
Let $q_{V}=\sum_{j \in V} q_j$
  
\begin{lemma}\label{lem:SUSYloc1}
Let $N\in \BN$, $a=N+1/2$, $n=N-1$.  Then the $\H^{2|2N}$ $\sigma$-model is equivalent to a purely fermionic $\H^{0|2n}$ $\sigma$-model in the sense that for any function $F(\underline{z}, (\underline{\xi}^\ell)_{\ell=2}^N,(\underline{\eta}^\ell)_{\ell=2}^N)$ which decays sufficiently fast and such that $q_{V} \cdot F=0$
\beq
\label{E:Part}
\left \langle F(\underline{z}, (\underline{\xi}^\ell)_{\ell=2}^N,(\underline{\eta}^\ell)_{\ell=2}^N) \right \rangle_{G, N, J, \epsilon}= \left \langle F (\sigma, 0,\bar{\psi}, \psi) \right \rangle^f_{G, n, J, \epsilon}.
\eeq
In particular
\[
Z_{G, a, J, \epsilon}=Z^f_{G, n, J, \eps}\]
\end{lemma}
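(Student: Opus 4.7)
The plan is to prove Lemma \ref{lem:SUSYloc1} by a global supersymmetric localization argument, extending the single-vertex Lemma \ref{lem:SUSYloc} to the multi-vertex setting. The central point is that although the individual $q_j$'s do not annihilate the coupling terms $(u_j,u_{j'})$ of the action $A_{J,\varepsilon}$, the \emph{global} supercharge $Q:=q_V=\sum_{j\in V}q_j$ does. First I would verify $q_j z_i=0$ for all $i,j\in V$: the case $i\neq j$ is immediate, while $i=j$ reduces to the single-site computation $q_j H_j=0$ used in Lemma \ref{lem:inv-int}. Next, by a short bookkeeping exploiting the symmetry of \eqref{eq:inprod} and the commutativity of even variables across sites, one checks $(q_j+q_{j'})(u_j,u_{j'})=0$, and hence $Q A_{J,\varepsilon}=0$. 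Since $QF=0$ by hypothesis, the integrand $e^{-A_{J,\varepsilon}}F$ is $Q$-closed; the product measure $D\mu_V=\prod_j D\mu_j$ is $Q$-invariant in the integration-by-parts sense by applying Lemma \ref{lem:inv-int} vertex-by-vertex.

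Next I would introduce the localizing regulator $V:=\sum_{j\in V}(x_j\eta_j^1-y_j\xi_j^1)$ (an odd superfunction) together with its $Q$-exact partner $W:=QV=\sum_j(x_j^2+y_j^2+2\xi_j^1\eta_j^1)$; one readily checks $QW=Q^2V=0$. The deformed integral $I(s):=\int D\mu_V\,e^{-A_{J,\varepsilon}-sW}F$ is then $s$-independent, since
\begin{equation*}
\frac{dI}{ds}=-\int D\mu_V\,(QV)\,e^{-A_{J,\varepsilon}-sW}F=-\int D\mu_V\,Q\bigl[V\,e^{-A_{J,\varepsilon}-sW}F\bigr]=0,
\end{equation*}
using $Q(A_{J,\varepsilon}+sW)=QF=0$ in the second step and $Q$-invariance of $D\mu_V$ in the third. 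Hence $I(0)=\lim_{s\to\infty}I(s)$, and I would evaluate the limit by rescaling $(x_j,y_j,\xi_j^1,\eta_j^1)\mapsto s^{-1/2}(x_j,y_j,\xi_j^1,\eta_j^1)$ at each vertex: the bosonic Jacobian $s^{-1}$ exactly cancels the (contravariant) Grassmann Jacobian $s$, the regulator becomes $s$-independent, and the remaining factors $e^{-A_{J,\varepsilon}}F\prod_j z_j^{-1/2}$ localize to their values on the fixed locus $\mathcal{F}:=\{x_j=y_j=\xi_j^1=\eta_j^1=0\ \forall j\}$. Under the identification $\bar\psi_j^\ell:=\xi_j^{\ell+1},\ \psi_j^\ell:=\eta_j^{\ell+1}$, one has $z_j|_{\mathcal{F}}=\sigma_j$ and $(u_j,u_{j'})|_{\mathcal{F}}=-\sigma_j\sigma_{j'}+\bar\psi_j\cdot\psi_{j'}-\psi_j\cdot\bar\psi_{j'}$, so $A_{J,\varepsilon}|_{\mathcal{F}}=S_{J,\varepsilon}$, and $F|_{\mathcal{F}}=F(\underline\sigma,0,\bar\psi,\psi)$.

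The main obstacle is the bookkeeping in this $s\to\infty$ limit: one must verify that the decoupled finite Gaussian integral over $(x_j,y_j,\xi_j^1,\eta_j^1)$ combines with the surviving factor $(z_j)^{-1/2}|_{\mathcal{F}}=\sigma_j^{-1/2}$ to produce exactly the $\sigma_j^{-1}$ appearing in $D\mu_0$, with no leftover constants. This is a delicate sign-and-normalization computation (paralleling the $N=1$ analysis in Appendix C of \cite{DSZ}) that must be carried out consistently with the superintegration sign convention declared in Section \ref{S:Rep1}. Once this is verified, the identity $I(0)=\langle F(\underline\sigma,0,\bar\psi,\psi)\rangle^f_{G,n,J,\varepsilon}\cdot Z^f_{G,n,J,\varepsilon}$ follows, proving \eqref{E:Part}; specializing to $F\equiv 1$ yields $Z_{G,a,J,\varepsilon}=Z^f_{G,n,J,\varepsilon}$.
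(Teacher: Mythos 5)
Your strategy is exactly the route the paper intends: Lemma \ref{lem:SUSYloc1} is stated without an explicit proof after the single-vertex version, with a pointer to Appendix C of \cite{DSZ} for details, and that proof is the global supersymmetric localization you reconstruct. Your algebraic checks $q_j z_i=0$, $(q_j+q_{j'})(u_j,u_{j'})=0$ hence $QA_{J,\varepsilon}=0$, and $A_{J,\varepsilon}|_{\mathcal F}=S_{J,\varepsilon}$ are all correct, as is the $Q$-invariance of $D\mu_V=\prod_j D\mu_j$ by applying Lemma \ref{lem:inv-int} vertex-by-vertex. Using the $Q$-exact regulator $W=QV$ built from only the first fermionic pair is, if anything, more careful than the Corollary following Lemma \ref{lem:inv-int}, where $H$ involves all components and is $q$-closed but not $q$-exact; your $W$ is the right object to deform with.

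The one place where the argument as written would jam is the step you flag as the \emph{main obstacle}, and as phrased it cannot be completed. After $s\to\infty$, the only residual dependence of the one-site prefactor on $(\xi_j^\ell,\eta_j^\ell)_{\ell\ge 2}$ comes from $z_j^{-\alpha}|_{\mathcal F}=\sigma_j^{-\alpha}$; the localized Gaussian superintegral $\int dx_j\,dy_j\,\partial_{\xi_j^1}\partial_{\eta_j^1}e^{-W_j}$ is a pure scalar and cannot manufacture another power of $\sigma_j$. To reproduce $D\mu_0=\prod_\ell\partial_{\bar\psi^\ell}\partial_{\psi^\ell}\circ\sigma^{-1}$ one must therefore have $\alpha=1$. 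Equation \eqref{eq:dmu} reads $(z_j)^{-1/2}$, i.e.\ $\alpha=1/2$, and with this value no bookkeeping will upgrade the surviving $\sigma_j^{-1/2}$ to $\sigma_j^{-1}$. The \emph{a priori} form in \Cref{A:SUSY}, however, is $(1+H)^{-1/2}=z^{-1}$, i.e.\ $\alpha=1$, which matches \cite{DSZ} and is the correct normalization; with that choice $z_j^{-1}|_{\mathcal F}=\sigma_j^{-1}$ on the nose, the scalar Gaussian factor is absorbed into the overall normalization, and your argument closes. So the gap is not a missing idea but a single exponent: take the measure factor to be $z_j^{-1}$ as in the Appendix, and the verification you deferred becomes immediate.
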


\subsection{Residual SUSY After Localization; Connection with \cite{Sokal-etal}}
\label{S:R12n}
The action $S_{J, \eps}$ can be interpreted as a nonlinear $\sigma$-model with respect to the target space $\BR^{1|2n}$, with even coordinate $\sigma$ and odd generators $(\psibar^{\ell}, \psi^{\ell})_{\ell=1}^n$.  There are two natural quadratic forms we can put on this space:
\begin{eqnarray}
\label{E:hyp}-\sigma^2 +2\psibar\cdot \psi & \text{ Lorentizian},\\
\label{E:sph} \sigma^2+ 2\psibar\cdot \psi & \text{ Euclidean}.
 \end{eqnarray}
Constraining spins to be one when evaluated by the quadratic form \Cref{E:hyp} gives a nonlinear $\sigma$-model with target space one sheet of the degenerate hyperboloid $\sigma^2- 2\psibar\cdot \psi=1$ whereas if we use the quadratic form \Cref{E:sph},  the spins are interpreted as taking values in the 'upper hemisphere' of the degenerate sphere $\sigma^2+2\psibar\cdot \psi=1$. The latter situation was discussed in \cite{Sokal-etal}.  However, these two superspaces are the same via a change of fermionic coordinates.  As such, the discussion of Section $7$ in \cite{Sokal-etal} provides useful insight here.  
Let us adapt and generalize that discussion for the sake of completeness.  

We begin by introducing, at each vertex $i \in V$,
a superfield $v_i := (\sigma_i,\psi_i,\psibar_i)$
consisting of a single bosonic variable $\sigma_i$ 
and 2n Grassmann variables $(\psi^{\ell}_i, \psibar^{\ell}_i)_{\ell=1}^{n}$.
We equip $\BR^{1|2n}$ with the Lorentzian scalar product
\begin{equation}
 (  v_i, v_j)
   \; := \;
- \sigma_i \sigma_j +(\psibar_i \cdot \psi_j - \psi_i \cdot \psibar_j)
   \;,
 \label{eq.scalarprod}
\end{equation}

There are two types of symmetries which preserve this bilinear form.
The first type is a symplectic linear transformation mapping the Grassmann variables into themselves and fixing the bosonic component.  That is, if
$M$ 
is an invertible $2n$-by-$2n$ matrix preserving the blilinear form induced by 
\[
J:=\left(\begin{array}{cc}0 & I_n \\-I_n & 0\end{array}\right) 
\]  
and if $u_i =(\sigma_i, M\cdot [\psi_i, \bar{\psi}_i])$, then $(u_i, u_j)=(v_i, v_j)$.

The second type of transformation is supersymmetric, mixing $\sigma$ with the $\psi, \bar\psi$'s.  There are, in the general case, $n$ \textit{noncommuting} SUSY transformations transformations, parametrized by
fermionic (Grassmann-odd) global parameters $(\epsilon^{\ell},\bar{\epsilon}^{\ell})_{\ell=1}^n$:
\begin{eqnarray}
 \delta \sigma_i & = &
    (\bar{\epsilon}^{\ell} \psi_i^{\ell}  +  \psibar_i^{\ell} \epsilon^{\ell})  \\
 \delta \psi_i^{\ell} & = &
    \epsilon^{\ell} \,\sigma_i     \\
 \delta \psibar_i^{\ell} & = &
    \bar{\epsilon}^{\ell} \,\sigma_i
 \label{def.supersym}
\end{eqnarray}
To check that these transformations leave \cref{eq.scalarprod} invariant,
we compute
\begin{eqnarray}
   \delta( v_i\cdot v_j )
   & = &
   (\delta \sigma_i) \sigma_j + \sigma_i (\delta \sigma_j)
   +  \big[ (\delta \psibar_i)\cdot \psi_j + \psibar_i\cdot (\delta \psi_j)
                  -(\delta \psi_i) \cdot \psibar_j - \psi_i\cdot (\delta \psibar_j) \big]
      \nonumber \\ \\[-1mm]
   & = &
   -(\bar{\epsilon}^{\ell} \psi_i +  \psibar_i \epsilon^{\ell}) \sigma_j
   - (\bar{\epsilon}^{\ell} \psi_j +  \psibar_j \epsilon^{\ell}) \sigma_i
           \nonumber \\
   &  & \qquad
   +\, \big[ \bar{\epsilon}^{\ell} \psi^{\ell}_j \sigma_i
                          +  \psibar^{\ell}_i \epsilon^{\ell} \sigma_j
              -\epsilon^{\ell} \psibar^{\ell}_j \sigma_i - \psi^{\ell}_i \bar{\epsilon}^{\ell} \sigma_j \big]
      \\[2mm]
   & = &
   0  \;.
\end{eqnarray}

Now let us consider a $\sigma$-model in which the superfields $v_i$
are constrained to lie on the upper sheet of the hyperboloid $\BR^{1|2n}$,
$
\sigma_i^2  -2 \psibar_i \cdot \psi_i  \;=\; 1
$
This constraint is solved by writing
\begin{equation*}
\sigma_i  \;=\;  \pm (1 + 2 \psibar_i\cdot  \psi_i)^{1/2},
 \label{def.sigmai}
\end{equation*}
so that $\sigma_i$ is an even invertible element of the Grassmann algebra.
We take only the $+$ sign in \eqref{def.sigmai} and denote the corresponding unit vector by $\n_i$.

The $\ssp(2n)$ transformations continue to act as above
while the SUSY transformations act via 
\begin{eqnarray*}
 \delta \psi^{\ell}_i & = &
   \epsilon^{\ell} \sigma_i   \\
 \delta \psibar^{\ell}_i & = &
  {\bar{\epsilon}}^{\ell}  \sigma_i  
   \label{eq.supersym.psionly}
\end{eqnarray*}
These transformations leave invariant the scalar product $  \n_i\cdot \n_j$.
and the corresponding generators $Q^{\ell}_\pm$ are defined as
\begin{eqnarray*}
  Q^{\ell}_+  & = &
    \sum_{i\in V}  \sigma_i \partial^{\ell}_i 
        \\[1mm]
  Q^{\ell}_-  & = &  \sum_{i\in V}  \sigma_i       \bar{\partial}_i 
     \label{eq:def_Q_bis1}
\end{eqnarray*}
where 
$\partial^{\ell}_i = \partial_{\psi_i^{\ell}}$
and $\bar{\partial}^{\ell}_i = \partial_{\psibar_i^{\ell}}$
\begin{eqnarray}
\label{E:SUSY4}
&Q^{\ell}_{+}{ {\psi}}_i^\ell=Q^{\ell}_{-}{ {\bar{\psi}}}_i^\ell= \sigma_i, &Q^{\ell}_{\pm}[ \n_i\cdot \n_j]=0.
\end{eqnarray}
From this identity,  $Q^{\ell}_{\pm} S_{J, \eps}=Q^{\ell}_{\pm} e^{-S_{J, \eps}}=0$.  Note also  that \textit{a priori} integration form $\CD_0(\psibar, \psi)$ is invariant with respect to the $Q^{\ell}_{\pm}$'s:
\[
\int \CD_0(\psibar, \psi) Q^{\ell}_{\pm} F(\psi, \bar{\psi})=0.
\]
From these facts, \eqref{E:IBP} follows immediately.
\bibliography{Mc-B}

\begin{thebibliography}{BKNK87}

\bibitem[ACK14]{ACK}
O.~Angel, N.~Crawford, and G.~Kozma.
\newblock Localization for linearly edge reinforced random walks.
\newblock {\em Duke Mathematical Journal}, 163(5):889--921, 2014.

\bibitem[BCHS19]{H02}
R.~Bauerschmidt, N.~Crawford, T.~Helmuth, and A.~Swan.
\newblock Random spanning forests and hyperbolic symmetry.
\newblock {\em In preparation}, 2019.

\bibitem[BHS]{BHS}
R.~Bauerschmidt, T.~Helmuth, and A.~Swan.
\newblock {Dynkin} isomorphism and {Mermin--Wagner} theorems for hyperbolic
  sigma models and recurrence of the two-dimensional vertex-reinforced jump
  process.
\newblock Preprint, arXiv:1802.02077, to appear in Ann. Probab.

\bibitem[BKNK87]{berezin}
F.A. Berezin, A.A. Kirillov, J.~Niederle, and R.~Koteck{\`y}.
\newblock {\em Introduction to Superanalysis}.
\newblock Mathematical Physics and Applied Mathematics. Springer, 1987.

\bibitem[CD86]{CD}
D.~Coppersmith and P.~Diaconis.
\newblock Random walks with reinforcement.
\newblock {\em Unpublished manuscript}, 1986.

\bibitem[CSS07]{Sokal-etal}
S.~Caracciolo, A.D. Sokal, and A.~Sportiello.
\newblock Grassmann integral representation for spanning hyperforests.
\newblock {\em J. Phys. A}, 40(46):13799--13835, 2007.

\bibitem[DH82]{DH}
J.~J. Duistermaat and G.~J. Heckman.
\newblock On the variation in the cohomology of the symplectic form of the
  reduced phase space.
\newblock {\em Inventiones mathematicae}, 69(2):259--268, Jun 1982.

\bibitem[DS10]{DS}
M.~Disertori and T.~Spencer.
\newblock Anderson localization for a supersymmetric sigma model.
\newblock {\em Communications in Mathematical Physics}, 300(3):659--671, Dec
  2010.

\bibitem[DSZ10]{DSZ}
M.~Disertori, T.~Spencer, and M.R. Zirnbauer.
\newblock Quasi-diffusion in a 3{D} supersymmetric hyperbolic sigma model.
\newblock {\em Comm. Math. Phys.}, 300(2):435--486, 2010.

\bibitem[DV01]{BV}
Burgess Davis and S.E. Volkov.
\newblock {\em Continuous time vertex-reinforced jump processes}.
\newblock Report Eurandom. Technische Universiteit Eindhoven, 2001.

\bibitem[KP19]{KP}
Gady Kozma and Ron Peled.
\newblock Power-law decay of weights and recurrence of the two-dimensional
  vrjp, 2019.

\bibitem[MR09]{merkl2009}
Franz Merkl and Silke W.~W. Rolles.
\newblock Recurrence of edge-reinforced random walk on a two-dimensional graph.
\newblock {\em Ann. Probab.}, 37(5):1679--1714, 09 2009.

\bibitem[MS77]{MS}
Oliver~A. McBryan and Thomas Spencer.
\newblock On the decay of correlations in ${\rm so}(n)$-symmetric ferromagnets.
\newblock {\em Comm. Math. Phys.}, 53(3):299--302, 1977.

\bibitem[{Sab}19]{S}
Christophe {Sabot}.
\newblock {Polynomial localization of the 2D-vertex reinforced jump rocess}.
\newblock {\em arXiv e-prints}, page arXiv:1907.07949, Jul 2019.

\bibitem[ST15]{ST}
C.~Sabot and P.~Tarr\`es.
\newblock Edge-reinforced random walk, vertex-reinforced jump process and the
  supersymmetric hyperbolic sigma model.
\newblock {\em Journal of the European Mathematical Society (JEMS)},
  17(9):2353--2378, 2015.

\bibitem[STZ17]{STZ}
C.~Sabot, P.~Tarr\`es, and X.~Zeng.
\newblock The vertex reinforced jump process and a random {S}chr\"odinger
  operator on finite graphs.
\newblock {\em Ann. Probab.}, 45(6A):3967--3986, 2017.

\bibitem[SZ04]{SZ}
T.~Spencer and M.R. Zirnbauer.
\newblock Spontaneous symmetry breaking of a hyperbolic sigma model in three
  dimensions.
\newblock {\em Communications in Mathematical Physics}, 252:167--187, 01 2004.

\bibitem[SZ19]{SZ1}
Christophe Sabot and Xiaolin Zeng.
\newblock {A random Schrödinger operator associated with the Vertex Reinforced
  Jump Process on infinite graphs}.
\newblock {\em J. Am. Math. Soc.}, 32(2):311--349, 2019.

\bibitem[Ton]{Tong}
David Tong.
\newblock Lecture notes on statistical field theory.
\newblock Available at \url{http://www.damtp.cam.ac.uk/user/tong/sft.html};
  Accessed 01-09-19.

\bibitem[Zir91]{Z}
M.R. Zirnbauer.
\newblock Fourier analysis on a hyperbolic supermanifold with constant
  curvature.
\newblock {\em Comm. Math. Phys.}, 141(3):503--522, 1991.

\end{thebibliography}
\bibliographystyle{alpha}

\medskip{}

$~$\\
Department of Mathematics, \\
The Technion; Haifa, Israel.\\
E-mail: nickc@technion.ac.il\\

\end{document}